\title{Chains, Koch Chains, and Point Sets with many Triangulations}
\date{February 24, 2023}
\author{
  Daniel Rutschmann\footnote{Department of Computer Science, ETH Z\"urich, Switzerland. E-mail: \url{daru@dtu.dk}}
    \and
  Manuel Wettstein\footnote{Department of Computer Science, ETH Z\"urich, Switzerland. E-mail: \url{manuelwe@inf.ethz.ch}}
}
\newcommand{\bR}{\mathbb{R}}
\newcommand{\bZ}{\mathbb{Z}}
\newcommand{\ppa}[1]{\left(#1\right)}
\DeclarePairedDelimiter\bra{[}{]}
\newcommand{\eps}{\ensuremath\varepsilon}
\newcommand{\cOne}{E}
\newcommand{\flip}[1]{\overline{#1}}
\newcommand{\vex}[1]{C_{\mathsf{cvx}}(#1)}
\newcommand{\cave}[1]{C_{\mathsf{ccv}}(#1)}
\newcommand{\cDC}[1]{C_{\mathsf{dbl}}(#1)}
\newcommand{\cZZ}[1]{C_{\mathsf{zz}}(#1)}
\newcommand{\cDZZ}[1]{C_{\mathsf{dzz}}(#1)}
\newcommand{\koch}[1]{K_{#1}}
\newcommand{\SC}{\mathrm{SC}}
\newcommand{\Ctwin}{C_{\mathsf{twin}}}
\newcommand{\Cpoly}{C_{\mathsf{poly}}}
\newcommand{\Cgdc}{C_{\mathsf{gdc}}}
\newcommand{\tOmega}{\widetilde{\Omega}}
\newcommand{\tTheta}{\widetilde{\Theta}}
\newcommand{\DP}{\mathrm{DP}}
\DeclareMathOperator{\TR}{tr}
\DeclareMathOperator{\Tr}{tr}
\tikzset{
  Koch curve/.style = {
    l-system={
      rule set={F -> F++F----F++F},
      axiom=-F++F-,
      step=1pt,
      angle=30,
      #1
    }
  }
}
\definecolor{myred}{RGB}{231,76,60}
\definecolor{myblue}{RGB}{41,128,185}
\newtheorem{definition}{Definition}
\crefname{definition}{Definition}{Definitions}
\Crefname{definition}{Definition}{Definitions}
\newtheorem{proposition}[definition]{Proposition}
\crefname{proposition}{Proposition}{Propositions}
\Crefname{proposition}{Proposition}{Propositions}
\newtheorem{lemma}[definition]{Lemma}
\crefname{lemma}{Lemma}{Lemmas}
\Crefname{lemma}{Lemma}{Lemmas}
\newtheorem{theorem}[definition]{Theorem}
\crefname{theorem}{Theorem}{Theorems}
\Crefname{theorem}{Theorem}{Theorems}
\newtheorem{corollary}[definition]{Corollary}
\crefname{corollary}{Corollary}{Corollaries}
\Crefname{corollary}{Corollary}{Corollaries}
\newtheorem{example}[definition]{Example}
\crefname{example}{Example}{Examples}
\Crefname{example}{Example}{Examples}
\crefname{figure}{Figure}{Figures}
\Crefname{figure}{Figure}{Figures}
\crefname{table}{Table}{Tables}
\Crefname{table}{Table}{Tables}
\tikzset{
  point/.style = {
    circle,
    draw=black,
    fill=black,
    inner sep=0.35mm
  },
  edge/.style = {
    draw=black!75!white
  },
  unavoidable/.style = {
    draw=gray
  },
  chain/.style = {
    draw=black,
    thick
  },
  highlight/.style = {
    myblue,
    opacity=0.8
  }
}
\begin{document}

\maketitle

\begin{abstract}
  We introduce the abstract notion of a chain, which is a sequence of $n$ points in the plane, ordered by $x$-coordinates, so that the edge between any two consecutive points is unavoidable as far as triangulations are concerned.
  A general theory of the structural properties of chains is developed, alongside a general understanding of their number of triangulations.
  
  We also describe an intriguing new and concrete configuration, which we call the Koch chain due to its similarities to the Koch curve.
  A specific construction based on Koch chains is then shown to have $\Omega(9.08^n)$ triangulations.
  This is a significant improvement over the previous and long-standing lower bound of $\Omega(8.65^n)$ for the maximum number of triangulations of planar point sets.
\end{abstract}

\section{Introduction}

Let $P$ be a set of $n$ points in the Euclidean plane.
Throughout the paper, $P$ is assumed to be in \emph{general position}, which means for us that no two points have the same $x$-coordinate and that no three points are on a common line.
A \emph{geometric graph} on $P$ is a graph with vertex set $P$ combined with an embedding into the plane where edges are realized as straight-line segments between the corresponding endpoints.
It is called \emph{crossing-free} if the edges have no pairwise intersection, except possibly in a common endpoint.

\subparagraph{Triangulations.}

\def\tr{\mathrm{tr}}
\def\trmax{\tr_{\mathrm{max}}}
\def\trmin{\tr_{\mathrm{min}}}

Perhaps the most prominent and most studied family of crossing-free geometric graphs is the family of \emph{triangulations}, which may be defined simply as edge-maximal crossing-free geometric graphs on $P$.
It is easy to see that such a definition implies that the edges of any triangulation subdivide the convex hull of $P$ into triangular regions.

Let $\tr(P)$ denote the number of triangulations on a given point set $P$.
Trying to better understand this quantity is a fundamental question in combinatorial and computational geometry.
For very specific families of point sets, exact formulas or at least asymptotic estimates can be derived.
For example, it is well-known that if $P$ is a set of $n$ points in \emph{convex position}, then~$\tr(P) = C_{n-2}$, where $C_k = \frac{1}{k+1}\binom{2k}{k} = \Theta(k^{-3/2}4^k)$ is the $k$-th Catalan number~\cite{oeiscatalan}.
In general, however, this problem turns out to be much more elusive.

There is an elegant algorithm by Alvarez and Seidel~\cite{alvarez2013} from 2013 that computes $\tr(P)$ in exponential time $O(2^nn^2)$.
It has been surpassed by  Marx and Miltzow~\cite{marx2016} in 2016, who showed how to compute $\tr(P)$ in subexponential time $n^{O(\sqrt n)}$.
Moreover, Avis and Fukuda~\cite{avis1996} have shown already in 1996 how to enumerate the set of all triangulations on $P$ (i.e., to compute an explicit representation of each element) by using a general technique called reverse search in time~$\tr(P) \cdot p(n)$ for some polynomial $p$.
A particularly efficient implementation of that technique with~$p(n) = O(\log \log n)$ has been described by Bespamyatnikh~\cite{bespamyatnikh2002}.

Extensive research has also gone into extremal upper and lower bounds in terms of the number of points.
That is, if we define
\begin{align*}
  \trmax(n) = \max_{P \colon |P|=n} \tr(P), && \trmin(n) = \min_{P \colon |P|=n} \tr(P)
\end{align*}
to be the respectively largest and smallest numbers of triangulations attainable by a set~$P$ of~$n$ points in general position, then various authors have attempted to establish and improve upper and lower bounds on these quantities.

As far as the maximum is concerned, a seminal result by Ajtai, Chv\'atal, Newborn, and Szemer\'edi~\cite{ajtai1982} from 1982 shows that the number of triangulations---and, more generally, the number of all crossing-free geometric graphs---is at most $10^{13n}$.
A long series of successive improvements~\cite{smith1989,denny1997,seidel1998,santos2003,sharir2006} using a variety of different techniques has culminated in the currently best upper bound $\trmax(n) \leq 30^n$ due to Sharir and Sheffer~\cite{sharir2011}, which has remained uncontested for over a decade.
Coming from the other side, attempts have been made to construct point sets with a particularly large number of triangulations.
For some time, the \emph{double chain} by Garc\'ia, Noy, and Tejel~\cite{garcia2000} with approximately $\Theta(8^n)$ triangulations was conjectured to have the largest possible number of triangulations.
However, variants like the \emph{double zig-zag chain} by Aichholzer et al.~\cite{aichholzer2007} with $\Theta(8.48^n)$ triangulations and a specific instance of the \emph{generalized double zig-zag chain} by Dumitrescu, Schulz, Sheffer, and T\'oth~\cite{dumitrescu2013} with $\Omega(8.65^n)$ triangulations have since been discovered.
But also on this front, no further progress on the lower bound $\trmax(n) = \Omega(8.65^n)$ has been made for a decade.

The situation for the minimum is different insofar that the \emph{double circle} with $\Theta(3.47^n)$ triangulations, as analyzed by Hurtado and Noy~\cite{hurtado1997} in 1997, is still conjectured by many to have the smallest number of triangulations.
In other words, it is believed that the resulting upper bound $\trmin(n) = O(3.47^n)$ is best possible.
On the other hand, Aichholzer et al.~\cite{aichholzer2016} have shown that every point set has at least $\Omega(2.63^n)$ triangulations, thereby establishing the lower bound $\trmin(n) = \Omega(2.63^n)$.

The focus of this paper lies on $\trmax(n)$ and, more specifically, on establishing an improved lower bound on that quantity.
Ultimately, we show how to construct a new infinite family of point sets with $\Omega(9.08^n)$ triangulations, thereby proving $\trmax(n) = \Omega(9.08^n)$.

\subparagraph{General chains.}

It has occurred to us that almost all families of point sets whose numbers of triangulations have been analyzed over the years have a very special structure, which we are trying to capture in the following definition.

\begin{definition}
  \label{def:chain}
  A \emph{chain} $C$ is a sequence of points $p_0,\dots,p_n$ sorted by increasing $x$-coordinates, such that the edge $p_{i-1}p_i$ is \emph{unavoidable} (i.e., contained in every triangulation of $C$) for each~$i=1,\dots,n$.
  These specific unavoidable edges are also referred to as \emph{chain edges}.
\end{definition}

In contrast to previous convention, we use the parameter $n$ to denote the number of chain edges and not the number of points in $C$, which is $n+1$.
Also note that \cref{def:chain} implies that the edge $p_0p_n$ is an edge of the convex hull and, hence, also unavoidable.
Indeed, since all chain edges are unavoidable, the edge $p_0p_n$ cannot possibly cross any of them and, hence, is either above or below all the points in between.
Therefore, a chain always admits a spanning cycle of unavoidable edges with at least one hull edge.
We prove in \cref{sec:structure} that this is also a characterization of chains in terms of \emph{order types} (see \cite{goodman1983} for a definition).

\begin{theorem}
  \label{thm:characterization}
  For every point set that admits a spanning cycle of unavoidable edges including at least one convex hull edge, there exists a chain with the same order type.
\end{theorem}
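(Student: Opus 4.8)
The plan is to start from a point set $Q$ that carries a spanning cycle $v_0, v_1, \dots, v_{m-1}, v_0$ of unavoidable edges with $v_0 v_{m-1}$ (say) a convex hull edge, and to massage $Q$ into a genuine chain without changing its order type. The key geometric observation to exploit is that, since $v_0 v_{m-1}$ is a hull edge and the remaining cycle edges are unavoidable (hence pairwise non-crossing and non-crossing with $v_0 v_{m-1}$), the cycle bounds a region containing no point of $Q$ in its interior; relabel so this region lies above the edge $v_0 v_{m-1}$, and all other points of $Q$ lie on or below the polygonal path $v_0, v_1, \dots, v_{m-1}$. I would first argue that the cyclic sequence of vertices along this path can be taken as the candidate chain $p_0 = v_0, p_1, \dots, p_n = v_{m-1}$, so that what must be engineered is exactly the $x$-monotonicity: the points should appear in left-to-right order along the path.

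The main work, and the main obstacle, is to realize $x$-monotonicity of the path while preserving the order type. I would do this with an explicit affine-type deformation argument: because order type is invariant under any orientation-preserving affine map and, more flexibly, under any continuous motion of the points that never makes three points collinear, it suffices to exhibit \emph{one} point set of the given order type whose unavoidable spanning path is $x$-monotone. Here the natural move is to send the whole configuration through a projective transformation that pushes the hull edge $v_0 v_{m-1}$ to the line at infinity — or, staying affine, to apply a shear and a strong vertical compression so that the path $v_0 \to v_1 \to \dots \to v_{m-1}$, which already ``wraps around'' the empty region monotonically in the cyclic sense, becomes monotone in the $x$-coordinate. The point I expect to need care with is that a single shear need not suffice when the path doubles back in $x$; the clean fix is the projective one, composed with a generic affine perturbation to restore general position, and then invoking that projective maps preserve order type when no point crosses the line sent to infinity (which holds here because that line can be chosen just outside the convex hull, on the far side of the hull edge $v_0 v_{m-1}$).

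It then remains to check that the resulting $x$-monotone point set is actually a \emph{chain} in the sense of \cref{def:chain}, i.e. that each edge $p_{i-1} p_i$ is still unavoidable. This is the easy direction: unavoidability is a property of the order type (an edge is unavoidable iff it is not crossed by any edge of any triangulation, and whether two segments cross depends only on the orientations of the relevant triples), so it transfers along all the order-type-preserving transformations used above, and the cycle edges of $Q$ — which are exactly the $p_{i-1}p_i$ together with the hull edge $p_0 p_n$ — remain unavoidable. I would also note in passing that the hull edge $p_0 p_n$ is automatically unavoidable and need not be assumed, which is why \cref{def:chain} only requires the consecutive edges; conversely, as observed in the text right before the theorem, every chain does carry such a spanning cycle, so the two statements are genuinely equivalent and the theorem is an ``if'' half of an iff.

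If one prefers to avoid projective transformations entirely, an alternative is a direct combinatorial construction: read off the ``above/below the path'' side of each non-cycle point and its position relative to each path vertex, then place points greedily from left to right with coordinates chosen (rationally, by induction) to match all required orientations — essentially a realizability argument for this restricted class of order types. I expect the projective route to be shorter and cleaner, so I would present that as the main argument and mention the greedy construction only if a self-contained, elementary proof is desired.
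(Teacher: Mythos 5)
The central step of your argument --- that a single projective transformation sending a line $\ell$ chosen just beyond the base hull edge to infinity (or a shear plus vertical compression) turns the unavoidable spanning path into an $x$-monotone one --- is false in general, and this is precisely where the whole difficulty of \cref{thm:characterization} sits. After such a map the vertical lines of the image pull back to the pencil of lines through some point $z\in\ell$, so the image path is $x$-monotone if and only if the original path $p_0p_1\dots p_n$ is angularly monotone about $z$, i.e.\ all triangles $zp_{i-1}p_i$ have the same orientation; moreover $\ell$ (hence $z$) must avoid the convex hull, or the order type is destroyed. Now take a configuration as in \cref{fig:characterization}, with two narrow pockets, one on the left and one on the right, and with the pocket sides realized as two (anti)parallel horizontal path edges at different heights pointing in opposite $x$-directions and two parallel near-vertical path edges, one directed downward in the left pocket and one upward in the right pocket. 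Requiring all triangles $zp_{i-1}p_i$ to be counter-clockwise forces $z$ above one horizontal edge's line and below the other (empty), and requiring them all clockwise forces $z$ on opposite, outward-facing sides of the two parallel vertical edges (again empty); the affine/shear variant is the special case $z$ at infinity, already impossible because the path edge directions positively span the plane, so the path is monotone in no direction. So no admissible projective or affine map of the given realization yields a chain; your appeal to the path ``wrapping around the empty region monotonically'' conflates cyclic monotonicity along a (possibly winding, pocketed) polygon boundary with angular monotonicity about an exterior point. The deeper reason a single global map cannot work is that the chain realization must, in effect, \emph{flip} each pocket (this is why the paper's construction uses $\flip{C_j}$), i.e.\ reverse orientations of sub-configurations independently, whereas a projective map acts on all orientations consistently. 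The paper instead proceeds by induction: decompose along the convex hull into pockets, prove via \cref{prop:geometricconsiderations} that all orientations are determined by the inside/outside-of-$\SC$ data, realize each pocket recursively as a chain, flip it, and glue with convex sums, whose realizability is itself nontrivial and is established in \cref{prop:convexsum} by a shear-and-flatten continuity argument.

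Your fallback ``greedy left-to-right placement'' gestures at the right kind of inductive realizability argument, but as stated it is not a proof: showing that the required orientations are simultaneously realizable by points in $x$-order is exactly the content of the theorem, and your sketch neither identifies why the orientation data is determined by the combinatorics of the cycle (the role of \cref{prop:geometricconsiderations}) nor how to handle nested pockets and the attendant flips. Two smaller slips: since the cycle is \emph{spanning}, there are no ``other points of $Q$'' off the path and no ``non-cycle points'' whose side you could read off --- what must be recorded is on which side of the cycle each non-cycle \emph{edge} lies; your observation that unavoidability is an order-type invariant is correct and indeed used by the paper, but it only handles the easy transfer step, not the construction of the $x$-monotone realization.
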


All of the mentioned families of point sets (convex position, double chain, and so on) are usually neither defined nor depicted in a way that makes it clear that they may be thought of as chains as in \cref{def:chain}.
Still, the premise of \cref{thm:characterization} is easily verified for all of them except for the double circle, which may however be transformed into a chain by removing one of the inner points.
\Cref{fig:examples} shows realizations of some such point sets as chains.

\begin{figure}
  \newcommand{\totalY}{2.5/2}
  \centering
  \begin{tikzpicture}
    \def\xshift{90}
    \def\labelshift{1.8}
    \begin{scope}[xshift=0*\xshift]
      \newcommand{\rad}{\totalY}
      \newcommand{\ang}{360/10}
      \node[point] (p0) at (-0*\ang:\rad) {};
      \node[point] (p1) at (-0.5*\ang:\rad) {};
      \node[point] (p2) at (-1*\ang:\rad) {};
      \node[point] (p3) at (-1.5*\ang:\rad) {};
      \node[point] (p4) at (-2*\ang:\rad) {};
      \node[point] (p5) at (-2.5*\ang:\rad) {};
      \node[point] (p6) at (-3*\ang:\rad) {};
      \node[point] (p7) at (-3.5*\ang:\rad) {};
      \node[point] (p8) at (-4*\ang:\rad) {};
      \node[point] (p9) at (-4.5*\ang:\rad) {};
      \node[point] (p10) at (-5*\ang:\rad) {};
      \begin{scope}[on background layer]
        \draw[unavoidable] (p10) -- (p0);
        \draw[chain] (p0) -- (p1) -- (p2) -- (p3) -- (p4) -- (p5)
          -- (p6) -- (p7) -- (p8) -- (p9) -- (p10);
      \end{scope}
      \node at (0,-\labelshift) {\footnotesize convex position};
    \end{scope}%
    \begin{scope}[xshift=1*\xshift]
      \newcommand{\rad}{\totalY}
      \newcommand{\radd}{0.9*\rad}
      \newcommand{\ang}{360/10}
      \node[point] (p0) at (-0*\ang:\rad) {};
      \node[point] (p1) at (-1*\ang:\rad) {};
      \node[point] (p2) at (-2*\ang:\rad) {};
      \node[point] (p3) at (-3*\ang:\rad) {};
      \node[point] (p4) at (-4*\ang:\rad) {};
      \node[point] (p5) at (-5*\ang:\rad) {};
      \node[point] (q0) at (-0.5*\ang:\radd) {};
      \node[point] (q1) at (-1.5*\ang:\radd) {};
      \node[point] (q2) at (-2.5*\ang:\radd) {};
      \node[point] (q3) at (-3.5*\ang:\radd) {};
      \node[point] (q4) at (-4.5*\ang:\radd) {};
      \begin{scope}[on background layer]
        \draw[unavoidable] (p0) -- (p1) -- (p2) -- (p3) -- (p4) -- (p5) -- (p0);
        \draw[chain] (p0) -- (q0) -- (p1) -- (q1) -- (p2) -- (q2)
          -- (p3) -- (q3) -- (p4) -- (q4) -- (p5);
      \end{scope}
      \node at (0,-\labelshift) {\footnotesize \phantom{p}``double circle''\phantom{p}};
    \end{scope}%
    \begin{scope}[xshift=2*\xshift]
      \newcommand{\totalX}{3.5*\totalY}
      \newcommand{\rad}{\totalY}
      \newcommand{\ang}{360/6/4}
      \node[point] (p0) at (-\rad,0) {};
      \node[point] (p4) at (-\rad/2,-\rad) {};
      \node[point] (p1) at ($(p0)!0.25!(p4)+0.05*(2,1)$) {};
      \node[point] (p2) at ($(p0)!0.5!(p4)+0.07*(2,1)$) {};
      \node[point] (p3) at ($(p0)!0.75!(p4)+0.05*(2,1)$) {};
      \node[point] (q0) at (\rad,0) {};
      \node[point] (q4) at (\rad/2,-\rad) {};
      \node[point] (q1) at ($(q0)!0.25!(q4)+0.05*(-2,1)$) {};
      \node[point] (q2) at ($(q0)!0.5!(q4)+0.07*(-2,1)$) {};
      \node[point] (q3) at ($(q0)!0.75!(q4)+0.05*(-2,1)$) {};
      \begin{scope}[on background layer]
        \draw[unavoidable] (p4) -- (p0) -- (q0) -- (q4);
        \draw[chain] (p0) -- (p1) -- (p2) -- (p3) -- (p4) -- (q4) -- (q3) -- (q2) -- (q1) -- (q0);
      \end{scope}
      \node at (0,-\labelshift) {\footnotesize \phantom{p}double chain\phantom{p}};
    \end{scope}%
    \begin{scope}[xshift=3*\xshift]
      \newcommand{\totalX}{3.5*\totalY}
      \newcommand{\rad}{\totalY}
      \newcommand{\ang}{360/6/4}
      \node[point] (p0) at (-\rad,0) {};
      \node[point] (p4) at (-\rad/2,-\rad) {};
      \node[point] (p1) at ($(p0)!0.25!(p4)+0.025*(2,1)$) {};
      \node[point] (p2) at ($(p0)!0.5!(p4)+0.07*(2,1)$) {};
      \node[point] (p3) at ($(p0)!0.75!(p4)+0.025*(2,1)$) {};
      \node[point] (q0) at (\rad,0) {};
      \node[point] (q4) at (\rad/2,-\rad) {};
      \node[point] (q1) at ($(q0)!0.25!(q4)+0.025*(-2,1)$) {};
      \node[point] (q2) at ($(q0)!0.5!(q4)+0.07*(-2,1)$) {};
      \node[point] (q3) at ($(q0)!0.75!(q4)+0.025*(-2,1)$) {};
      \begin{scope}[on background layer]
        \draw[unavoidable] (p4) -- (p0) -- (q0) -- (q4);
        \draw[chain] (p0) -- (p1) -- (p2) -- (p3) -- (p4) -- (q4) -- (q3) -- (q2) -- (q1) -- (q0);
      \end{scope}
      \node at (0,-\labelshift) {\footnotesize double zig-zag chain};
    \end{scope}%
  \end{tikzpicture}
  \caption{
    Some classic point sets realized as chains. For the double circle, we need to remove one of the inner points. Chain edges are displayed black and bold, other unavoidable hull edges in gray.
  }
  \label{fig:examples}
\end{figure}
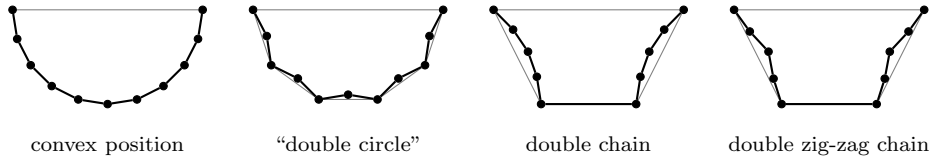

Imagine walking along the chain edges and recording at each point the information whether we make a left turn or a right turn.
It can be noted already now that such information---while crucial---is not enough to really capture all of the relevant combinatorial structure of a given chain.
Instead, the right way of looking at it turns out to be recording for each edge $p_ip_j$ whether it lies above or below all the chain edges in between.

The simple linear structure inherent to chains allows us to develop a combinatorial theory in \cref{sec:structure}, by which every chain admits a unique construction starting from the primitive chain with only one edge.
Two types of sum operations, so-called convex and concave sums, are used to ``concatenate'' chains, while an inversion allows to ``flip'' a chain on its head.
This yields for every chain a concise and unique description as an algebraic formula.
Based on this, we will also see that the number of combinatorially different chains with $n$ chain edges is equal to~$S_{n-1}$, where~$S_k = \sum_{i=0}^k \frac{1}{i+1} \binom{k}{i} \binom{k+i}{i} = \Theta(k^{-3/2}(3+\sqrt{8})^k)$ is the $k$-th large Schr\"oder number~\cite{oeisbigschroeder}.

\subparagraph{Triangulations of chains.}

The unavoidable chain edges separate every triangulation cleanly into an \emph{upper triangulation} of the region above the chain edges and into a \emph{lower triangulation} of the region below.
Therefore, both upper and lower triangulations may be analyzed separately.
It also follows that there is no further complication due to inner vertices as one would typically encounter them in general point sets.

There is a simple cubic time dynamic programming algorithm for counting triangulations of simple polygons~\cite{epstein1994}.
Such an algorithm can of course also be used to count both the upper and lower triangulations of a given chain.
However, we show in \cref{sec:triangulations} that the additional structure of chains allows us to devise an improved quadratic time algorithm, which plays a crucial role in the derivation of our main result.

\begin{theorem}
  \label{thm:algorithm}
  Given a chain $C$ with $n$ chain edges as input, it is possible to compute the number $\tr(C)$ by using only $O(n^2)$ integer additions and multiplications.
\end{theorem}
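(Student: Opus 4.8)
The plan is to reduce the count to much simpler regions, write down the obvious cubic-time dynamic program, and then compress it using the structural results of \cref{sec:structure}.

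Since every chain edge is unavoidable, no edge of a triangulation of $C$ crosses a chain edge, so each triangulation of $C$ restricts independently to a triangulation of the region $R^{+}$ lying above the chain edges and one of the region $R^{-}$ lying below; hence $\tr(C)$ is the product of the numbers of triangulations of $R^{+}$ and of $R^{-}$. The inversion of \cref{sec:structure} carries $R^{-}$ of $C$ to the ``$R^{+}$'' of the inverted chain, so it suffices to count triangulations of the region above the chain edges of an arbitrary chain. That region is bounded below by the $x$-monotone chain path and above by the upper hull of $C$; cutting at each interior upper-hull vertex splits it into \emph{funnels}, a funnel being the region bounded above by a single edge $p_{i}p_{j}$ (which, lying on the upper hull, is above all chain vertices in between) and below by the chain path from $p_{i}$ to $p_{j}$. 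The number of triangulations of $R^{+}$ is then the product of those of its funnels.

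In a triangulation of a funnel with top edge $p_{i}p_{j}$, that edge lies in a unique triangle $p_{i}p_{k}p_{j}$; by the above/below bookkeeping this triangle is admissible precisely when both $p_{i}p_{k}$ and $p_{k}p_{j}$ lie above the chain vertices they span, and deleting it leaves the two smaller funnels on $p_{i}\dots p_{k}$ and on $p_{k}\dots p_{j}$. Writing $U(i,j)$ for the number of triangulations of the funnel on $p_{i}\dots p_{j}$, this yields
\[
  U(i,j)=\sum_{k\ \text{admissible}}U(i,k)\,U(k,j),\qquad U(i,i{+}1)=1 .
\]
Evaluating this table directly uses $O(n^{3})$ additions and multiplications, since a single funnel may have $\Theta(n)$ admissible apexes, as already for points in convex position.

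To bring this down to $O(n^{2})$ I would use the unique decomposition of \cref{sec:structure}: the chain path of a funnel is itself a chain whose distinguished hull edge lies on top, hence — up to an inversion, which affects neither the count nor the running time — a convex or concave sum of smaller chains $C_{1},\dots,C_{r}$ meeting at $O(r)$ cut points, and the funnel splits accordingly into the sub-funnels of $C_{1},\dots,C_{r}$ together with a region of bounded combinatorial complexity carried by the cut points. Processing the decomposition tree bottom-up, a node with $r$ children is handled by assembling the children's already-computed data with a small dynamic program over the cut points (in the spirit of counting triangulations of a convex polygon, but with the sub-funnel data threaded through); a handshake argument over the tree then bounds the total cost by $O(n^{2})$, and the decomposition itself can be produced within the same budget by \cref{sec:structure}. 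I expect this assembly step to be the main obstacle: one must invoke the structural description of how the above/below labelling is forced by the algebraic formula in order to decide which arcs between cut points and which triangles spanning several children can occur, and — since the cut edges need not all be present in every triangulation of the funnel, as already happens for the double chain — one cannot simply multiply the children's counts but must propagate from each child a short generating polynomial (recording, say, triangulations refined by the degree of a distinguished boundary vertex), so that straddling triangles are counted correctly while the per-node work remains controlled.
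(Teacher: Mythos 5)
Your overall strategy---split $\tr(C)=U(C)\cdot L(C)$, reduce lower to upper triangulations by flipping, decompose the chain via the unique formula of \cref{theo:chainformulaexists}, and propagate per-subchain generating data that is merged bottom-up---is exactly the paper's strategy. But the theorem's entire content lives in the step you explicitly defer as ``the main obstacle,'' and as it stands there is a genuine gap there. What must be shown is (i) that the information a subchain $C_i$ has to pass upward can be summarized by a vector of linear size, and (ii) that two such vectors can be combined across a convex sum $C_1\vee C_2$ using only $O(n_1 n_2)$ additions and multiplications. The paper's answer to (i) is the polynomial $T_{C}(x)=\sum_k t_k(C)x^k$ counting \emph{partial} upper triangulations by their number of triangles, equivalently by the number of edges still visible from above; this is the right statistic because, in a convex sum, a partial upper triangulation of $C_1\vee C_2$ decomposes into partial upper triangulations of $C_1$ and $C_2$ together with a triangulation of the convex region between their visible edges, which only depends on \emph{how many} visible edges each side has (\cref{prop:convexsumpartialtriangulation}, \cref{lem:convexsum}). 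Your proposed surrogate, ``triangulations refined by the degree of a distinguished boundary vertex,'' is not shown to be adequate and is unlikely to be: the straddling triangles attach to the whole visible envelope of each side, and a single vertex degree does not determine it, whereas the visible-edge count does.

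Even granting the right state, (ii) does not follow from the natural merge formula: \cref{lem:convexsum} expresses $T_{C_1\vee C_2}$ as a double sum of $n_1n_2$ terms, each a polynomial of degree up to $n_1+n_2$, so evaluating it directly is cubic, not $O(n_1n_2)$---precisely the bottleneck your sketch leaves open. The paper removes it with a separate dynamic program (\cref{prop:convexsumdp}): classify partial upper triangulations of $C_1\vee C_2$ by their unique visible ``bridge'' edge joining the two sides, and count them via the recurrence $\DP[l][r]=t_{n_1-l-1}(C_1)\cdot t_{n_2-r-1}(C_2)+\DP[l+1][r]+\DP[l][r+1]$ obtained by removing the bridge; this fills an $n_1\times n_2$ table with $O(n_1n_2)$ operations, and concave sums are a plain polynomial product by \cref{lem:concavesum}. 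Only with this merge in hand does your ``handshake'' accounting over the formula tree give the claimed $O(n^2)$ total. Your funnel decomposition and the cubic recurrence $U(i,j)=\sum_k U(i,k)U(k,j)$ are a correct but unnecessary detour; the missing bridge-DP (or an equivalent $O(n_1n_2)$ merge) is what actually proves the theorem.
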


\subparagraph{The Koch chain.}

There is a particular type of chain that has caught our interest and which, to the best of our knowledge, has not been described in the literature before.
We call it the \emph{Koch chain} due to its striking similarity in appearance and definition to the famous Koch curve.
More precise definitions follow later in \cref{def:kochchain}; for now, suppose $K_0$ is a primitive chain with just one chain edge, and let the $s$-th iteration $K_s$ of the Koch chain be defined by concatenating two flipped and sufficiently flattened copies of $K_{s-1}$ in such a way that the chain edges at the point of concatenation form a left turn, see \cref{fig:kochchain}.

\begin{figure}
  \centering
  \begin{tikzpicture}[scale=2]
    \def\coords{
      \def\eps{0.04}
      \def\epss{0.01}
      \def\epsss{0.0025}
      \coordinate (c0)  at (-1,0);
      \coordinate (c16) at (1,0);
      \coordinate (c8)  at (0,-1);
      \coordinate (c4)  at ($(c0)!0.5!(c8)+(\eps,\eps)$);
      \coordinate (c12) at ($(c8)!0.5!(c16)+(-\eps,\eps)$);
      \coordinate (c2)  at ($(c0)!0.5!(c4)+(-\epss,-\epss)$);
      \coordinate (c6)  at ($(c4)!0.5!(c8)+(-\epss,-\epss)$);
      \coordinate (c10) at ($(c8)!0.5!(c12)+(\epss,-\epss)$);
      \coordinate (c14) at ($(c12)!0.5!(c16)+(\epss,-\epss)$);
      \coordinate (c1)  at ($(c0)!0.5!(c2)+(\epsss,\epsss)$);
      \coordinate (c3)  at ($(c2)!0.5!(c4)+(\epsss,\epsss)$);
      \coordinate (c5)  at ($(c4)!0.5!(c6)+(\epsss,\epsss)$);
      \coordinate (c7)  at ($(c6)!0.5!(c8)+(\epsss,\epsss)$);
      \coordinate (c9)  at ($(c8)!0.5!(c10)+(-\epsss,\epsss)$);
      \coordinate (c11) at ($(c10)!0.5!(c12)+(-\epsss,\epsss)$);
      \coordinate (c13) at ($(c12)!0.5!(c14)+(-\epsss,\epsss)$);
      \coordinate (c15) at ($(c14)!0.5!(c16)+(-\epsss,\epsss)$);
    }
    \def\coordss{
      \pgfmathsetmacro\ninthsqrtthree{sqrt(3)/9}
      \coordinate (c0) at (-1,0);
      \coordinate (c16) at (1,0);
      \coordinate (c8) at (0,-3*\ninthsqrtthree);
      \coordinate (c4) at (-1/3,0);
      \coordinate (c12) at (1/3,0);
      \coordinate (c2) at (-2/3,-1*\ninthsqrtthree);
      \coordinate (c6) at (-1/3,-2*\ninthsqrtthree);
      \coordinate (c10) at (1/3,-2*\ninthsqrtthree);
      \coordinate (c14) at (2/3,-1*\ninthsqrtthree);
      \coordinate (c1) at (-7/9,0);
      \coordinate (c3) at (-5/9,0);
      \coordinate (c5) at (-2/9,-1*\ninthsqrtthree);
      \coordinate (c7) at (-1/9,-2*\ninthsqrtthree);
      \coordinate (c9) at (1/9,-2*\ninthsqrtthree);
      \coordinate (c11) at (2/9,-1*\ninthsqrtthree);
      \coordinate (c13) at (5/9,0);
      \coordinate (c15) at (7/9,0);
    }
    \def\xshift{90}
    \def\yshift{40}
    \tikzset{every node/.style={label distance=0.5cm}}
    \begin{scope}[yshift=-0.25*\yshift]
      \begin{scope}[xshift=0*\xshift]
        \coords
        \node[point,label=left:$K_0$] (p0) at (c0) {};
        \node[point] (p1) at (c16) {};
        \begin{scope}[on background layer]
          \draw[chain] (p0) -- (p1);
        \end{scope}
      \end{scope}
      \begin{scope}[xshift=1*\xshift]
        \coordss
        \node[label=right:$\phantom{K_0}$] at (c16) {};
        \draw[chain] (c0) -- (c16);
      \end{scope}
    \end{scope}
    \begin{scope}[yshift=-1*\yshift]
      \begin{scope}[xshift=0*\xshift]
        \coords
        \node[point,label=left:$K_1$] (p0) at (c0) {};
        \node[point] (p1) at (c8) {};
        \node[point] (p2) at (c16) {};
        \begin{scope}[on background layer]
          \draw[unavoidable] (p0) -- (p2);
          \draw[chain] (p0) -- (p1) -- (p2);
        \end{scope}
      \end{scope}
      \begin{scope}[xshift=1*\xshift]
        \coordss
        \draw[chain] (c0) -- (c8) -- (c16);
      \end{scope}
    \end{scope}
    \begin{scope}[yshift=-2*\yshift]
      \begin{scope}[xshift=0*\xshift]
        \coords
        \node[point,label=left:$K_2$] (p0) at (c0) {};
        \node[point] (p1) at (c4) {};
        \node[point] (p2) at (c8) {};
        \node[point] (p3) at (c12) {};
        \node[point] (p4) at (c16) {};
        \begin{scope}[on background layer]
          \draw[unavoidable] (p0) -- (p4);
          \draw[unavoidable] (p0) -- (p2) -- (p4);
          \draw[chain] (p0) -- (p1) -- (p2) -- (p3) -- (p4);
        \end{scope}
      \end{scope}
      \begin{scope}[xshift=1*\xshift]
        \coordss
        \draw[chain] (c0) -- (c4) -- (c8) -- (c12) -- (c16);
      \end{scope}
    \end{scope}
    \begin{scope}[yshift=-3*\yshift]
      \begin{scope}[xshift=0*\xshift]
        \coords
        \node[point,label=left:$K_3$] (p0) at (c0) {};
        \node[point] (p1) at (c2) {};
        \node[point] (p2) at (c4) {};
        \node[point] (p3) at (c6) {};
        \node[point] (p4) at (c8) {};
        \node[point] (p5) at (c10) {};
        \node[point] (p6) at (c12) {};
        \node[point] (p7) at (c14) {};
        \node[point] (p8) at (c16) {};
        \begin{scope}[on background layer]
          \draw[unavoidable] (p0) -- (p8);
          \draw[unavoidable] (p0) -- (p4) -- (p8);
          \draw[chain] (p0) -- (p1) -- (p2) -- (p3) -- (p4)
            -- (p5) -- (p6) -- (p7) -- (p8);
        \end{scope}
      \end{scope}
      \begin{scope}[xshift=1*\xshift]
        \coordss
        \draw[chain] (c0) -- (c2) -- (c4) -- (c6) --(c8)
          -- (c10) -- (c12) -- (c14) -- (c16);
      \end{scope}
    \end{scope}
    \begin{scope}[yshift=-4*\yshift]
      \begin{scope}[xshift=0*\xshift]
        \coords
        \node[point,label=left:$K_4$] (p0) at (c0) {};
        \node[point] (p1) at (c1) {};
        \node[point] (p2) at (c2) {};
        \node[point] (p3) at (c3) {};
        \node[point] (p4) at (c4) {};
        \node[point] (p5) at (c5) {};
        \node[point] (p6) at (c6) {};
        \node[point] (p7) at (c7) {};
        \node[point] (p8) at (c8) {};
        \node[point] (p9) at (c9) {};
        \node[point] (p10) at (c10) {};
        \node[point] (p11) at (c11) {};
        \node[point] (p12) at (c12) {};
        \node[point] (p13) at (c13) {};
        \node[point] (p14) at (c14) {};
        \node[point] (p15) at (c15) {};
        \node[point] (p16) at (c16) {};
        \begin{scope}[on background layer]
          \draw[unavoidable] (p0) -- (p16);
          \draw[unavoidable] (p0) -- (p8) -- (p16);
          \draw[chain] (p0) -- (p1) -- (p2) -- (p3) -- (p4)
            -- (p5) -- (p6) -- (p7) -- (p8) -- (p9) -- (p10)
            -- (p11) -- (p12) -- (p13) -- (p14) -- (p15) -- (p16);
        \end{scope}
      \end{scope}
      \begin{scope}[xshift=1*\xshift]
        \coordss
        \draw[chain] (c0) -- (c1) -- (c2) -- (c3) -- (c4)
          -- (c5) -- (c6) -- (c7) -- (c8) -- (c9) -- (c10)
          -- (c11) -- (c12) -- (c13) -- (c14) -- (c15) -- (c16);
      \end{scope}
    \end{scope}
  \end{tikzpicture}
  \caption{
    The Koch chains $K_s$ for $s=0,\dots,4$ and the corresponding Koch curves.
    Even though it is hard to recognize for larger values of $s$, the changes in direction along the Koch curve on the right are reflected one-to-one by the chain edges of the corresponding Koch chain on the left.
  }
  \label{fig:kochchain}
\end{figure}
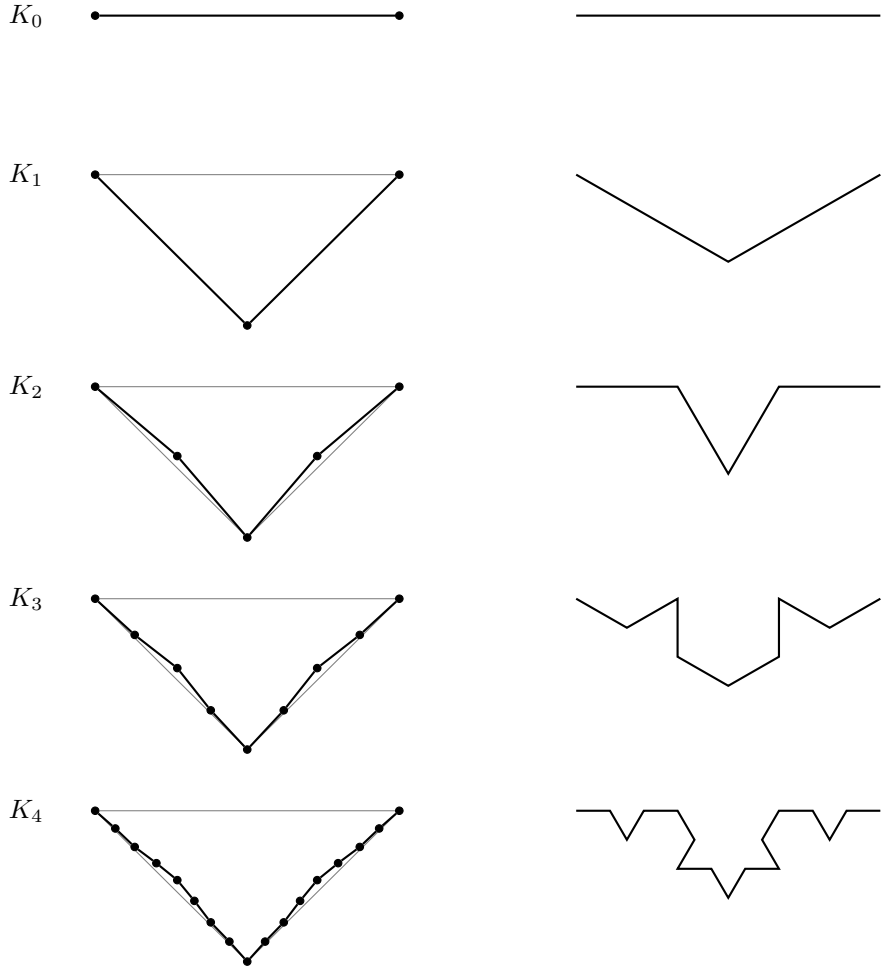

Koch chains turn out to have a particularly large number of triangulations, much more so than any other known point sets.
For values of $s$ up to $21$, we have computed the corresponding numbers of upper and lower triangulations, as well as complete triangulations, by using our algorithm from \cref{thm:algorithm}.
The results are displayed in \cref{tab:kochchainsnumbers}.

\begin{table}
  \centering
  \begin{tabular}{@{}rrlllrrlll@{}}
    \toprule
      $s$ & $n$ & $\sqrt[n]{U}$ & $\sqrt[n]{L}$ & $\sqrt[n]{T}$ & $s$ & $n$ & $\sqrt[n]{U}$ & $\sqrt[n]{L}$ & $\sqrt[n]{T}$ \\
    \cmidrule(r){1-5}\cmidrule(l){6-10}
       0 &       1 & 1.0      & 1.0      & 1.0      & 11 &    2048 & 3.121029 & 2.858643 & 8.921910 \\
       1 &       2 & 1.0      & 1.0      & 1.0      & 12 &    4096 & 2.882177 & 3.121029 & 8.995359 \\
       2 &       4 & 1.189207 & 1.0      & 1.189207 & 13 &    8192 & 3.134955 & 2.882177 & 9.035496 \\
       3 &       8 & 1.791279 & 1.189207 & 2.130201 & 14 &   16384 & 2.889213 & 3.134955 & 9.057554 \\
       4 &      16 & 2.035453 & 1.791279 & 3.646065 & 15 &   32768 & 3.139056 & 2.889213 & 9.069406 \\
       5 &      32 & 2.558954 & 2.035453 & 5.208633 & 16 &   65536 & 2.891256 & 3.139056 & 9.075820 \\
       6 &      64 & 2.564646 & 2.558954 & 6.562814 & 17 &  131072 & 3.140236 & 2.891256 & 9.079229 \\
       7 &     128 & 2.935733 & 2.564646 & 7.529118 & 18 &  262144 & 2.891838 & 3.140236 & 9.081055 \\
       8 &     256 & 2.783587 & 2.935733 & 8.171870 & 19 &  524288 & 3.140569 & 2.891838 & 9.082019 \\
       9 &     512 & 3.075469 & 2.783587 & 8.560839 & 20 & 1048576 & 2.892001 & 3.140569 & 9.082530 \\
      10 &    1024 & 2.858643 & 3.075469 & 8.791671 & 21 & 2097152 & 3.140662 & 2.892001 & 9.082799 \\
    \bottomrule
  \end{tabular}
  \caption{
    The computed numbers of triangulations of the Koch chain $K_s$ for $s=0,\dots,21$, where each entry is rounded down to six decimal places.
    As usual, $n$ is the number of chain edges, whereas $U$, $L$, and $T$ stand, respectively, for the numbers of upper, lower, and complete triangulations of the corresponding Koch chain.
  }
  \label{tab:kochchainsnumbers}
\end{table}

In consequence, concatenating copies of $K_{21}$ side by side results in an infinite family of point sets with at least $9.082799^n$ triangulations.
This alone already establishes the improved lower bound of $\trmax(n) = \Omega(9.082799^n)$.

\subparagraph{Poly chains and Twin chains.}

We were unable to nail down the exact asymptotic behavior of the number of triangulations of $K_s$ as $s$ approaches infinity.
It is also unclear how much is lost due to undercounting by not considering any interactions between the different copies of $K_{21}$ in our simple lower bound construction from just before.

To remedy the situation somewhat, in \cref{sec:polytwin} we define and analyze more carefully the~\emph{poly\nobreakdash-$C_0$ chain} (a specific way of concatenating $N$ copies of a fixed chain $C_0$) and the~\emph{twin\nobreakdash-$C_0$ chain} (a construction where two copies of a poly-$C_0$ chain face each other, similar in spirit to the classic double chain).
Based on these considerations, we get a slightly improved lower bound construction, and we are also able to conclude that the numbers in the last column of \cref{tab:kochchainsnumbers} will not grow significantly larger than what we already have.

\begin{theorem}
  \label{thm:lowerbound}
  Let $C_N$ be the twin-$K_{21}$ chain that is made up of $2N$ copies of $K_{21}$ for a total of $n=2N\cdot 2^{21}+1$ chain edges.
  Then,
  \begin{align*}
    \lim_{N\rightarrow\infty} \sqrt[n]{\tr(C_N)} = 9.083095\dots,
  && \text{and hence }\; \trmax(n) = \Omega(9.083095^n).
  \end{align*}
\end{theorem}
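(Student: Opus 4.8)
The plan is to reduce the statement to a transfer-matrix (linear-recurrence) computation and then extract the growth rate as the dominant eigenvalue of an explicit finite matrix. First I would set $K = K_{21}$ and recall from the structural theory of \cref{sec:structure} that a single copy of $K$ has a well-defined \emph{upper} and \emph{lower} boundary, and that concatenating chains side by side (as in the poly-$C_0$ and twin-$C_0$ constructions of \cref{sec:polytwin}) glues these boundaries in a controlled way: the only freedom in a triangulation of $C_N$ beyond the choices internal to each copy of $K$ is how the triangulations of adjacent copies interact across the shared chain vertex and across the long hull edges above and below. I would make precise the finite ``interface state'' that records exactly this boundary interaction — for instance, which vertex of the current copy's upper (resp.\ lower) hull is currently ``visible'' to the next copy. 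Since $K$ is fixed, there are only finitely many such states, say $m$ of them.

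Next I would write down the transfer matrix $M \in \mathbb{Z}_{\ge 0}^{m\times m}$ whose entry $M_{ab}$ counts the number of ways to triangulate one copy of $K$ together with the gluing region, given incoming interface state $a$ and outgoing interface state $b$; the entries of $M$ are computed once and for all by the $O(n^2)$ algorithm of \cref{thm:algorithm} applied to $K_{21}$ and its small neighborhoods. Then $\tr(C_N)$ is, up to fixed boundary vectors $u,v$ for the two extreme ends, exactly $u^{\top} M^{\,2N} v$ (the exponent $2N$ because the twin chain is made of $2N$ copies, with the two poly-chains meeting in a double-chain-like fashion that contributes an additional fixed factor absorbed into $u$ and $v$). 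Standard Perron--Frobenius theory for nonnegative matrices then gives
\begin{align*}
  \lim_{N\to\infty} \sqrt[2N]{\tr(C_N)} = \lambda_{\max}(M),
\end{align*}
provided $M$ is primitive, which I would check directly from its sparsity pattern. Since $n = 2N\cdot 2^{21} + 1$, we have $\sqrt[n]{\tr(C_N)} \to \lambda_{\max}(M)^{1/2^{21}}$, and I would verify numerically that this equals $9.083095\ldots$; computing $\lambda_{\max}(M)$ to the required precision is a finite eigenvalue computation on an explicit integer matrix. The lower-bound consequence $\trmax(n) = \Omega(9.083095^n)$ then follows by noting that for every sufficiently large $n$ we can take $N = \lfloor (n-1)/2^{22}\rfloor$, realize $C_N$ on roughly $n$ points, and pad with a few extra points (in convex position, contributing a bounded multiplicative loss) to hit $n$ exactly; the exponential rate is unaffected.

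The main obstacle I expect is the bookkeeping in defining the interface state correctly and proving that $\tr(C_N)$ genuinely factors as $u^{\top}M^{2N}v$ with no cross-terms unaccounted for — in particular, verifying that a triangulation of the whole twin chain decomposes \emph{uniquely} into (internal triangulation of each $K$-copy) $\times$ (a sequence of compatible interface states), so that no overcounting or undercounting occurs. This requires a careful argument that the chain edges and the two spanning hull edges of each $K_{21}$ copy really do separate the triangulation into independent regions whose only coupling is through the finite interface data; the unavoidability of chain edges (\cref{def:chain}) does most of the work, but the interaction across the long upper and lower hull edges of consecutive copies, and at the single ``turn'' vertex where two poly-chains meet in the twin construction, needs to be handled explicitly. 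Once that combinatorial decomposition is nailed down, the rest is routine linear algebra plus a numerical eigenvalue evaluation.
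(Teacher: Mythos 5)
Your reduction hinges on the claim that a triangulation of $C_N$ factors exactly as $u^{\top}M^{2N}v$ for a finite transfer matrix whose states record only the interaction between \emph{adjacent} copies of $K_{21}$. That claim is false, and it is precisely where the difficulty of the theorem lives. The chain edges do split every triangulation into an upper and a lower part, but inside the upper region of a poly chain $\flip{C_0}\vee\dots\vee\flip{C_0}$ there are no unavoidable edges separating consecutive copies: a triangle of the upper triangulation may have its vertices in copies that are arbitrarily far apart (for instance, the fan from the leftmost point uses edges from $p_0$ to every other vertex), and in the twin construction the region above the middle vertex couples the two poly chains globally, exactly as in the classic double chain, whose middle factor is a binomial coefficient $\binom{2a}{a}$ rather than a product of local contributions. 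Consequently the number of triangulation edges crossing the vertical line between two consecutive copies is unbounded in $N$, and no finite interface alphabet can make the count multiplicative. One can also see the impossibility from the shape of the answer: any sequence of the form $u^{\top}M^{2N}v$ with $M\geq 0$ grows like $c\,N^{j}\lambda^{2N}$ with $j$ a nonnegative integer, whereas already the simplest poly chain $\Cpoly(\cOne,N)=\vex{N}$ (convex position) has $U(\vex{N})$ equal to a Catalan number, which is $\Theta(4^{N}N^{-3/2})$; such inverse-polynomial corrections cannot come from a rational (transfer-matrix) generating function, and the same phenomenon persists for the twin chains. So the ``main obstacle'' you flag cannot be nailed down in the form you propose.

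The paper's route is genuinely different: the interface with the region above is parameterized not by a bounded state but by the number of visible edges remaining on each copy (\cref{prop:convexsumpartialtriangulation}, \cref{lem:convexsum}), which turns the count into a multinomial sum over generalized double circles; matching upper and lower bounds for those are obtained from the generating-function identity $\phi_{C_1\vee C_2}(x)=\frac{1-x}{1-2x}\,\phi_{C_1}(x)\,\phi_{C_2}(x)$ (\cref{lem:tgfvee}), the asymptotics of Asinowski--Krattenthaler--Mansour for $U(\Cpoly(\vex{k},M))$, and an entropy/Lagrange-multiplier optimization (\cref{lemm:entropymaximum}). This yields the closed forms for $\lambda$ and $\tau$ in \cref{theo:polytwinasym}, which are then evaluated using the coefficients of $T_{K_{21}}(x)$ and $T_{\flip{K_{21}}}(x)$ computed by the algorithm of \cref{thm:algorithm}, giving $\lambda\tau=9.083095\ldots$; only this last numerical step resembles the ``finite computation'' you envision. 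Your concluding padding argument for extending the lower bound to all $n$ is fine, but it does not rescue the central decomposition.
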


\begin{theorem}
  \label{thm:kochlimit}
  For the Koch chain $K_s$ with $n = 2^s$ chain edges, we have
  \begin{equation*}
    9.082799 \leq \lim_{s\rightarrow\infty}\sqrt[n]{\tr(K_s)} \leq 9.083139.
  \end{equation*}
\end{theorem}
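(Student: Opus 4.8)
The lower bound $9.082799 \le \lim_{s\to\infty}\sqrt[n]{\tr(K_s)}$ is essentially already in hand: since $\tr(K_{21}) \ge 9.082799^{2^{21}}$ from \cref{tab:kochchainsnumbers}, and since concatenating two flipped copies of a chain $C$ in the Koch manner can only add extra triangulations (the chain edges separate upper from lower, and the recursion for $\tr(K_s)$ must dominate the ``no interaction'' product of the two halves), one gets $\tr(K_{s+1}) \ge \tr(K_s)^2$ for every $s$, hence $\sqrt[2^s]{\tr(K_s)}$ is nondecreasing and the limit exists and is at least $\sqrt[2^{21}]{\tr(K_{21})} \ge 9.082799$. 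The plan is to make the monotonicity argument precise using the combinatorial description of $\tr(K_s)$ in terms of upper and lower triangulations and the convex/concave sum operations from \cref{sec:structure}.

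The upper bound is the real work. The plan is to set up a transfer-matrix (or finite-state) recursion that exactly tracks, for each iteration $K_s$, not just the total count $\tr(K_s)$ but a bounded-dimensional vector of ``boundary-conditioned'' triangulation counts — e.g. the number of upper triangulations refined by which edges incident to the two endpoints $p_0, p_n$ and to the apex are used, and similarly for lower triangulations. Because $K_{s+1}$ is built from two flipped flattened copies of $K_s$ glued at a single reflex/convex vertex, the vector for $K_{s+1}$ should be expressible as a fixed bilinear (or at worst polynomial) function of the vector for $K_s$. One then shows this bilinear map is, after normalizing by the appropriate power, contracting toward a one-dimensional fixed direction, so that $\sqrt[2^s]{\tr(K_s)}$ converges; bounding the growth constant of the normalized map from above (by exhibiting an explicit invariant region, or a Lyapunov-type inequality, that the vector provably enters after finitely many steps — concretely after step $21$, using the exact data in \cref{tab:kochchainsnumbers} as the base case) yields the numerical ceiling $9.083139$.

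Concretely I would: (i) fix the finite set of boundary states and write down the exact recurrence $v_{s+1} = \Phi(v_s)$ where $v_s$ collects the conditioned counts of $K_s$, verifying it against the table for small $s$; (ii) show $\tr(K_s)$ is recovered as a fixed linear functional of $v_s$, and that all coordinates of $v_s$ are within constant factors of $\tr(K_s)$, so $\sqrt[n]{\tr(K_s)}$ and $\sqrt[n]{\|v_s\|}$ have the same limit; (iii) prove that the rescaled iteration $\hat v_{s+1} = \Phi(v_s)/\|v_s\|^2$ maps a suitable compact convex region $R$ into itself and is eventually a contraction on $R$, so $\hat v_s \to \hat v_\infty$ and $\log\sqrt[n]{\tr(K_s)} \to \ell$ for some $\ell$; (iv) bound $\ell$ from above by $\log 9.083139$ by certifying, via interval arithmetic on $\Phi$ restricted to $R$, that $\|\Phi(v)\| \le 9.083139^{?}\,\|v\|^2$ on $R$, together with the base estimate at $s=21$; and then combine with the lower bound from the monotonicity argument.

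The main obstacle is step (iii)–(iv): identifying the right finite boundary alphabet so that $\Phi$ is genuinely a closed low-dimensional recursion (the flattening means almost all of the geometry is ``frozen'' and only a constant amount of boundary information survives each gluing, but pinning down exactly which edges matter at the glue vertex and at the two ends, and proving the resulting region is forward-invariant and contracting, requires care), and then turning the convergence into a rigorous numerical \emph{upper} bound with enough slack to land below $9.083139$ while the certified interval arithmetic still closes. The lower-bound half, by contrast, is routine once the monotonicity $\tr(K_{s+1}) \ge \tr(K_s)^2$ is justified from the sum-operation structure.
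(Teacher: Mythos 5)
Your lower-bound half is fine and coincides with the paper's: from $K_{s+1}=\flip{K_s}\vee\flip{K_s}$ one gets $L(K_{s+1})=U(K_s)^2$ by \cref{lem:concavesum} and $U(K_{s+1})\geq U(\flip{K_s})^2=L(K_s)^2$ by \cref{coro:vexmorecave}, hence $\Tr(K_{s+1})\geq\Tr(K_s)^2$, so $\sqrt[2^s]{\Tr(K_s)}$ is nondecreasing and is at least $\sqrt[2^{21}]{\Tr(K_{21})}\geq 9.082799$ (this is the special case of \cref{lem:copyboundall} that the paper invokes).

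The upper-bound half, however, has a genuine gap, and it is exactly the point you flag as ``the main obstacle'': there is no reason a \emph{bounded}-dimensional vector of boundary-conditioned counts closes under the gluing. When two chains are joined by a convex sum, the upper triangulations of the sum are governed by \cref{prop:convexsumpartialtriangulation} and \cref{lem:convexsum}: the count mixes \emph{all} coefficients $t_k(C_1)$ and $t_k(C_2)$ of the partial-upper-triangulation polynomials, weighted by binomial factors coming from the number of visible edges on each side. The number of visible edges ranges up to the number of chain edges, so the state that survives a gluing is the full polynomial $T_C(x)$ of degree $n-1$ (equivalently the length-$n$ coefficient vector), not a constant amount of information about which edges touch $p_0$, $p_n$, and the apex. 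Your steps (i)--(ii) therefore cannot be carried out with a fixed finite alphabet, and the contraction/interval-arithmetic machinery of (iii)--(iv) never gets started. The paper avoids this entirely: for $s\geq 21$, $K_s$ is a formula in copies of $K_{21}$ (or of $\flip{K_{21}}$), so \cref{lemm:copyboundupper} applied to both $K_s$ and $\flip{K_s}$ (i.e.\ \cref{lem:copyboundall}) gives $\Tr(K_s)\leq U(\Cpoly(K_{21},N))\cdot U(\Cpoly(\flip{K_{21}},N))$ with $N=2^{s-21}$, and \cref{theo:polytwinasym} evaluates the growth rates of these poly chains \emph{exactly} from the already-computed coefficients of $T_{K_{21}}(x)$ and $T_{\flip{K_{21}}}(x)$, yielding $\lambda\overline\lambda\approx 9.083138\leq 9.083139$ (\cref{tab:polyflipkochchain}); no fixed-point analysis or certified numerics beyond those coefficient sums is needed. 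If you want to rescue your plan, the ``state'' would have to be the whole triangulation polynomial, at which point you are reconstructing the paper's generating-function argument rather than a finite transfer matrix.
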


\section{Structural Properties of Chains}
\label{sec:structure}

Recall \cref{def:chain} from the introduction.
Note that the unavoidable chain edges form an $x$-monotone curve $p_0 p_1 \dots p_n$,
to which we refer as the \emph{chain curve}.
An edge $p_i p_j$ that is not a chain edge cannot cross the chain curve,
and so it lies either above or below that curve.

\begin{definition}
To every chain $C$ we associate a \emph{visibility triangle} $V(C)$ with entries
\[
V(C)_{i, j} = \begin{cases}%
  +1, & \text{if $p_i p_j$ lies above the chain curve;}\\%
  -1, & \text{if $p_i p_j$ lies below the chain curve;}\\%
  \phantom{+}0, & \text{if $p_i p_j$ is a chain edge (i.e., $i+1 = j$);}%
\end{cases} \qquad (0 \leq i < j \leq n).
\]
\end{definition}

As an example, the visibility triangles of the chains that correspond to the classic point sets from the introduction can be seen in \cref{fig:visibilitytriangles}.
In all such illustrations, the indices of $V(C)_{i,j}$ are understood in familiar matrix notation;
that is, $i$ indexes the row and $j$ indexes the column.
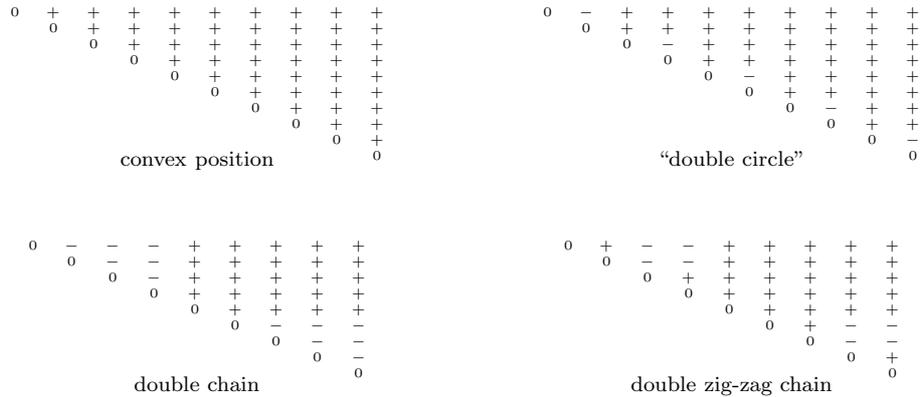
\begin{figure}[b]
  \newcommand{\totalY}{2.5/2}
  \centering
  \begin{tikzpicture}
    \def\xshift{200}
    \def\yshift{85}
    \def\labelshift{1.0}
    \begin{scope}[yshift=-0*\yshift]
      \node {
        \tiny
        $\begin{matrix}
          0 & + & + & + & + & + & + & + & + & + \\
            & 0 & + & + & + & + & + & + & + & + \\
            &   & 0 & + & + & + & + & + & + & + \\
            &   &   & 0 & + & + & + & + & + & + \\
            &   &   &   & 0 & + & + & + & + & + \\
            &   &   &   &   & 0 & + & + & + & + \\
            &   &   &   &   &   & 0 & + & + & + \\
            &   &   &   &   &   &   & 0 & + & + \\
            &   &   &   &   &   &   &   & 0 & + \\
            &   &   &   &   &   &   &   &   & 0 \\
        \end{matrix}$
      };
      \node at (0,-\labelshift) {\footnotesize convex position};
    \end{scope}%
    \begin{scope}[xshift=1*\xshift]
      \node {
        \tiny
        $\begin{matrix}
          0 & - & + & + & + & + & + & + & + & + \\
            & 0 & + & + & + & + & + & + & + & + \\
            &   & 0 & - & + & + & + & + & + & + \\
            &   &   & 0 & + & + & + & + & + & + \\
            &   &   &   & 0 & - & + & + & + & + \\
            &   &   &   &   & 0 & + & + & + & + \\
            &   &   &   &   &   & 0 & - & + & + \\
            &   &   &   &   &   &   & 0 & + & + \\
            &   &   &   &   &   &   &   & 0 & - \\
            &   &   &   &   &   &   &   &   & 0 \\
        \end{matrix}$
      };
      \node at (0,-\labelshift) {\footnotesize \phantom{p}``double circle''\phantom{p}};
    \end{scope}%
    \begin{scope}[yshift=-1*\yshift]
      \node {
        \tiny
        $\begin{matrix}
          0 & - & - & - & + & + & + & + & + \\
            & 0 & - & - & + & + & + & + & + \\
            &   & 0 & - & + & + & + & + & + \\
            &   &   & 0 & + & + & + & + & + \\
            &   &   &   & 0 & + & + & + & + \\
            &   &   &   &   & 0 & - & - & - \\
            &   &   &   &   &   & 0 & - & - \\
            &   &   &   &   &   &   & 0 & - \\
            &   &   &   &   &   &   &   & 0 \\
        \end{matrix}$
      };
      \node at (0,-\labelshift) {\footnotesize \phantom{p}double chain\phantom{p}};
    \end{scope}%
    \begin{scope}[xshift=1*\xshift,yshift=-1*\yshift]
      \node {
        \tiny
        $\begin{matrix}
          0 & + & - & - & + & + & + & + & + \\
            & 0 & - & - & + & + & + & + & + \\
            &   & 0 & + & + & + & + & + & + \\
            &   &   & 0 & + & + & + & + & + \\
            &   &   &   & 0 & + & + & + & + \\
            &   &   &   &   & 0 & + & - & - \\
            &   &   &   &   &   & 0 & - & - \\
            &   &   &   &   &   &   & 0 & + \\
            &   &   &   &   &   &   &   & 0 \\
        \end{matrix}$
      };
      \node at (0,-\labelshift) {\footnotesize double zig-zag chain};
    \end{scope}%
  \end{tikzpicture}
  \caption{
    The visibility triangles corresponding to the chains depicted earlier in \cref{fig:examples}.
    For improved clarity, we only display the signs of the respective entries.
  }
  \label{fig:visibilitytriangles}
\end{figure}

For $i < j < k$, the triangle $p_i p_j p_k$ is oriented counter-clockwise if and only if $p_j$  lies below the edge $p_i p_k$ or, equivalently, if and only if $V(C)_{i, k} = +1$.
It follows that two chains with the same visibility triangle have the same order type and, therefore, the same set of crossing-free geometric graphs and triangulations.
This motivates using the visibility triangle as the canonical representation of a chain.
That is, we consider two chains to be equal if their visibility triangles are identical;
in particular, given two chains $C_1$ and $C_2$, we write $C_1 = C_2$ if $V(C_1) = V(C_2)$.

The edge $p_0 p_n$ plays a crucial role in determining the
shape of a chain. For example, if~$V(C)_{0, n} = +1$, then this edge
is the only edge on the upper convex hull and, from a global
perspective, the chain curve looks like it is curving upwards. Conversely,
if $V(C)_{0, n} = -1$, then the chain curve looks like it is curving downwards.
Correspondingly, we call a chain $C$ with $V(C)_{0, n} \geq 0$ an \emph{upward chain}, and a chain with $V(C)_{0, n} \leq 0$ a \emph{downward chain}.
This implies that every chain is either an upward or a downward chain,
and the primitive chain with only $n=1$ chain edge is the only chain that is both.

\subsection{Flips}

Chains may be flipped upside-down by reflection
at the $x$-axis, thus turning an upward chain into a downward chain, and vice versa.
See \cref{fig:flipexample} for an example.

\begin{figure}
  \centering
  \begin{tikzpicture}
    \def\yshift{80}
    \begin{scope}
      \node at (-1/2,1/2) {$C$}; 
      \node[point] (p0) at (0,0) {};
      \node[point] (p1) at (2/3,1) {};
      \node[point] (p2) at (4/3,1/3) {};
      \node[point] (p3) at (2,0) {};
      \begin{scope}[on background  layer]
        \draw[unavoidable] (p0) -- (p3) -- (p1);
        \draw[chain] (p0) -- (p1) -- (p2) -- (p3);
      \end{scope}
      \node at (3.5,1/2) {
        \footnotesize
        $
          \begin{matrix}
            0 & - & - \\
              & 0 & + \\
              &   & 0 \\
          \end{matrix}
        $
      };
    \end{scope}
    
    \begin{scope}[xshift=192]
      \node at (-1/2,1/2) {$\flip{C}$}; 
      \node[point] (p0) at (0,1) {};
      \node[point] (p1) at (2/3,0) {};
      \node[point] (p2) at (4/3,2/3) {};
      \node[point] (p3) at (2,1) {};
      \begin{scope}[on background  layer]
        \draw[unavoidable] (p0) -- (p3) -- (p1);
        \draw[chain] (p0) -- (p1) -- (p2) -- (p3);
      \end{scope}
      \node at (3.5,1/2) {
        \footnotesize
        $
          \begin{matrix}
            0 & + & + \\
              & 0 & - \\
              &   & 0 \\
          \end{matrix}
        $
      };
    \end{scope}
  \end{tikzpicture}
  \caption{
    A chain $C$ and its flipped version $\flip{C}$ with the corresponding visibility triangles.
  }
  \label{fig:flipexample}
\end{figure}
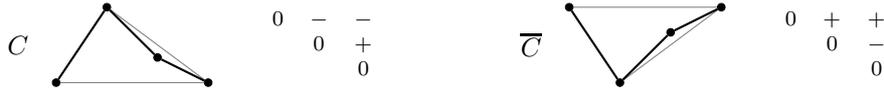

\begin{proposition}
\label{prop:flip}
Let $C$ be a chain with $n$ chain edges.
Then, there is another chain (which we denote by $\flip{C}$ and call the \emph{flipped version of $C$}) with $n$ chain edges and visibility triangle
\[
V(\flip{C})_{i, j} = -V(C)_{i, j} \qquad (0 \leq i < j \leq n).
\]
\end{proposition}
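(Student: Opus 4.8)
The plan is to construct $\flip{C}$ explicitly by reflecting the point realization of $C$ across the $x$-axis, and then to verify that the resulting point sequence is again a chain with the claimed visibility triangle. Concretely, if $C$ is realized by points $p_0,\dots,p_n$ sorted by $x$-coordinate, let $q_i$ be the reflection of $p_i$ across the $x$-axis, i.e.\ $q_i = (x_i, -y_i)$ if $p_i = (x_i, y_i)$. Since reflection preserves $x$-coordinates, the $q_i$ are still sorted by increasing $x$-coordinate, and since it is an affine bijection of the plane it maps crossing-free geometric graphs to crossing-free geometric graphs bijectively. Hence a set of edges triangulates $q_0,\dots,q_n$ if and only if the corresponding set triangulates $p_0,\dots,p_n$; in particular each edge $q_{i-1}q_i$ is unavoidable for $q_0,\dots,q_n$ because $p_{i-1}p_i$ is unavoidable for $C$. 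This shows $q_0,\dots,q_n$ is a chain with $n$ chain edges, which we name $\flip{C}$.

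It remains to compute $V(\flip{C})$. The chain curve of $\flip{C}$ is the reflection across the $x$-axis of the chain curve of $C$, so ``above'' and ``below'' are interchanged: the edge $q_iq_j$ lies above the chain curve of $\flip{C}$ precisely when $p_ip_j$ lies below the chain curve of $C$, and vice versa; chain edges map to chain edges. Reading off the three cases in the definition of the visibility triangle, this is exactly the statement $V(\flip{C})_{i,j} = -V(C)_{i,j}$ for all $0 \le i < j \le n$. Since, as remarked in the excerpt, a chain is determined up to equality by its visibility triangle, this also shows $\flip{C}$ is well-defined as an equality-class of chains (independent of the particular realization of $C$ chosen).

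The only point requiring a little care is the claim that ``above/below the chain curve'' is literally swapped by the reflection, including for edges $p_ip_j$ that are not chain edges: one should note that the reflection is an orientation-reversing isometry, so for $i < j < k$ the triangle $p_ip_jp_k$ is counter-clockwise iff $q_iq_jq_k$ is clockwise, which by the orientation criterion stated just before Section~2.1 ($p_ip_jp_k$ counter-clockwise $\iff V(C)_{i,k}=+1$) gives $V(\flip{C})_{i,k} = -V(C)_{i,k}$ directly, and taking $k = j+1$ (respectively using $j$ as the middle index) covers all off-diagonal entries. This is the main — and essentially the only — obstacle, and it is routine. I do not expect any genuine difficulty; the proposition is really just the observation that reflecting a chain across the $x$-axis negates its visibility triangle.
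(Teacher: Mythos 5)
Your proposal is correct and matches the paper's approach: the paper states the proposition after remarking that chains may be flipped upside-down by reflection at the $x$-axis, which is exactly your construction. The verification you spell out (reflection preserves $x$-order and crossing-freeness, hence unavoidability, and swaps above/below the chain curve) is the routine argument the paper leaves implicit.
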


\subsection{Convex and Concave Sums}

Given two chains $C_1$ and $C_2$, we would like to concatenate them so that we get a new chain containing $C_1$ and $C_2$ as substructures.
As shown in \cref{fig:sumexamples}, there are two ways to do so.

\begin{proposition}
\label{prop:convexsum}
Let $C_1$ and $C_2$ be chains with $n_1$ and $n_2$ chain edges, respectively.
Then, there is an upward chain (which we denote by $C_1 \vee C_2$ and call the
\emph{convex sum of $C_1$ and $C_2$}) with $n_1 + n_2$ chain edges
and visibility triangle
\[
V(C_1 \vee C_2)_{i, j} = \begin{cases}%
  V(C_1)_{i, j}, & \text{if $i, j \in \bra{0, n_1}$;}\\%
  V(C_2)_{i-n_1, j-n_1}, & \text{if $i, j \in \bra{n_1, n_1+n_2}$;}\\%
  +1, & \text{if $i < n_1 < j$;}%
\end{cases} \qquad (0 \leq i < j \leq n_1 + n_2).
\]
\end{proposition}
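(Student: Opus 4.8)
The plan is to construct $C_1 \vee C_2$ explicitly as a point set and then verify that its visibility triangle has the claimed entries. First I would take concrete realizations $p_0,\dots,p_{n_1}$ of $C_1$ and $q_0,\dots,q_{n_2}$ of $C_2$. The idea is to place a scaled, translated copy of $C_1$ on the left and a scaled, translated copy of $C_2$ on the right, identifying the last point of the first copy with the first point of the second copy (so $p_{n_1}$ and $q_0$ become the single point with index $n_1$ in the new chain), and to do so in such a way that the whole configuration becomes "convex at the seam" — i.e.\ every point of the left copy and every point of the right copy lies strictly below the line through the two extreme points $p_0$ and $q_{n_2}$. Concretely, I would put the left copy inside a thin triangular sliver hanging just below the segment from $p_0$ to the seam point, and the right copy inside a thin triangular sliver just below the segment from the seam point to $q_{n_2}$, both slivers being flat enough that the union of the two seam points with the two endpoints forms a convex quadrilateral opening downward; a compactness/continuity argument (or an explicit affine formula with a sufficiently small flattening parameter $\eps$) shows this can always be arranged while preserving general position.

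Next I would check the three cases of the claimed formula. For $i,j \in [0,n_1]$: the sub-configuration on these indices is an affine image of the realization of $C_1$, and since affine maps with positive determinant preserve orientations (and we may assume ours does; a reflection can be absorbed), the edge $p_i p_j$ lies above/below the chain curve of the left copy exactly as in $C_1$, so $V(C_1\vee C_2)_{i,j} = V(C_1)_{i,j}$. The symmetric argument handles $i,j \in [n_1, n_1+n_2]$, giving $V(C_2)_{i-n_1,\,j-n_1}$. For the mixed case $i < n_1 < j$: here $p_i$ is a point of the left sliver and $q_{j-n_1}$ a point of the right sliver; since both slivers sit below the segment joining $p_0$ and $q_{n_2}$ and on opposite sides of the vertical line through the seam, the segment $p_i q_{j-n_1}$ passes above the seam point (index $n_1$) and in fact above every chain vertex strictly between $i$ and $j$ — the left chain vertices between $i$ and $n_1$ lie in the left sliver below this segment, and similarly on the right. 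Hence $V(C_1\vee C_2)_{i,j} = +1$. It remains to confirm that the candidate edges $p_0p_1,\dots,q_{n_2-1}q_{n_2}$ are indeed the chain edges of a chain, i.e.\ each is unavoidable: an edge is unavoidable in a point set iff no other point lies inside the two triangles it would span with points on either side, equivalently iff it is not "stabbable"; since each such edge was unavoidable in $C_1$ or $C_2$ and the flattening only moves the other copy's points far to one side (strictly beyond the vertical line through an endpoint of the edge, hence outside the relevant forbidden region), unavoidability is preserved, and the seam edges $p_{n_1-1}p_{n_1}$, $p_{n_1}q_1$ are unavoidable because the configuration is convex near the seam. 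Finally $V(C_1\vee C_2)_{0,\,n_1+n_2} = +1$ by construction, so the sum is an upward chain.

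The main obstacle I anticipate is the bookkeeping for the mixed case together with verifying unavoidability of the chain edges uniformly: one has to argue that a single choice of flattening parameter simultaneously (a) keeps every point of each copy below the relevant endpoint-to-seam segments, (b) keeps all of $C_1$'s points to the left of the vertical through $q_1$ and all of $C_2$'s to the right of the vertical through $p_{n_1-1}$ (so that no cross edge can be avoided and no chain edge becomes stabbable), and (c) preserves general position. This is a finite list of strict inequalities, each satisfied in the limit $\eps \to 0$, so a small enough $\eps$ works; making that precise is the only real content, and the rest is a routine case check against the definition of $V(\cdot)$.
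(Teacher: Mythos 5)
Your construction is essentially the paper's own proof: both flatten the two chains into thin slivers arranged along the arms of a downward wedge meeting at the seam point, preserve the within-copy entries of the visibility triangle because the flattening maps are $x$-order-preserving affine maps, and obtain the mixed entries $+1$ via a continuity argument in the flattening parameter $\eps$, since the required strict inequalities hold in the degenerate limit where each copy lies on its arm of the wedge. The only differences are cosmetic: the paper realizes the flattening by vertical scalings and shearings and leaves the unavoidability of the chain edges implicit (it follows from the fact that every non-chain spanned segment lies strictly above or below all points in between), whereas you verify it separately.
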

\begin{proposition}
\label{prop:concavesum}
Similarly, there is a downward chain (which we denote by $C_1 \wedge
C_2$ and call the \emph{concave sum of $C_1$ and $C_2$}) with $n_1 + n_2$
chain edges and visibility triangle
\[
V(C_1 \wedge C_2)_{i, j} = \begin{cases}%
  V(C_1)_{i, j}, & \text{if $i, j \in \bra{0, n_1}$;}\\%
  V(C_2)_{i-n_1, j-n_1}, & \text{if $i, j \in \bra{n_1, n_1+n_2}$;}\\%
  -1, & \text{if $i < n_1 < j$;}%
\end{cases} \qquad (0 \leq i < j \leq n_1 + n_2).
\]
\end{proposition}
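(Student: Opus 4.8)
The plan is to obtain the concave sum for free from the convex sum (Proposition~\ref{prop:convexsum}) together with the flip operation (Proposition~\ref{prop:flip}), rather than carrying out a new geometric placement of points. Concretely, I would \emph{define}
\[
C_1 \wedge C_2 := \flip{\flip{C_1} \vee \flip{C_2}}
\]
and then check that this object is a chain with the required visibility triangle. By Proposition~\ref{prop:flip}, $\flip{C_1}$ and $\flip{C_2}$ are chains with $n_1$ and $n_2$ chain edges; by Proposition~\ref{prop:convexsum} their convex sum $\flip{C_1}\vee\flip{C_2}$ is a chain with $n_1+n_2$ chain edges; and flipping that once more, again by Proposition~\ref{prop:flip}, produces a chain with $n_1+n_2$ chain edges. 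Hence $C_1\wedge C_2$ is a well-defined chain of the right size.

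It then remains to compute entries. By Proposition~\ref{prop:flip} we have $V(C_1\wedge C_2)_{i,j} = -V(\flip{C_1}\vee\flip{C_2})_{i,j}$, into which I substitute the three cases of Proposition~\ref{prop:convexsum}. For $i,j\in[0,n_1]$ this is $-V(\flip{C_1})_{i,j} = V(C_1)_{i,j}$; for $i,j\in[n_1,n_1+n_2]$ it is $-V(\flip{C_2})_{i-n_1,j-n_1} = V(C_2)_{i-n_1,j-n_1}$; and for $i<n_1<j$ it is $-(+1) = -1$. This is exactly the claimed formula. Finally, since $C_1$ and $C_2$ are chains they each have at least one edge, so $0<n_1<n_1+n_2$ and therefore $V(C_1\wedge C_2)_{0,n_1+n_2} = -1 \leq 0$, which is precisely the defining condition for $C_1\wedge C_2$ to be a downward chain.

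I do not anticipate a real obstacle here, because all the substance already lives in Propositions~\ref{prop:flip} and~\ref{prop:convexsum}: what remains is the bookkeeping identity $\wedge = \flip{\flip{(\cdot)}\vee\flip{(\cdot)}}$ and a routine sign chase through three cases. The only thing to be mildly careful about is making sure the quantifier ranges for $i,j$ on the boundary indices $0$, $n_1$, $n_1+n_2$ line up consistently between the two propositions (so that the shared endpoint $p_{n_1}$ is handled identically), which is immediate from the half-open/closed conventions used there.

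If instead a self-contained proof paralleling the geometric construction behind Proposition~\ref{prop:convexsum} were desired, the analogue would place a sufficiently shrunken and flattened reflected copy of $C_2$ just to the right of the last point of $C_1$ so that the combined chain curve bends downward globally and every ``crossing'' edge $p_i q_{j}$ with $i<n_1<j$ passes below the chain curve. The delicate point in that version would be to pick the scale of the attached copy small enough that it disturbs none of the above/below relations already internal to $C_1$, and symmetrically none of those internal to $C_2$; but the flip reduction sidesteps this entirely, so I would present that argument as the proof.
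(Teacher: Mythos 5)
Your reduction is correct: since \cref{prop:flip} and \cref{prop:convexsum} are established independently beforehand, defining $C_1 \wedge C_2 := \flip{\flip{C_1} \vee \flip{C_2}}$ and negating the three cases of the convex-sum visibility triangle yields exactly the claimed formula, and the check $V(C_1\wedge C_2)_{0,n_1+n_2}=-1$ (valid because $n_1,n_2\geq 1$ puts the pair $(0,n_1+n_2)$ in the third case) gives the downward property; there is no circularity with the later De Morgan law, which you do not invoke but effectively reprove. This is, however, a genuinely different route from the paper: there the concave sum is handled by repeating the geometric construction used for the convex sum (flatten both chains by vertical scaling and shearing, now arranged in a $\wedge$-shape, and let the flattening parameter tend to zero), with the proof simply declared ``analogous.'' Your flip-based reduction buys economy and robustness --- it needs no second limiting/perturbation argument and makes the later identity $\flip{C_1\vee C_2}=\flip{C_1}\wedge\flip{C_2}$ transparent --- while the paper's approach keeps the two sums on symmetric geometric footing and gives an explicit realization of the concave sum directly, without passing through two reflections. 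Either is a complete proof; your version could even be stated as a one-line corollary of \cref{prop:flip,prop:convexsum}.
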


\begin{figure}
  \centering
  \begin{tikzpicture}
    \def\yshift{75}
    \begin{scope}[xshift=19]
      \node at (-1/2,1/2) {$C_1$}; 
      \node[point] (p0) at (0,0) {};
      \node[point] (p1) at (2/3,1) {};
      \node[point] (p2) at (4/3,1/3) {};
      \node[point] (p3) at (2,0) {};
      \begin{scope}[on background  layer]
        \draw[unavoidable] (p0) -- (p3) -- (p1);
        \draw[chain,myred] (p0) -- (p1) -- (p2) -- (p3);
      \end{scope}
      \node at (3.5,1/2) {
        \footnotesize
        \textcolor{myred}{$
          \begin{matrix}
            0 & - & - \\
              & 0 & + \\
              &   & 0 \\
          \end{matrix}
        $}
      };
    \end{scope}
    
    \begin{scope}[xshift=187]
      \node at (-1/2,1/2) {$C_2$}; 
      \node[point] (p0) at (0,1) {};
      \node[point] (p1) at (2/4,2/3) {};
      \node[point] (p2) at (4/4,0) {};
      \node[point] (p3) at (6/4,2/3) {};
      \node[point] (p4) at (2,1) {};
      \begin{scope}[on background  layer]
        \draw[unavoidable] (p0) -- (p2) -- (p4) -- (p0);
        \draw[chain,myblue] (p0) -- (p1) -- (p2) -- (p3) -- (p4);
      \end{scope}
      \node at (4,1/2) {
        \footnotesize
        \textcolor{myblue}{$
          \begin{matrix}
            0 & - & + & + \\
              & 0 & + & + \\
              &   & 0 & - \\
              &   &   & 0 \\
          \end{matrix}
        $}
      };
    \end{scope}
    
    \begin{scope}[xshift=45,yshift=-1*\yshift]
      \node at (-1,1/2) {$C_1 \vee C_2$}; 
      \node[point] (p0) at (0/7*5,1) {};
      \node[point] (p1) at (1/7*5,2/3+0.1) {};
      \node[point] (p2) at (2/7*5,1/3+0.05) {};
      \node[point] (p3) at (3/7*5,0) {};
      \node[point] (p4) at (4/7*5,0+0.1) {};
      \node[point] (p5) at (5/7*5,0) {};
      \node[point] (p6) at (6/7*5,1/2+0.1) {};
      \node[point] (p7) at (7/7*5,1) {};
      \begin{scope}[on background  layer]
        \draw[unavoidable] (p0) -- (p3) -- (p5) -- (p7);
        \draw[unavoidable] (p0) -- (p7);
        \draw[chain,myred] (p0) -- (p1) -- (p2) -- (p3);
        \draw[chain,myblue] (p3) -- (p4) -- (p5) -- (p6) -- (p7);
      \end{scope}
      \node at (8,1/2) {
        \footnotesize
        \def\red#1{\textcolor{myred}{#1}}
        \def\blu#1{\textcolor{myblue}{#1}}
        $\begin{matrix}
          \red0 & \red- & \red- & + & + & + & + \\
            & \red0 & \red+ & + & + & + & + \\
            &   & \red0 & + & + & + & + \\
            &   &   & \blu0 & \blu- & \blu+ & \blu+ \\
            &   &   &   & \blu0 & \blu+ & \blu+ \\
            &   &   &   &   & \blu0 & \blu- \\
            &   &   &   &   &   & \blu0 \\
        \end{matrix}$
      };
    \end{scope}
    
    \begin{scope}[xshift=45,yshift=-2*\yshift]
      \node at (-1,1/2) {$C_1 \wedge C_2$}; 
      \node[point] (p0) at (0/7*5,0) {};
      \node[point] (p1) at (1/7*5,1/2) {};
      \node[point] (p2) at (2/7*5,3/4-0.1) {};
      \node[point] (p3) at (3/7*5,1) {};
      \node[point] (p4) at (4/7*5,3/4-0.05) {};
      \node[point] (p5) at (5/7*5,1/2-0.15) {};
      \node[point] (p6) at (6/7*5,1/4-0.05) {};
      \node[point] (p7) at (7/7*5,0) {};
      \begin{scope}[on background  layer]
        \draw[unavoidable] (p1) -- (p3) -- (p7);
        \draw[unavoidable] (p0) -- (p7);
        \draw[chain,myred] (p0) -- (p1) -- (p2) -- (p3);
        \draw[chain,myblue] (p3) -- (p4) -- (p5) -- (p6) -- (p7);
      \end{scope}
      \node at (8,1/2) {
        \footnotesize
        \def\red#1{\textcolor{myred}{#1}}
        \def\blu#1{\textcolor{myblue}{#1}}
        $\begin{matrix}
          \red0 & \red- & \red- & - & - & - & - \\
            & \red0 & \red+ & - & - & - & - \\
            &   & \red0 & - & - & - & - \\
            &   &   & \blu0 & \blu- & \blu+ & \blu+ \\
            &   &   &   & \blu0 & \blu+ & \blu+ \\
            &   &   &   &   & \blu0 & \blu- \\
            &   &   &   &   &   & \blu0 \\
        \end{matrix}$
      };
    \end{scope}
  \end{tikzpicture}
  \caption{
    In the top row, two chains $C_1$ and $C_2$ with their visibility
    triangles. Below, the corresponding convex and concave sums
    $C_1 \vee C_2$ and $C_1 \wedge C_2$. Red and blue color is used
    to highlight the contained substructures and their origin.
  }
  \label{fig:sumexamples}
\end{figure}

\begin{proof}[Proof of \cref{prop:convexsum}]
We focus on the convex sum;
the proof for the concave sum is analogous.
We have to show that there is a point set
that forms a chain with the specified visibility triangle.
Intuitively speaking, this is achieved by first flattening
the two given chains and then arranging them in a $\vee$-shape.

To be more precise, we employ vertical shearings, which are
maps $(x, y) \mapsto (x, y+\lambda x)$ in $\bR^2$
for some $\lambda \in \bR$.
Vertical shearings preserve
signed areas and $x$-coordinates. Hence, if a point set realizes
a specific chain, then so does its image under any vertical shearing.

With the help of an appropriate vertical shearing, we may realize $C_1$
as a point set in the rectangle $[-1, 0] \times [-1, 1]$ in such a way that
the first point is at $(-1, 0)$ and the last point is at $(0, 0)$.
Then, given any $\varepsilon \geq 0$,
we may rescale vertically to get a point set $Q_1(\varepsilon)$ in
the rectangle $[-1, 0] \times [-\varepsilon, \varepsilon]$.
Let now $R_1(\varepsilon)$ be the image of $Q_1(\varepsilon)$
under the vertical shearing with $\lambda = -1$.
Then, the first point of $R_1(\varepsilon)$ lies at $(-1, 1)$,
while the last point remains at~$(0, 0)$.
For $\varepsilon > 0$, since $Q_1(\varepsilon)$ is a realization of $C_1$,
so is $R_1(\varepsilon)$.
On the other hand, for~$\varepsilon = 0$, the points of $R_1(\varepsilon)$
all lie on the segment between $(-1, 1)$ and $(0, 0)$.

With $C_2$ we proceed similarly to get a point set $Q_2(\varepsilon)$ in
the rectangle $[0, 1] \times [-\varepsilon, \varepsilon]$,
but we now apply the vertical shearing with $\lambda = 1$
to get $R_2(\varepsilon)$ with the first point at $(0, 0)$
and the last point at $(1, 1)$.

Let $T(\varepsilon) = R_1(\varepsilon) \cup R_2(\varepsilon)$.
We claim that for $\varepsilon > 0$ small enough,
$T(\varepsilon)$ is a chain with
visibility triangle $V(C_1 \vee C_2)$ as specified.
Indeed, as $R_i(\varepsilon)$ is a realization of $C_i$,
we only need to check that the edges between any
point of $R_1(\varepsilon)$ and any point of $R_2(\varepsilon)$
(excluding the common point at the origin)
lie above all the points in between. Since this is the case for~$\varepsilon = 0$ and $T(\varepsilon)$ depends
continuously on $\varepsilon$, the claim follows.
\end{proof}

\subsection{Algebraic Properties}

Using the formulas for the visibility triangles from the
corresponding transformations in \cref{prop:flip,prop:convexsum,prop:concavesum}, it
can be checked easily that the following
algebraic laws hold.
\begin{lemma}\label{lem:algebraiclaws}
Let $C_1, C_2, C_3$ be arbitrary chains. Then, the following are all true.
\begin{align*}
  & \text{Involution: } &
    \flip{\flip{C_1}} &= C_1 \\
  & \text{De Morgan: } &
    \flip{C_1 \vee C_2} &= \flip{C_1} \wedge \flip{C_2}, &
    \flip{C_1 \wedge C_2} &= \flip{C_1} \vee \flip{C_2} \\
  & \text{Associativity: } &
    (C_1 \vee C_2) \vee C_3 &= C_1 \vee (C_2 \vee C_3), &
    (C_1 \wedge C_2) \wedge C_3 &= C_1 \wedge (C_2 \wedge C_3)
\end{align*}
\end{lemma}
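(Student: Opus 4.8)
The plan is to verify each of the five identities by comparing the visibility triangles of both sides entrywise, using only the explicit formulas from \cref{prop:flip,prop:convexsum,prop:concavesum}. Recall that two chains are equal precisely when their visibility triangles agree, so in each case it suffices to fix an arbitrary pair $0 \le i < j \le n$ (where $n$ is the common number of chain edges on the two sides) and check that the corresponding entries match.

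For the involution law, applying \cref{prop:flip} twice gives $V(\flip{\flip{C_1}})_{i,j} = -V(\flip{C_1})_{i,j} = -(-V(C_1)_{i,j}) = V(C_1)_{i,j}$, which is immediate. For the two De Morgan laws, I would treat $\flip{C_1 \vee C_2} = \flip{C_1} \wedge \flip{C_2}$ in detail and note the other follows by applying this one to $\flip{C_1}, \flip{C_2}$ together with involution (or, equally easily, by a symmetric direct computation). Here one does a case split exactly along the three cases in \cref{prop:convexsum}: when $i,j \in [0,n_1]$, both sides equal $-V(C_1)_{i,j} = V(\flip{C_1})_{i,j}$; when $i,j \in [n_1,n_1+n_2]$, both equal $V(\flip{C_2})_{i-n_1,j-n_1}$; and when $i < n_1 < j$, the left side is $-(+1) = -1$ while the right side, being a concave sum, is also $-1$. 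One should also remark that $\flip{C_1}$ has $n_1$ chain edges and $\flip{C_2}$ has $n_2$, so the index ranges on both sides coincide, and that the three cases genuinely partition all pairs $(i,j)$ with $0 \le i < j \le n_1+n_2$ once chain edges (entry $0$) are accounted for.

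For associativity, consider $(C_1 \vee C_2) \vee C_3 = C_1 \vee (C_2 \vee C_3)$ with $n_1, n_2, n_3$ chain edges respectively; both sides are upward chains with $n_1+n_2+n_3$ chain edges. Fix $0 \le i < j \le n_1+n_2+n_3$ and split into the natural cases according to which of the three blocks $[0,n_1]$, $[n_1,n_1+n_2]$, $[n_1+n_2,n_1+n_2+n_3]$ contain $i$ and $j$: if both lie in the same block the entry is the corresponding entry of $V(C_1)$, $V(C_2)$, or $V(C_3)$ on both sides (here one unwinds the nested application of \cref{prop:convexsum}, keeping track of the index shifts, e.g.\ an entry of $C_3$ appears as $V(C_3)_{i-(n_1+n_2),\,j-(n_1+n_2)}$ either way); and if $i$ and $j$ lie in different blocks, then on both sides some intermediate index (namely $n_1$ or $n_1+n_2$) lies strictly between $i$ and $j$, so the relevant ``$i < n_1 < j$'' case of \cref{prop:convexsum} applies at the appropriate level and yields $+1$ on both sides. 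The concave associativity law is identical with $+1$ replaced by $-1$ throughout (or follows from the convex one via De Morgan and involution).

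The computations are entirely routine; the only point requiring a little care is bookkeeping the index shifts in the nested sums for associativity and confirming that the case distinctions are exhaustive and mutually consistent on the overlap indices $i \in \{n_1, n_1+n_2\}$ (there the entry is $0$ on both sides since it is a chain edge, consistently with either block's formula). I would present the involution and one De Morgan law explicitly, the second De Morgan law as a one-line consequence, one associativity law with the case analysis spelled out, and the second associativity law as ``analogous.''
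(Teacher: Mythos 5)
Your proposal is correct and matches the paper's approach: the paper simply asserts that these laws "can be checked easily" from the visibility-triangle formulas in \cref{prop:flip,prop:convexsum,prop:concavesum}, and your entrywise case analysis (including the index bookkeeping for associativity) is exactly that routine verification, carried out in full.
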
\noindent
However, note that for example $(C_1 \wedge C_2) \vee C_3$ is not the same chain as $C_1 \wedge (C_2 \vee C_3)$.

\subsection{Examples}

We denote by $\cOne$ the primitive chain with only $n=1$ chain edge; that is,
the visibility triangle has just the entry $V(\cOne)_{0,1} = 0$.
Using this as a building block, we may define two more fundamental chains,
the \emph{convex chain $\vex{n}$} and the \emph{concave chain $\cave{n}$}, by setting
\begin{align*}
 \vex{n} &= \underbrace{\cOne \vee \dots \vee \cOne}_{\text{$n$ copies}}, &
 \cave{n} &= \underbrace{\cOne \wedge \dots \wedge \cOne}_{\text{$n$ copies}}.
\end{align*}

The convex chain is an upward chain, while the concave chain
is a downward chain. Also, since $\flip{\cOne} = \cOne$, we
get $\flip{\vex{n}} = \cave{n}$ by using De Morgan's law.
Finally, note that $\vex{n}$ and $\cave{n}$ are distinct as chains, even though
they both describe a set of $n+1$ points in convex position.

As already mentioned in the introduction, many previously studied
point sets are in fact chains, or can be seen as such.
Using flips as well as convex and concave sums, we can now describe
these configurations with very concise formulas.

\begin{example}
For any integer $k \geq 1$, the \emph{double chain} with $n = 2 k + 1$ chain edges is the chain
\[
\cDC{n} = \cave{k} \vee \cOne \vee \cave{k}. 
\]
\end{example}

\begin{example}
For any integer $k \geq 1$, the \emph{zig-zag chain} with $n = 2k$ chain edges (which, in essence, is a double circle with one of the inner points removed) is the chain
\[
\cZZ{n} = \underbrace{\cave{2} \vee \dots \vee \cave{2}}_{\text{$k$ copies}}. 
\]
\end{example}

\begin{example}
For any integer $k \geq 1$, the \emph{double zig-zag chain} with $n = 4k+1$ chain edges is the chain
\[
\cDZZ{n} = \flip{\cZZ{2k}} \vee \cOne \vee \flip{\cZZ{2k}}. 
\]
\end{example}

All these examples involve formulas of constant nesting depth only.
But the tools developed up to this point allow us to also define more
complicated chains via formulas of non-constant nesting depth,
without having to worry about questions of realizability. One such chain
with logarithmic nesting depth is indeed the Koch chain.

\begin{definition}\label{def:kochchain}
The \emph{Koch chain} $\koch{s}$ is an upward chain with $n = 2^s$ chain edges, 
defined recursively via $\koch{0} = \cOne$ and $\koch{s} = \flip{\koch{s-1}} \vee \flip{\koch{s-1}}$ for all integers $s \geq 1$.
\end{definition}

Indeed, after expanding the recursive definition twice and using De Morgan's law
on both sides, we see that the formula
$K_s = (K_{s-2} \wedge K_{s-2}) \vee (K_{s-2} \wedge K_{s-2})$
has a complete binary parse tree with alternating convex and
concave sums on any path from the root to a leaf.

\subsection{Unique Construction}

We want to prove the following result. In essence, it states that
every chain can be constructed in a unique way by using only convex and concave sums.

\begin{theorem} \label{theo:chainformulaexists}
Every chain can be expressed as a formula involving
convex sums, concave sums, parentheses, and copies of the primitive
chain with  only one chain edge. This formula is unique up to
redundant parentheses (redundant due to associativity as in \cref{lem:algebraiclaws}).
\end{theorem}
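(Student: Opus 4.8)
The plan is to prove existence and uniqueness separately, both by induction on the number of chain edges $n$. The base case $n = 1$ is immediate: the only chain is $\cOne$, and the empty formula ``$\cOne$'' represents it. So assume $n \geq 2$ and let $C$ be a chain with $n$ chain edges.

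For \emph{existence}, the key observation is that the entry $V(C)_{0,n}$ determines a canonical ``top-level'' decomposition. Suppose $C$ is an upward chain, so $V(C)_{0,n} = +1$ (the downward case is symmetric, using the concave sum, or one can simply apply the result to $\flip{C}$ and invoke De Morgan from \cref{lem:algebraiclaws}). Define $m$ to be the largest index in $\{1, \dots, n-1\}$ such that $V(C)_{0,m} \neq +1$, or set $m = 1$ if no such index exists; more robustly, I would instead cut at the \emph{first} place where the ``$+$-block'' structure forces a split. Concretely: call an index $k$ with $0 < k < n$ a \emph{split point} if $V(C)_{i,j} = +1$ for all $i < k < j$. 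The idea is that $C = C_1 \vee C_2$ where $C_1$ is induced by $p_0,\dots,p_k$ and $C_2$ by $p_k,\dots,p_n$ exactly when $k$ is a split point, by comparing with the formula in \cref{prop:convexsum}. I would then show (i) at least one split point exists, by arguing that $k = n-1$ works if $V(C)_{0,n-1} = +1$, and iterating downward — more carefully, one shows that the set of split points is nonempty because the chain curve, being $x$-monotone with $p_n$ on the upper hull, has \emph{some} point $p_k$ visible-from-above to all others; (ii) pick the \emph{smallest} split point $k$; then the left part $C_1$ has $V(C_1)_{0,k} \leq 0$, i.e.\ $C_1$ is a downward chain (if $C_1$ were upward with $V(C_1)_{0,k} = +1$, then $k$ would not be minimal — a smaller split point inside $C_1$ would exist), so $C_1$ is not itself a nontrivial convex sum. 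Apply the induction hypothesis to $C_1$ and $C_2$, both of which have fewer chain edges, to get formulas for them, and combine with $\vee$. This yields a formula for $C$.

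For \emph{uniqueness}, suppose $C$ has two formulas. Looking at the outermost operation: if $C$ is upward, its top-level operation (after flattening associativity) must be a $\vee$, since $V(C)_{0,n} = +1$ forces it — a top-level $\wedge$ would give $V(C)_{0,n} = -1$. Write the two formulas, fully left-associated, as $C = A_1 \vee A_2 \vee \dots \vee A_r = B_1 \vee \dots \vee B_s$ where each $A_i, B_j$ is either $\cOne$ or has a top-level $\wedge$ (hence is a downward chain). The crucial claim is that the ``boundaries'' between consecutive summands are precisely the split points $k$ at which $C$ does \emph{not} split further on the left, i.e.\ the partition of $\{0,\dots,n\}$ into the blocks of the $A_i$ is forced: the prefix $p_0,\dots,p_k$ equals $A_1$ (as a chain) iff $k$ is a split point and the sub-chain on $p_0,\dots,p_k$ is downward-or-primitive. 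Since this condition is intrinsic to $V(C)$, we get $r = s$ and $A_i = B_i$ as chains for every $i$; then apply the induction hypothesis to each $A_i$ (fewer edges) to conclude the sub-formulas agree up to redundant parentheses.

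The main obstacle is the combinatorial core of the existence step: showing that a split point exists and that cutting at the \emph{minimal} one produces a left factor that is genuinely ``irreducible as a convex sum'' (downward or primitive), so that the recursion is well-founded and the decomposition canonical. This requires a careful argument about the structure of $V(C)$ — essentially that the relation ``$i$ and $j$ are separated by an all-$+$ band'' induces an interval decomposition of $\{0,\dots,n\}$ — and it is here that one must genuinely use that $V(C)$ comes from a chain (e.g.\ the transitivity-like consistency of the $\pm 1$ entries that comes from the order type, via the fact that $V(C)_{i,k}=+1$ is equivalent to $p_i p_j p_k$ being counterclockwise). Once that interval structure is in hand, both existence and uniqueness follow by the routine induction sketched above.
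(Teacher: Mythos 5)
Your decomposition is, in substance, the paper's: an index $k$ is a split point in your sense exactly when $p_k$ is an interior vertex of the lower convex hull of the point set, and \cref{prop:sumformulaexists} decomposes an upward chain along precisely these vertices into downward chains $C_1 \vee \dots \vee C_k$, then recurses (with a symmetric statement for downward chains). The only structural difference is that the paper takes all of these vertices at once to get the full top-level convex sum, whereas you peel off the minimal split point; both recursions would be fine.

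The genuine gap is the step you yourself flag as the main obstacle: you never establish that a split point exists, and you try to extract it from an unstated ``transitivity-like consistency'' of the matrix $V(C)$, which you neither formulate nor prove. The paper's point is that no such axiomatics is needed, because a chain is an actual point set: if $C$ is upward with $n>1$, then $V(C)_{0,n}=+1$ means the segment $p_0p_n$ lies above all intermediate points, so the lower convex hull from $p_0$ to $p_n$ passes through at least one intermediate vertex $p_k$; any such vertex lies below every edge $p_ip_j$ with $i<k<j$, i.e.\ $V(C)_{i,j}=+1$, so $k$ is a split point. This single observation also repairs your claim (ii): if the prefix at the minimal split point were upward with more than one edge, it would itself contain an interior lower-hull vertex, which is then a split point of $C$ smaller than $k$ (indices $j>k$ are handled because $k$ is a split point), contradicting minimality. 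It likewise gives uniqueness directly, as in the paper: in any convex sum of downward chains, the junction points are exactly the interior lower-hull vertices of the result (each downward summand has its span edge below it, so no split point lies strictly inside a summand), hence the block structure is forced by $C$ alone. So your plan works, but only after importing exactly the geometric lemma that constitutes the paper's proof; as written, the proposal defers the core of the argument rather than supplying it.
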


In particular, the above theorem allows us to encode a chain
with $O(n)$ bits (as opposed to the $O(n^2)$ bits required for the
visibility triangle) and to easily enumerate all chains of a fixed size.
We further see by means of a straightforward bijection to Schr\"oder trees
(these are ordered trees that have no vertex of outdegree one; as such, they
describe precisely the parse trees of the unique formulas of upward chains)
that the number of upward chains
is given by the little Schr\"oder numbers \cite{oeislittleschroeder}
and the number of all chains is
given by the large Schr\"oder numbers \cite{oeisbigschroeder}.

The theorem follows by induction from the
following proposition (and from an analogous proposition
that expresses downward chains as a unique concave sum of upward chains).

\begin{proposition} \label{prop:sumformulaexists}
Let $C$ be an upward chain with $n>1$ chain edges.
Suppose that the lower convex hull of
$C$ is $p_{i_0}p_{i_1}\dots p_{i_k}$ with $0 = i_0 < \dots < i_k = n$.
For $j=1, \dots, k$, let $C_j$ be the chain with points $p_{i_{j-1}}, \dots, p_{i_j}$.
Then, each $C_j$ is a downward chain.
Moreover, $C = C_1 \vee \dots \vee C_k$ and any formula that evaluates to $C$ has the same top-level structure.
\end{proposition}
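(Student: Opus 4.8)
The plan is to establish the three separate assertions of the proposition — that each $C_j$ is a downward chain, that the convex sum $C_1 \vee \dots \vee C_k$ reproduces $C$, and that the top-level decomposition of any formula for $C$ is forced — by reasoning directly with the visibility triangle $V(C)$. The starting observation is that, since $C$ is an upward chain with $n > 1$, we have $V(C)_{0,n} = +1$ (it cannot be $0$ because $n > 1$), so the edge $p_0 p_n$ is the unique upper hull edge and every point $p_1, \dots, p_{n-1}$ lies strictly below it. Hence the lower convex hull of $C$ is a genuine polygonal chain $p_{i_0} p_{i_1} \dots p_{i_k}$ with $k \geq 2$.

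First I would show each $C_j$ is a chain and, in fact, a downward chain. The chain edges $p_{i_{j-1}}p_{i_{j-1}+1}, \dots, p_{i_j - 1}p_{i_j}$ are unavoidable in $C$, and since they are not crossed by $p_{i_{j-1}}p_{i_j}$ (which is a lower hull edge, hence unavoidable in $C$ as well), the points $p_{i_{j-1}+1}, \dots, p_{i_j-1}$ all lie strictly above the segment $p_{i_{j-1}}p_{i_j}$; that is, $V(C)_{i_{j-1}, i_j} = -1$. Restricting attention to these points, the chain edges remain unavoidable (a triangulation of $C_j$ extends to one of $C$, or more simply: unavoidability is inherited by the sub-polygon cut off by the unavoidable edge $p_{i_{j-1}}p_{i_j}$), so $C_j$ is a chain, and $V(C_j)_{0,n_j} = V(C)_{i_{j-1},i_j} = -1 \le 0$ makes it a downward chain. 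Here $n_j = i_j - i_{j-1}$.

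Next I would verify $C = C_1 \vee \dots \vee C_k$ by comparing visibility triangles entry by entry, using associativity (\cref{lem:algebraiclaws}) so that it suffices to understand the iterated convex sum via \cref{prop:convexsum}. For an entry $V(C)_{a,b}$ with $p_{i_{j-1}} \le p_a < p_b \le p_{i_j}$ both inside one block, the claim is that $V(C)_{a,b} = V(C_j)_{a - i_{j-1}, b - i_{j-1}}$, which is immediate from the definition of $C_j$. For an entry straddling two or more blocks, i.e. $p_a$ in block $j$ and $p_b$ in block $j' > j$, \cref{prop:convexsum} predicts $+1$, so I must show $p_a p_b$ lies above all chain edges between them. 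This follows because $p_a$ and $p_b$ lie on or above the lower hull while the points between them on the lower hull lie below the segment joining any two lower-hull vertices; carefully, one argues that the whole chain curve between $p_a$ and $p_b$ lies on or below the polygonal path $p_{i_{j-1}} \dots p_{i_{j'}}$ of lower-hull edges, which in turn lies below the segment $p_a p_b$ by convexity of the lower hull together with $p_a, p_b$ being lower-hull points or interior-to-an-edge points. This convexity bookkeeping is the part that needs the most care, but it is standard.

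Finally, for uniqueness of the top-level structure: suppose $C = D_1 \vee \dots \vee D_m$ is any convex-sum decomposition into chains $D_r$ that are not themselves convex sums (equivalently, the top split of a formula for $C$, flattened by associativity). From \cref{prop:convexsum}, the ``join points'' where consecutive blocks $D_r, D_{r+1}$ meet are exactly those indices $t$ with $V(C)_{a,b} = +1$ for \emph{all} $a < t < b$ — and I would show these are precisely the interior lower-hull vertices $i_1, \dots, i_{k-1}$: a lower-hull vertex $p_{i_\ell}$ has everything to its left and right visible from above across it (by the block-straddling argument above), while any non-lower-hull vertex $p_t$ lies strictly above some lower-hull edge $p_{i_{\ell-1}}p_{i_\ell}$ with $i_{\ell-1} < t < i_\ell$, giving $V(C)_{i_{\ell-1}, i_\ell} = -1$ and so $t$ is not a valid join point. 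Hence $m = k$, the split points coincide with $i_1, \dots, i_{k-1}$, and $D_r = C_r$ for every $r$; since each $C_r$ is a downward chain it is in particular not a convex sum, so this decomposition is exactly the top level of the formula. The main obstacle throughout is the geometric/convexity argument in the straddling case — translating ``lies above the chain curve'' into a statement provable purely from $V(C)$ and the lower-hull structure — so I would isolate it as a small lemma: for $i < j$, $V(C)_{i,j} = +1$ if and only if no lower-hull vertex lies strictly between $p_i$ and $p_j$ on a single lower-hull edge spanning past both, or more cleanly, if and only if $p_i$ and $p_j$ lie on a common lower-hull edge or $i, j$ are separated by no ``valley.''
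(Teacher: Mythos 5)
Your plan follows the same route as the paper's proof (each block $C_j$ is a downward chain because $p_{i_{j-1}}p_{i_j}$ is a lower-hull edge; compare visibility triangles entry by entry; force the top-level split points), but the step you yourself single out as the crux---showing $V(C)_{a,b}=+1$ when $p_a$ and $p_b$ lie in different blocks---is argued via a false containment. You claim that the chain curve between $p_a$ and $p_b$ lies \emph{on or below} the lower-hull path $p_{i_{j-1}}\dots p_{i_{j'}}$; in fact it lies on or \emph{above} it (the lower convex hull is by definition below every point of $C$), so the chain of comparisons ``chain curve $\leq$ hull path $\leq$ segment $p_ap_b$'' collapses at its first link. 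The closing ``iff'' lemma you propose is also not correct as an equivalence: $V(C)_{i,j}=+1$ can perfectly well occur for two points inside the same block, so the sign is not characterized by lower-hull separation. The missing idea is that you need not control the whole chain curve at all: since the chain edges are unavoidable, the segment $p_ap_b$ cannot cross the chain curve, hence it lies either above \emph{all} points $p_t$ with $a<t<b$ or below all of them. Now take any lower-hull vertex $p_{i_j}$ with $a<i_j<b$ (one exists because $a$ and $b$ are in different blocks); since $p_a$ and $p_b$ lie strictly above the supporting line of the hull at $p_{i_j}$, the vertex $p_{i_j}$ lies strictly below the segment $p_ap_b$, which rules out the ``below'' alternative and gives $V(C)_{a,b}=+1$. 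This single observation (which is exactly the paper's argument) repairs both the identity $C=C_1\vee\dots\vee C_k$ and the description of admissible split points in your uniqueness step.

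Two smaller points. First, for a block consisting of a single chain edge the entry is $V(C)_{i_{j-1},i_j}=0$ rather than $-1$; this is harmless since $0\leq 0$ still makes $C_j$ a downward chain, but your ``$=-1$'' is not literally correct. Second, in the uniqueness argument you establish that split points of any decomposition must be interior lower-hull vertices, but to get $m=k$ you also need the converse, namely that no lower-hull vertex can lie strictly inside a block $D_r$: if it did, then by the straddling fact above $V(D_r)_{0,n_r}=+1$, contradicting that $D_r$, being primitive or a concave sum, is a downward chain. With that sentence added, your uniqueness argument matches the paper's (``the lower convex hull of a convex sum of downward chains is determined by the shared endpoints of consecutive summands'').
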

\begin{proof}
As $p_{i_{j-1}} p_{i_j}$ is an edge of the lower convex hull of $C$,
it is below all the points in between.
Hence, each $C_j$ is indeed a downward chain.

To prove $C = C_1 \vee \dots \vee C_k$,
we have to show that both chains have the same visibility triangle.
By definition of the $C_j$, the visibility triangles
clearly agree on all entries that stem from
an edge $p_a p_b$ where $p_a$ and $p_b$ are both part of the same $C_j$.
On the other hand, if $p_a$ and $p_b$ are not part of the same $C_j$,
then there is a $j$ with $a < i_j < b$. As $p_{i_j}$ is a vertex
of the lower convex hull, it lies below the edge $p_a p_b$ and
hence $V(C)_{a, b} = +1$. But this is precisely what we also get
for the visibility triangle of the convex sum $C_1 \vee \dots \vee C_k$.

For uniqueness, suppose we are given any formula for $C$.
Since $C$ is assumed to be an upward chain and since any concave sum is a downward chain,
the formula must be of the form $C'_1 \vee \dots \vee C'_{k'}$. We may
further assume that each $C'_j$ is a downward chain by omitting redundant
parentheses. Observe now that in any such convex sum of downward chains,
the resulting lower convex hull is determined by the points that are shared by any two
consecutive chains $C'_j$. Since the given formula evaluates to $C$, we
must have $k' = k$ and $C'_j = C_j$ for all $j$.
\end{proof}

\subsection{Geometric Characterization}

As already mentioned in the introduction, the chain edges together with the hull edge $p_0 p_n$ form a spanning cycle of unavoidable edges.
We are now ready to prove that this property characterizes chains geometrically in terms of order types.

Suppose we are given a set of $n+1$ points $p_0,p_1,\dots,p_n$ as in \cref{thm:characterization}.
Without loss of generality, let $\SC = p_0 p_1 \dots p_n$ be the spanning cycle of unavoidable edges in counter-clockwise order, with $p_0p_n$ an edge of the convex hull, which we call the \emph{base edge}.
As $\SC$ consists of unavoidable edges only, it cannot be crossed by any edge that is not part of $\SC$.
Hence, we can associate a visibility triangle with the given point set, similar to the visibility triangle of a chain, by setting
\[
V_{i, j} = \begin{cases}%
  +1, & \text{if $p_i p_j$ is inside $\SC$ or the base edge;} \\%
  -1, & \text{if $p_i p_j$ is outside $\SC$;} \\%
  \phantom{+}0, & \text{if $p_i p_j$ is part of $\SC$ (i.e., $i+1 = j$);}%
\end{cases} \qquad (0 \leq i < j \leq n).
\]

\begin{proposition}\label{prop:geometricconsiderations}
  For $i < j < k$, the triangle $p_i p_j p_k$ is oriented counter-clockwise if and only if we have $V_{i, k} = +1$.
\end{proposition}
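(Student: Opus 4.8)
The plan is to reduce the statement to a sequence of elementary orientation arguments that track how the spanning cycle $\SC$ bounds a (possibly non-convex) region, and to exploit the fact that $\SC$ cannot be crossed by any non-$\SC$ edge. First I would set up notation: let $D$ be the closed region bounded by the cycle $\SC = p_0p_1\dots p_n$, traversed counter-clockwise, so that the base edge $p_0p_n$ is a convex hull edge and all of $p_1,\dots,p_{n-1}$ lie (weakly) on one side of the line through $p_0$ and $p_n$ — by the convention fixed just before the statement, on the side that makes $\SC$ counter-clockwise, i.e.\ $V_{0,n}=+1$. Since no non-$\SC$ edge crosses $\SC$, every diagonal $p_ip_j$ with $j>i+1$ lies entirely inside $D$ or entirely outside $D$, which is exactly the dichotomy recorded by $V_{i,j}\in\{+1,-1\}$.

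Next I would handle the base case $i=0$, $k=n$: here $p_0p_jp_n$ is counter-clockwise precisely when $p_j$ lies to the left of the directed base edge $p_0\to p_n$, which (since $p_0p_n$ is a hull edge and $\SC$ is counter-clockwise) is the side containing $D$, i.e.\ exactly when $p_0p_j$ and $p_jp_n$ are inside $\SC$; both equal $+1$ in that case and the claim holds. For the general case I would argue by a double induction, or more cleanly by a ``peeling'' argument: given $i<j<k$ not all equal to $0,n$, the polygon $p_i p_{i+1}\dots p_k$ together with the chord $p_ip_k$ bounds a sub-polygon whose boundary is again a non-self-crossing closed curve; within this sub-polygon the chord $p_jp_{?}$ structure lets me induct on $k-i$. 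Concretely, the key local step is: for consecutive-ish triples one shows $p_{i}p_{i+1}p_{k}$ is counter-clockwise iff $p_{i+1}p_k$ is inside $\SC$ (or is the base edge), using that $p_ip_{i+1}$ is a cycle edge and that crossing $\SC$ is forbidden; then one relates the orientation of $p_ip_jp_k$ to that of $p_ip_{i+1}p_k$ and $p_{i+1}p_jp_k$ via the standard additivity of orientations around $p_{i+1}$, peeling off the vertex $p_{i+1}$ and re-indexing.

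The main obstacle I anticipate is that $D$ need not be convex, so ``$p_j$ is inside $\SC$'' is not simply a half-plane condition and one cannot just quote convex-position reasoning. The careful point is to verify that when $p_ip_k$ is a chord lying inside $D$, the portion of $\SC$ from $p_i$ to $p_k$ together with $p_kp_i$ encloses a region on a well-defined side, and that ``inside $\SC$'' restricted to that sub-curve is consistent with ``inside $\SC$'' globally — this is where the non-crossing hypothesis does the real work (it prevents the sub-curve from wrapping around in a way that flips orientations). I would phrase this via a winding-number / Jordan-curve argument, or more combinatorially by induction on $n$: remove an ``ear'' $p_{i-1}p_ip_{i+1}$ of the polygon $\SC$ (one exists since $\SC$ is a simple polygon), check the three triples involving $p_i$ directly, and invoke the inductive hypothesis on the smaller polygon for all other triples. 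Once this framework is in place, each individual orientation comparison is a one-line sign computation, so I would not belabor them.
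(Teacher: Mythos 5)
Your reduction stalls exactly at the two concrete steps you rely on, and both are incorrect as stated. First, your ``key local step'' --- that $p_ip_{i+1}p_k$ is counter-clockwise iff $p_{i+1}p_k$ is inside $\SC$ or the base edge --- ties the orientation to the wrong entry of $V$: the statement to be proved says the orientation of $p_ip_jp_k$ is governed by $V_{i,k}$, uniformly in $j$, and $V_{i,k}$ and $V_{i+1,k}$ genuinely differ in general. A four-point configuration already refutes it: take the upward chain $\flip{C}$ of \cref{fig:flipexample}, say $p_0=(0,1)$, $p_1=(2/3,0)$, $p_2=(4/3,2/3)$, $p_3=(2,1)$; its spanning cycle $p_0p_1p_2p_3$ of unavoidable edges is counter-clockwise with base edge $p_0p_3$ on the hull. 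Here $p_1p_3$ lies outside $\SC$ (it passes below $p_2$), yet $p_0p_1p_3$ is counter-clockwise --- as it must be, since $V_{0,3}=+1$. Second, there is no ``additivity of orientations'': the signed area of $p_ip_jp_k$ is the sum of the signed areas of $p_{i+1}p_jp_k$, $p_ip_{i+1}p_k$ and $p_ip_jp_{i+1}$, but the sign of a sum is not determined by the signs of its summands, so you cannot peel off $p_{i+1}$ this way. Third, the ear-removal induction does not close: deleting an ear vertex $p_i$ leaves unproved \emph{all} triples having $p_i$ as an endpoint or as the middle vertex --- of the order of $n^2$ cases, not three --- and these contain the full difficulty of the statement. (The ear idea itself is salvageable: with an empty ear triangle the new edge $p_{i-1}p_{i+1}$ is crossed by nothing, hence unavoidable in the smaller set, so the hypothesis is inherited; but you would still need a genuine argument for every triple through $p_i$, which you have not supplied.)

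For comparison, the paper needs no induction at all: for a non-base edge $p_ip_k$ it forms the simple polygon $\SC'=p_0\dots p_ip_kp_{k+1}\dots p_n$, replacing the arc of $\SC$ between $p_i$ and $p_k$ by the chord, and invokes the dichotomy that two simple polygons sharing exactly one edge are either nested or touch that edge from opposite sides. Since all edges of $\SC'$ other than the chord are unavoidable, $\SC'$ and the triangle $p_ip_jp_k$ share only $p_ip_k$; nesting is excluded ($\SC'$ contains $p_0,p_n$, and $p_j$ lies outside resp.\ inside $\SC'$ according to $V_{i,k}=\pm 1$), which pins $p_j$ to one specific side of $p_ip_k$, uniformly in $j$. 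If you want to keep an inductive scheme, you need a correct local invariant of this kind; as written, your two pivotal steps are false and the induction leaves the triples through the removed vertex uncovered.
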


\begin{proof}
We will make use of the following key fact about simple polygons.
Consider two simple polygons that share an edge $e$, but do not touch or intersect in any other way.
Then, exactly one of the following is true.
Either one polygon is contained in the other, or the two polygons touch the edge~$e$ from different sides.

We split the proof of the proposition into the following three cases.

\begin{itemize}
\item Suppose that $p_ip_k$ is the base edge (i.e., $i = 0$ and $k = n$).
Then, the triangle $p_i p_j p_k$ is always oriented counter-clockwise since $p_i p_k = p_0 p_n$ is an edge of the convex hull.
Also, we have $V_{i,k} = V_{0, n} = +1$ by definition.
\item Suppose that $p_i p_k$ is not the base edge (i.e., $i \neq 0$ or $k \neq n$) and inside $\SC$ (i.e., $V_{i,k} = +1$).
Let $\SC'$ denote the simple polygon $p_0, p_1, \dots, p_i, p_k, p_{k+1}, \dots, p_n$.
Let us now apply our key fact about simple polygons to $\SC'$ and the triangle $p_i p_j p_k$.
These two polygons share the edge $p_i p_k$, but do not touch or intersect in any other way since all other edges of $\SC'$ are unavoidable.
Moreover, since $p_j$ is outside of $\SC'$, the triangle is not contained in $\SC'$.
On the other hand, $\SC'$ contains the hull vertices $p_0$ and $p_n$, and so $\SC'$ is also not contained in the triangle.
Therefore, since $\SC'$ touches the edge $p_i p_k$ from the left side, $p_j$ has to be to the right of $p_i p_k$, and so the triangle $p_i p_j p_k$ is indeed oriented counter-clockwise.
\item Suppose that $p_i p_k$ is not the base edge (i.e., $i \neq 0$ or $k \neq n$) and outside $\SC$ (i.e., $V_{i,k} = -1$).
Let again $\SC'$ denote the simple polygon $p_0, p_1, \dots, p_i, p_k, p_{k+1}, \dots, p_n$, and let us again apply our key fact to $\SC'$ and $p_i p_j p_k$.
In this case, since $p_j$ lies inside of $\SC'$, the triangle is contained in $\SC'$.
Therefore, since $\SC'$ touches the edge $p_i p_k$ from the left side, $p_j$ has to be to the left of $p_i p_k$, and so the triangle $p_i p_j p_k$ is indeed oriented clockwise. \qedhere
\end{itemize}
\end{proof}

Having \cref{prop:geometricconsiderations} at hand, it will now suffice to construct a chain whose visibility triangle agrees with $V$ in order to finish the proof of \cref{thm:characterization}.

Let thus $p_{i_0}, p_{i_1}, \dots, p_{i_k}$ be the vertices of the
convex hull with $0 = i_0 < \dots < i_k = n$.
For $1 \leq j \leq k$, let $P_j = \{p_{i_{j-1}}, \dots, p_{i_j}\}$.
We now see that either $P_j$ consists of only two points or that it admits a spanning
cycle of unavoidable edges,
namely $\SC_j = p_{i_{j-1}} p_{i_{j-1}+1} \dots p_{i_{j}}$
with base edge $p_{i_{j-1}} p_{i_{j}}$.
The situation is depicted in \cref{fig:characterization}.
\begin{figure}
  \centering
  \begin{tikzpicture}[xscale=3,yscale=1.8]
    \node[point,label=above:$p_0$] (p0) at (-1,0) {};
    
    \node[point,label=left:$p_1$] (p1) at (-2,-1/2) {};
    \node[point,label=right:$p_2$] (p2) at (-1,-1/2) {};
    \node[point,label=right:$p_3$] (p3) at (-1,-3/2) {};
    \node[point,label=left:$p_4$] (p4) at (-2,-3/2) {};
    
    \node[point,label=below:$p_5$] (p5) at (-1,-2) {};
    \node[point,label=above:$p_6$] (p6) at (0,-3/2) {};
    \node[point,label=below:$p_7$] (p7) at (1,-2) {};
    
    \node[point,label=right:$p_8$] (p8) at (2,-3/2) {};
    \node[point,label=left:$p_9$] (p9) at (1,-3/2) {};
    \node[point,label=left:$p_{10}$] (p10) at (1,-1/2) {};
    \node[point,label=right:$p_{11}$] (p11) at (2,-1/2) {};
    
    \node[point,label=above:$p_{12}$] (p12) at (1,0) {};
    
    \begin{scope}[on background layer]
      \draw[unavoidable] (p1) -- (p4);
      \draw[unavoidable] (p5) -- (p7);
      \draw[unavoidable] (p8) -- (p11);
      \draw[unavoidable] (p12) -- (p0);
      \draw[chain] (p0) -- (p1) -- (p2) -- (p3) -- (p4)
        -- (p5) -- (p6) -- (p7)
        -- (p8) -- (p9) -- (p10) -- (p11) -- (p12);
    \end{scope}
      
    \node at (0,-2/3) {$\SC$};
    \draw[->] (0+1/5,-2/3) arc (0:300:1/5);
    \node at (-3/2,-1) {$\SC_2$};
    \draw[->] (-3/2+1/5,-1) arc (0:-300:1/5);
    \node at (0,-9/5) {$\SC_4$};
    \draw[->] (0+1/5,-9/5) arc (0:-300:1/5);
    \node at (3/2,-1) {$\SC_6$};
    \draw[->] (3/2+1/5,-1) arc (0:-300:1/5);
  \end{tikzpicture}
  \caption{
    The situation in the proof of \cref{thm:characterization}.
    Beware that this is just a sketch;
    in reality, the pockets would need to be much more narrow in order
    to make all edges of $\SC$ unavoidable.
  }
  \label{fig:characterization}
\end{figure}
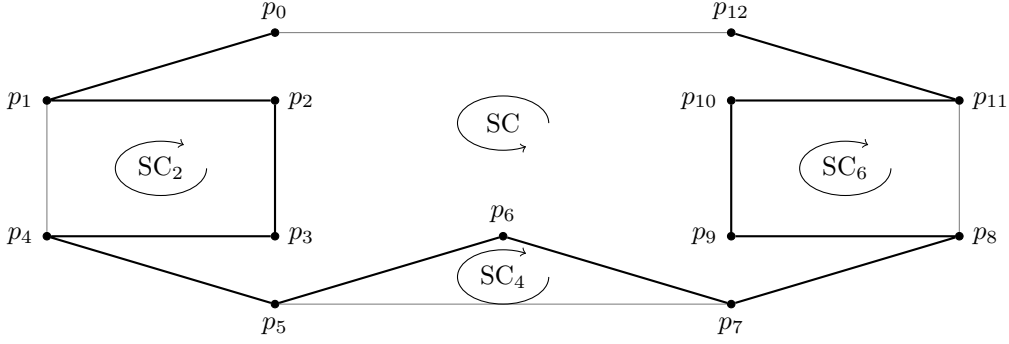
Note that the inside of $\SC_j$ is
outside of $\SC$. In fact, $\SC_j$ forms a so-called
pocket of $\SC$, which means that all edges of the cycle $\SC_j$ except for $p_{i_{j-1}} p_{i_{j}}$
are also edges of $\SC$.

Using the statement of \cref{thm:characterization} inductively for the strictly smaller set of points $P_j$,
we now see that there is a chain $C_j$
with the same order type as $P_j$, that is, with
\[
V(C_j)_{a, b} = \begin{cases}%
  +1, & \text{if $p_{i_{j-1} + a} p_{i_{j-1} + b}$ is inside $\SC_j$ or the base edge;} \\%
  -1, & \text{if $p_{i_{j-1} + a} p_{i_{j-1} + b}$ is outside $\SC_j$;} \\%
  \phantom{+}0, & \text{if $p_{i_{j-1} + a}p_{i_{j-1} + b}$ is part of $\SC_j$ (i.e., $a+1 = b$);}%
\end{cases} \hspace{-1.5cm} \qquad (0 \leq a < b \leq i_j - i_{j-1}).
\]

Let us consider the flipped version $\flip{C_j}$.
As noted before, the inside of $\SC_j$ is outside of $\SC$.
As $\SC_j$ moreover forms a pocket of $\SC$,
any edge outside of $\SC_j$ is inside $\SC$.
Hence,
\[
V(\flip{C_j})_{a, b} = \begin{cases}%
  +1, & \text{if $p_{i_{j-1}+a} p_{i_{j-1}+b}$ is inside $\SC$;} \\%
  -1, & \text{if $p_{i_{j-1}+a} p_{i_{j-1}+b}$ is outside $\SC$;} \\%
  \phantom{+}0, & \text{if $p_{i_{j-1}+a} p_{i_{j-1}+b}$ is part of $\SC$ (i.e., $a+1 = b$);}
\end{cases} \hspace{-1.5cm} \qquad (0 \leq a < b \leq i_j - i_{j-1}).
\]

We claim that $C = \flip{C_1} \vee \dots \vee \flip{C_k}$
has the desired visibility triangle $V$.
We have just seen that the entries stemming from the
individual $\flip{C_j}$ are correct.
So, all that is left to observe is that edges between
different pockets lie inside of $\SC$, which is indeed the case.

\subsection{Point Sets that are Almost Chains}

Recall the assumption in \cref{thm:characterization} which says that at least one of the edges of the unavoidable spanning cycle must be a convex hull edge.
This assumption is crucial since, for example, the classic double circle cannot be realized as a chain even though it admits an unavoidable spanning cycle.
Nevertheless, we can show that the corresponding order types are reasonably close to being a chain, in the sense that a single point may be replaced by two new points in order to obtain a chain.
In addition, this transformation does not change the number of triangulations by much.

\begin{theorem}
  Let $P$ be a set of $n$ points that admits an unavoidable spanning cycle $p_1,\dots,p_n$ where $p_n$ is a vertex of the convex hull.
  Then, there exists a chain $C= q_0,q_1,\dots,q_n$ such that $p_1,\dots,p_{n-1}$ has the same order type as $q_1,\dots,q_{n-1}$.
  Moreover, we have
  \[
    2 \cdot \tr(P) \leq \tr(C) \leq (n-1) \cdot \tr(P).
  \]
\end{theorem}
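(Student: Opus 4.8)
The plan is to take the point $p_n$ (a hull vertex) and "split" it into two nearby points $q_{n-1}$ and $q_n$ so that the base edge $q_0 q_n$ becomes a genuine convex hull edge, thereby placing us in the setting of \cref{thm:characterization}. Concretely, I would first invoke the hypothesis that $p_1,\dots,p_n$ is an unavoidable spanning cycle with $p_n$ on the convex hull. The issue preventing a direct application of \cref{thm:characterization} is that the cycle need not contain a hull \emph{edge}; but since $p_n$ is a hull \emph{vertex}, the two cycle edges $p_{n-1}p_n$ and $p_n p_1$ emanate from a hull vertex. I would perturb $p_n$ slightly: replace it by two points $q_{n-1}$, $q_n$ placed very close to the old location of $p_n$, with $q_{n-1}$ on the edge $p_{n-1}p_n$ side and $q_n$ on the $p_n p_1$ side, chosen so that (i) $q_{n-1}q_n$ is a short segment lying just inside the old angle at $p_n$, (ii) the order type of $q_1,\dots,q_{n-1}$ (where $q_i = p_i$ for $i<n-1$) matches that of $p_1,\dots,p_{n-1}$, and (iii) the new cycle $q_1,\dots,q_n$ consists of unavoidable edges with $q_0 := q_n$, $q_{n-1}q_n$ a convex hull edge. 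Then \cref{thm:characterization} yields a chain $C = q_0, q_1, \dots, q_n$ with the same order type as this perturbed point set.

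Next I would establish the quantitative bounds. For the \textbf{lower bound} $2\cdot\tr(P) \le \tr(C)$: a triangulation $T$ of $P$ uses the star of $p_n$; when we split $p_n$ into $q_{n-1}, q_n$, the short edge $q_{n-1}q_n$ together with the edges of $T$ incident to $p_n$ gives rise to at least two distinct triangulations of $C$ (roughly: the fan of triangles at $p_n$ can be "opened up" along the sliver between $q_{n-1}$ and $q_n$ in at least two combinatorially distinct ways, e.g. by attaching the sliver triangle $q_{n-1}q_nr$ to the first or to the last neighbor $r$ of $p_n$ in the fan). This defines an injection from $\tr(P) \times \{0,1\}$ into $\tr(C)$. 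For the \textbf{upper bound} $\tr(C) \le (n-1)\cdot\tr(P)$: given a triangulation $T'$ of $C$, contract the short edge $q_{n-1}q_n$ back to a single point $p_n$; this merges the fans at $q_{n-1}$ and $q_n$ into a fan at $p_n$ and yields a triangulation of $P$, where the only extra information lost is the identity of the single triangle of $T'$ incident to the edge $q_{n-1}q_n$ (other than the degenerate ones), whose apex is one of at most $n-1$ possible vertices. Hence the contraction map is at most $(n-1)$-to-one.

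The main obstacle I anticipate is \textbf{verifying that the perturbation preserves unavoidability of all the cycle edges} — i.e., that the new short edge $q_{n-1}q_n$ and the two edges $q_{n-2}q_{n-1}$, $q_n q_1$ are still unavoidable in the perturbed point set. Unavoidability is a global property (an edge is unavoidable iff it is crossed by no other potential edge, equivalently iff some pair of points would be "blocked"), and one must argue that for a sufficiently small perturbation no new crossing becomes possible and no old obstruction disappears. I would handle this by a continuity/genericity argument: the relevant order-type data is locally constant under small perturbations away from degeneracies, and since $p_n$ is a convex hull vertex, pushing the split slightly \emph{inward} along the angle bisector only shrinks the region "seen" across each former chain edge, so no edge that was unavoidable can become avoidable. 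A careful but routine case analysis confirms $q_{n-1}q_n$ is unavoidable because it is extremely short and sits in a region bounded by the (unavoidable) edges $q_{n-2}q_{n-1}$ and $q_n q_1$. Once unavoidability is in hand, \cref{thm:characterization} does the rest, and the counting bounds follow from the fan/contraction correspondence sketched above.
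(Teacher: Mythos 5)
Your construction hinges on the claim that you can split $p_n$ into two points placed very close to its old location so that the new cycle through the two new points is still an unavoidable spanning cycle, now with a convex hull edge between the two new points. This is false in general, and it fails precisely for the motivating example, the double circle. Write $a,b$ for the two new points near $p_n$, let $q=p_{n-1}$, $q'=p_1$ be the cycle neighbours of $p_n$ and $h,h'$ its hull neighbours. The danger is not that the old cycle edges lose unavoidability (your continuity argument does handle those), but that the \emph{new} cycle edges $qa$ and $bq'$ get crossed by segments joining $a$ or $b$ to points lying angularly outside the cycle neighbours, such as $bh$ and $ah'$. Computing when two such ``spokes'' emanating from the tiny cluster $\{a,b\}$ cross shows: to keep $qa$ uncrossed by every segment $bp_j$, the direction of $b-a$ must lie in the arc antipodal to the angular gap (seen from $p_n$) around the direction to $p_{n-1}$; to keep $bq'$ uncrossed by every segment $ap_j$, the direction of $b-a$ must lie in the angular gap around the direction to $p_1$. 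Since all other points lie in a wedge of angle $<\pi$ at $p_n$, the first arc points out of that wedge while the second lies inside it, so whenever neither cycle neighbour of $p_n$ is a hull neighbour (as in the double circle, where both cycle neighbours are inner points), the two arcs are disjoint and \emph{no} placement of $a,b$ near $p_n$ works; in particular the placement you describe ($a$ towards $p_{n-1}$, $b$ towards $p_1$, with $ab$ just inside the angle) makes $bh$ cross $qa$ outright. Hence \cref{thm:characterization} cannot be invoked, and the whole construction collapses. (There is also an indexing slip: splitting $p_n$ into $q_{n-1},q_n$ with $q_0:=q_n$ does not produce $n+1$ points with $q_1,\dots,q_{n-1}$ order-equivalent to $p_1,\dots,p_{n-1}$; the two copies of $p_n$ must become $q_0$ and $q_n$.)

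The repair cannot be local: the two copies of $p_n$ must be allowed to change their orientation relations with part of the set. The paper's proof duplicates $p_n$ as $p_0=p_n$, decomposes the cycle into convex-hull pockets, realizes each pocket as a chain, and takes the convex sum $C_1\vee\dots\vee C_k$ exactly as in the proof of \cref{thm:characterization}; then $q_1,\dots,q_{n-1}$ retain the order type of $p_1,\dots,p_{n-1}$, while $q_0$ and $q_n$ deliberately behave antagonistically towards the last and first pockets. Consequently your counting step, which contracts a ``short edge $q_{n-1}q_n$'', does not transfer either: in the actual chain the edge $q_0q_n$ is the long base edge of the upper triangulation, and the factors $2$ and $n-1$ come from choosing the triangle $q_0q_1q_n$ versus $q_0q_{n-1}q_n$, respectively from recording the apex $q_m$ of the unique triangle on $q_0q_n$ --- close in spirit to your fan/contraction idea, but it must be carried out on the upper triangulation of the chain, not on a perturbed copy of $P$.
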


\begin{proof}
  We mimic the proof of \cref{thm:characterization} from the previous subsection.
  That is, consider the augmented sequence $p_0,p_1,\dots,p_n$ with $p_0 = p_n$ and let $p_{i_0},\dots,p_{i_k}$ be the vertices of the convex hull with $0 = i_0 < \dots < i_k = n$ and $k \geq 3$.
  For $1 \leq j \leq k$, let $P_j = \{p_{i_{j-1}},\dots,p_{i_j}\}$.
  Each such $P_j$ again forms a so-called pocket, which may be realized as a chain $C_j$.
  The construction is concluded by defining $C = C_1 \vee \dots \vee C_k$.
  Observe by using a statement analogous to \cref{prop:geometricconsiderations} that all involved triangles maintain their orientation, except that the duplicated points $q_0$ and $q_n$ that came from $p_0 = p_n$ behave antagonistically with respect to points in $C_k$ and $C_1$, respectively.
  For example, the triangle $q_iq_jq_n$ for $q_i,q_j \in C_1$ with $i<j$ is always oriented counter-clockwise, even if originally $p_ip_jp_n$ was oriented clockwise.
  
  Let us fix any triangulation of $P$.
  We show how to map it injectively to two distinct triangulations of $C$, thus proving the first inequality.
  First of all, note that the triangulations of the individual pockets of $P$ directly correspond to a lower triangulation of $C$, which means that we may limit our attention to the upper triangulation of $C$.
  There, we start by either adding the triangle $q_0q_1q_n$ or the triangle $q_0q_{n-1}q_n$.
  In the first case, we are left to triangulate the simple polygon $q_1,\dots,q_n$,
  whereas in the second case, the respective polygon is $q_0,\dots,q_{n-1}$.
  In the first case, any remaining edge $p_ip_j$ of the fixed triangulation of $P$ (i.e., any edge that is not contained inside of a pocket) is simply mapped to the edge $q_iq_j$ in $C$.
  In the second case, we do the same except that the point $q_0$ takes the role of $q_n$.
  
  Let us fix any triangulation of $C$.
  To prove the second inequality, we construct a corresponding triangulation of $P$ in such a way that at most $n-1$ triangulations of $C$ have the same correspondence.
  First of all, we may again map the lower triangulation of $C$ to triangulations of the individual pockets of $P$.
  As far as the upper triangulation of $C$ is concerned, there is exactly one triangle of the form $q_0q_mq_n$ where $1 \leq m \leq n-1$.
  This triangle is mapped to the edge $p_np_m$ in the corresponding triangulation of $P$.
  This has the effect of splitting the inside of the unavoidable spanning cycle of $P$ into two separate simple polygons that are left to be triangulated.
  They are triangulated simply by mapping all the remaining edges $q_iq_j$ of the upper triangulation of $C$ to edges $p_ip_j$, where any appearance of $q_0$ is mapped to $p_n$.
\end{proof}

\section{Triangulations of Chains}
\label{sec:triangulations}

In the previous section, we have seen that any chain can be expressed
as a formula involving only convex and concave sums.
Our goal here is to understand how triangulations behave with respect
to such convex and concave sums.
In order for this to work out, we have to consider not just triangulations,
but a more general notion of partial triangulations.

We start by decomposing triangulations of a chain $C$
into an upper and a lower part.
An edge $p_i p_j$ is an \emph{upper edge}
if $V(C)_{i, j} = +1$, a \emph{chain edge} if $V(C)_{i, j} = 0$,
and a \emph{lower edge} if~$V(C)_{i, j} = -1$.
That is, upper edges lie above the chain curve,
while lower edges lie below.

\begin{definition}
An \emph{upper (lower) triangulation} of a given chain $C$ is
a crossing-free geometric graph on $C$ that is edge-maximal
subject to only containing chain edges and upper (lower) edges.
We denote the number of upper and lower triangulations by
$U(C)$ and $L(C)$, respectively, and as always the number of (complete)
triangulations by $\TR(C)$.
\end{definition}

Note that the chain edges are contained in every upper and lower
triangulation. Moreover, every triangulation
is the union of a unique upper and a unique lower triangulation,
which implies $\TR(C) = U(C) \cdot L(C)$.
A lower triangulation of a chain $C$ is an upper triangulation
of the flipped version $\flip{C}$, and therefore $L(C) = U(\flip{C})$.
For this reason, we may restrict our attention to studying only
upper triangulations.

Intuitively speaking, we can create a partial upper
triangulation by combining all the chain edges with some upper
edges, in such a way that all bounded faces are triangles.
Note that then, only some of the used edges are visible from above.

\begin{definition}
Let $C$ be any chain with $n$ chain edges, and let $V = p_{i_0} p_{i_1} \dots p_{i_v}$
with $0 = i_0 < i_1 < \dots < i_v = n$
be an (x-monotone) curve composed
of chain edges and upper edges only.
A \emph{partial upper triangulation of $C$ (with visible edges $V$)}
consists of all chain edges, all edges in $V$,
and a triangulation of the areas between the two.
\end{definition}

\Cref{fig:partialexamples} depicts some partial upper triangulations and their visible edges.
We are interested in counting such triangulations
parameterized by the number of triangles.
It can be noted that a partial upper triangulation with $k$ triangles
has $n-k$ visible edges.

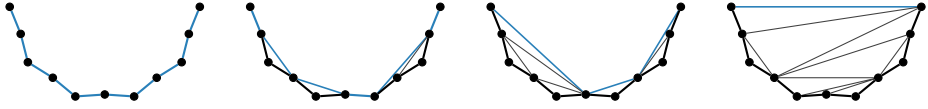
\begin{figure}
  \centering
  \begin{tikzpicture}
    \newcommand{\totalY}{2.5/2}
    \def\xshift{90}
    \newcommand{\rad}{\totalY}
    \newcommand{\radd}{0.93*\rad}
    \newcommand{\ang}{360/10}
    \def\drawdoublecircle{
      \node[point] (p0) at (-0*\ang:\rad) {};
      \node[point] (p1) at (-1*\ang:\rad) {};
      \node[point] (p2) at (-2*\ang:\rad) {};
      \node[point] (p3) at (-3*\ang:\rad) {};
      \node[point] (p4) at (-4*\ang:\rad) {};
      \node[point] (p5) at (-5*\ang:\rad) {};
      \node[point] (q0) at (-0.5*\ang:\radd) {};
      \node[point] (q1) at (-1.5*\ang:\radd) {};
      \node[point] (q2) at (-2.5*\ang:\radd) {};
      \node[point] (q3) at (-3.5*\ang:\radd) {};
      \node[point] (q4) at (-4.5*\ang:\radd) {};
    }
    \begin{scope}[xshift=0*\xshift]
      \drawdoublecircle
      \begin{scope}[on background layer]
        \draw[chain,myblue] (p0) -- (q0) -- (p1) -- (q1) -- (p2) -- (q2)
          -- (p3) -- (q3) -- (p4) -- (q4) -- (p5);
      \end{scope}
    \end{scope}%
    \begin{scope}[xshift=1*\xshift]
      \drawdoublecircle
      \begin{scope}[on background layer]
        \draw[edge,myblue,semithick] (q0) -- (p2);
        \draw[edge,myblue,semithick] (q2) -- (q3) -- (q4);
        \draw[edge] (q0) -- (q1);
        \draw[chain,myblue] (p0) -- (q0);
        \draw[chain,myblue] (p2) -- (q2);
        \draw[chain,myblue] (q4) -- (p5);
        \draw[chain] (q0) -- (p1) -- (q1) -- (p2);
        \draw[chain] (q2) -- (p3) -- (q3) -- (p4) -- (q4);
      \end{scope}
    \end{scope}%
    \begin{scope}[xshift=2*\xshift]
      \drawdoublecircle
      \begin{scope}[on background layer]
        \draw[edge,myblue,semithick] (p0) -- (q1) -- (q2) -- (p5);
        \draw[edge] (q0) -- (q1);
        \draw[edge] (q2) -- (q3) -- (q4) -- (q2);
        \draw[chain] (p0) -- (q0) -- (p1) -- (q1) -- (p2) -- (q2)
          -- (p3) -- (q3) -- (p4) -- (q4) -- (p5);
      \end{scope}
    \end{scope}%
    \begin{scope}[xshift=3*\xshift]
      \drawdoublecircle
      \begin{scope}[on background layer]
        \draw[edge,myblue,semithick] (p5) -- (p0);
        \draw[edge] (p0) -- (q3);
        \draw[edge] (p0) -- (q4);
        \draw[edge] (q4) -- (q3);
        \draw[edge] (q0) -- (q1) -- (q3) -- (q0);
        \draw[edge] (p3) -- (q1) -- (q2);
        \draw[chain] (p0) -- (q0) -- (p1) -- (q1) -- (p2) -- (q2)
          -- (p3) -- (q3) -- (p4) -- (q4) -- (p5);
      \end{scope}
    \end{scope}%
  \end{tikzpicture}
  \caption{
    Four partial upper triangulations of the ``double circle'' with ten, six, three, and one visible edge, respectively.
    As usual, chain edges are in bold, while visible edges are in blue.
  }
  \label{fig:partialexamples}
\end{figure}

\begin{definition}
Let $C$ be any chain with $n$ chain edges.
For $k = 0,\dots,n-1$, let $t_k(C)$
be the number of partial upper triangulations of $C$ with $k$ triangles (i.e., with $n-k$ visible edges).
The \emph{upper triangulation polynomial of $C$} is the corresponding generating function
\[
T_C(x) = \sum_{k=0}^{n-1} t_k(C) x^k.
\]
\end{definition}

As an example, enumerating all partial upper triangulations
of the convex chain $\vex{4}$ shows that
$T_{\vex{4}}(x) = 1 + 3 x + 5 x^2 + 5 x^3$.
In general, note that for every chain $C$ we have $t_0(C) = 1$ and that
the leading coefficient of $T_{C}(x)$ is equal to $U(C)$.
Moreover, we may again think of $T_{\flip{C}}(x)$ as the
``lower triangulation polynomial'' of $C$.

\subsection{Convex and Concave Sums}

Let us start with the easy case.
For concave sums, we can establish the following relation.

\begin{lemma}\label{lem:concavesum}
A partial upper triangulation of $C_1 \wedge C_2$
is the union of a unique partial upper triangulation of $C_1$
and a unique partial upper triangulation of $C_2$. Hence,
\begin{align*}
T_{C_1 \wedge C_2}(x) = T_{C_1}(x) \cdot T_{C_2}(x),
&&
U(C_1 \wedge C_2) = U(C_1) \cdot U(C_2).
\end{align*}
\end{lemma}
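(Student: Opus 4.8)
The plan is to exploit the single observation that the shared point $p_{n_1}$ of $C_1 \wedge C_2$ behaves as a cut vertex for partial upper triangulations. By \cref{prop:concavesum}, every edge $p_i p_j$ with $i < n_1 < j$ is a lower edge of $C_1 \wedge C_2$, and since partial upper triangulations use only chain edges and upper edges, no such edge can occur. From this everything else should fall out essentially for free: the triangulation cannot "see across" $p_{n_1}$, so it splits.

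First I would record the structural consequences of this. Let $T$ be a partial upper triangulation of $C_1 \wedge C_2$ with visible curve $V$. No edge of $T$ has one endpoint among $p_0,\dots,p_{n_1-1}$ and the other among $p_{n_1+1},\dots,p_{n_1+n_2}$, by the observation above. Hence every triangle of $T$ with vertices $p_a,p_b,p_c$ ($a<b<c$) satisfies $c\le n_1$ or $a\ge n_1$, since otherwise one of $p_ap_b$, $p_bp_c$, $p_ap_c$ would be a forbidden straddling edge. The same reasoning applied to two consecutive vertices of $V$ shows $p_{n_1}$ is itself a vertex of $V$: otherwise $V$ would pass from an index $<n_1$ to an index $>n_1$ via a single non-chain edge straddling $p_{n_1}$, which is a lower edge — impossible, since (the only integer count forces) it cannot be a chain edge.

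Next I would set up the bijection. Given $T$, let $T_1$ and $T_2$ be the edges of $T$ with both endpoints in $\{p_0,\dots,p_{n_1}\}$ and in $\{p_{n_1},\dots,p_{n_1+n_2}\}$, respectively; by the above, $T=T_1\cup T_2$ and these pieces meet only at $p_{n_1}$. Since $V(C_1\wedge C_2)$ restricts to $V(C_1)$ and (after shifting) $V(C_2)$ on the two index ranges, $T_1$ consists of exactly the chain edges of $C_1$ plus some upper edges of $C_1$, and similarly for $T_2$; because the region between the chain curve and $V$ is pinched at $p_{n_1}$ into the region of $C_1$ and the region of $C_2$, and $T$ triangulates it, $T_1$ and $T_2$ are partial upper triangulations of $C_1$ and $C_2$, with visible curves the two halves of $V$ meeting at $p_{n_1}$. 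Conversely, given partial upper triangulations $S_1$ of $C_1$ and $S_2$ of $C_2$, the union $S_1\cup S_2$ is crossing-free (since the points are sorted by $x$-coordinate, the edges of $S_1$ and $S_2$ can only meet at $x=x_{p_{n_1}}$, i.e.\ at the shared point $p_{n_1}$), contains all $n_1+n_2$ chain edges, and has only triangular bounded faces (each lying on one side), hence is a partial upper triangulation of $C_1\wedge C_2$. These two maps are mutually inverse.

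Finally I would read off the conclusion: if $S_1$ has $k_1$ triangles and $S_2$ has $k_2$ triangles then $S_1\cup S_2$ has $k_1+k_2$ triangles, so $t_k(C_1\wedge C_2)=\sum_{k_1+k_2=k}t_{k_1}(C_1)\,t_{k_2}(C_2)$, i.e.\ $T_{C_1\wedge C_2}(x)=T_{C_1}(x)\,T_{C_2}(x)$; comparing leading coefficients and recalling that the leading coefficient of $T_C(x)$ equals $U(C)$ gives $U(C_1\wedge C_2)=U(C_1)\,U(C_2)$. The only step needing genuine care — the main obstacle — is the "pinching" claim that $T$ really does restrict to honest triangulations of the two sub-regions (no face of $T$ straddles $p_{n_1}$, and the visible curve splits cleanly there); but this is exactly what the edge observation from the first step delivers, so it is careful bookkeeping rather than a new idea.
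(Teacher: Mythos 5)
Your proof is correct and is exactly the argument the paper has in mind (the paper states this lemma without an explicit proof, as immediate): by \cref{prop:concavesum} every edge straddling $p_{n_1}$ is a lower edge, so a partial upper triangulation and its visible curve split at $p_{n_1}$ into independent pieces, giving the triangle-preserving bijection and hence the product formula and $U(C_1\wedge C_2)=U(C_1)\cdot U(C_2)$. No gaps; your extra care about the visible curve passing through $p_{n_1}$ and the crossing-freeness of the union is just the bookkeeping the paper leaves implicit.
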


Convex sums are more tricky.
The main insight is that every partial upper triangulation
of~$C_1 \vee C_2$ consists of a partial upper triangulation
of~$C_1$, a partial upper triangulation of~$C_2$,
and some edges between $C_1$ and $C_2$. More precisely:

\begin{proposition}\label{prop:convexsumpartialtriangulation}
There is a triangle-preserving bijection between
\begin{itemize}
\item all triples $(T_1, T_2, T_3)$ where $T_1$ is a partial upper triangulation of $C_1$
(with $v_1$ visible edges), $T_2$ is a partial upper triangulation of $C_2$
(with $v_2$ visible edges), and $T_3$ is a partial upper triangulation of
the convex sum $\cave{v_1} \vee \cave{v_2}$, and
\item all partial upper triangulations of $C_1 \vee C_2$.
\end{itemize}
\end{proposition}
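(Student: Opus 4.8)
I want to build the bijection explicitly by decomposing a partial upper triangulation of $C_1 \vee C_2$ along the point of concatenation $p_{n_1}$. Recall that in $C_1 \vee C_2$ the entries $V(C_1 \vee C_2)_{i,j} = +1$ whenever $i < n_1 < j$, so every edge crossing "the seam" is an upper edge. First I would separate the edges of a given partial upper triangulation $T$ into three groups: those entirely within $C_1$ (i.e., with both endpoints among $p_0,\dots,p_{n_1}$), those entirely within $C_2$, and the remaining "crossing" edges, each of which has one endpoint in $\{p_0,\dots,p_{n_1-1}\}$ and one in $\{p_{n_1+1},\dots,p_{n_1+n_2}\}$. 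The core geometric observation is that the crossing edges, being upper edges spanning the seam, are linearly ordered and nested in a way that makes them behave exactly like upper edges of a small convex sum.

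\textbf{Construction of the forward map.} Given $T$, let $T_1$ be the set of chain edges and $T$-edges within $C_1$ together with all faces of $T$ lying in the region below the chain curve of $C_1$; I claim this is a partial upper triangulation of $C_1$, and its visible edges $p_0 = p_{j_0} p_{j_1}\dots p_{j_{v_1}} = p_{n_1}$ are exactly the edges of $C_1$ that $T$ "sees from above" — equivalently, the edges of $T_1$ that are not strictly below any crossing edge of $T$. Symmetrically define $T_2$ with visible edges $p_{n_1} = p_{k_0}p_{k_1}\dots p_{k_{v_2}} = p_{n_1+n_2}$. What remains of $T$ after removing the interiors of $T_1$ and $T_2$ is a triangulation of the region bounded below by the $v_1 + v_2$ visible edges and above by whatever the outer boundary of $T$ is; contracting each maximal visible edge of $T_1$ to a single primitive chain edge and likewise for $T_2$ turns this leftover region into a partial upper triangulation $T_3$ of $\cave{v_1}\vee\cave{v_2}$ — the $\cave{}$ on each side because within a block of visible edges coming from one $C_i$ we are on the \emph{downward} side relative to $T_3$'s chain curve, exactly the concave-sum structure. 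The map $T \mapsto (T_1, T_2, T_3)$ is triangle-preserving since each triangle of $T$ lies either strictly inside the $C_1$-region (counted in $T_1$), strictly inside the $C_2$-region ($T_2$), or in the leftover region ($T_3$), and this is a partition.

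\textbf{The inverse, and the main obstacle.} Conversely, given a triple $(T_1, T_2, T_3)$, I would reconstruct $T$ by "expanding" each primitive chain edge of $\cave{v_1}\vee\cave{v_2}$ back into the corresponding visible edge of $T_1$ (resp.\ $T_2$) and gluing $T_1$ and $T_2$ underneath. One checks this is well-defined (the gluing produces a crossing-free graph because visible edges of $T_i$ do not cross the leftover region, and the seam edge $p_{n_1}$ appears consistently as $i_{v_1}$-th vertex on one side and $0$-th on the other) and that it inverts the forward map. The step I expect to be the crux is verifying that $T_3$ genuinely ranges over \emph{all} partial upper triangulations of $\cave{v_1}\vee\cave{v_2}$ and no others — i.e., that the contraction is a bijection at the level of the leftover region. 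This requires arguing that the visible edges of $T$, restricted to the $C_1$-block, can be \emph{any} $x$-monotone subsequence of the chain curve of $C_1$ starting at $p_0$ and ending at $p_{n_1}$ (which follows since $T_1$ was an arbitrary partial upper triangulation with $v_1$ visible edges and each such choice is realizable), and conversely that the structure above those edges is constrained \emph{only} by being a partial upper triangulation of $\cave{v_1}\vee\cave{v_2}$. I would pin this down by checking that the sign pattern of the visibility triangle of $\cave{v_1}\vee\cave{v_2}$ matches the "above/below" relations among the visible edges and crossing edges of $T$: two visible edges from the same block $C_i$ have the other below-ish (the $-1$ entries of the $\cave{}$ part), while a visible edge from the $C_1$-block and one from the $C_2$-block are separated by the seam and hence in the $+1$ part — exactly Proposition~\ref{prop:convexsum} applied to $\cave{v_1}\vee\cave{v_2}$. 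Once the sign patterns are matched, the order-type equivalence from Section~\ref{sec:structure} makes the leftover-region correspondence a genuine bijection, and the triangle count is preserved throughout by construction.
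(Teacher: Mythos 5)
Your overall architecture (split a partial upper triangulation $T$ of $C_1 \vee C_2$ into the part inside $C_1$, the part inside $C_2$, and the seam-crossing part, then recover $T$ by gluing) is exactly the bijection the paper intends, which it only describes as ``taking the union of all triangles''. However, the step you yourself flag as the crux is supported by a claim that is false. You argue that the leftover structure is a partial upper triangulation of $\cave{v_1}\vee\cave{v_2}$ because the sign pattern of $V(\cave{v_1}\vee\cave{v_2})$ matches the above/below relations among the visible-curve vertices, in particular that two visible edges from the same block sit in concave position (the $-1$ entries). Take $C_1=\vex{2}$, $C_2=\cOne$ and let $T_1$ be the trivial partial upper triangulation of $C_1$ ($v_1=2$): its visible vertices are $p_0,p_1,p_2$ with $p_1$ \emph{below} the segment $p_0p_2$, so they form a convex turn, and $p_0p_2$ is an upper edge of $C_1\vee C_2$ lying inside your leftover region --- the opposite of the $-1$ entry of $\cave{2}$. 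So the visible vertices do not have the order type of $\cave{v_1}\vee\cave{v_2}$, and the appeal to order-type equivalence cannot by itself establish that leftover triangulations correspond exactly to partial upper triangulations of $\cave{v_1}\vee\cave{v_2}$. (Your parenthetical description of the visible edges of $T_1$ as ``the edges not strictly below any crossing edge of $T$'' is also incorrect: for $C_1=\cave{2}$, $C_2=\cOne$ and the triangulation with crossing edges $p_1p_3$ and $p_0p_3$, the chain edge $p_1p_2$ is a visible edge of $T_1$ yet lies strictly below the crossing edge $p_0p_3$.)

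What actually makes the correspondence work is different and still needs an argument. In the decomposition direction every edge of $T$ with both endpoints in one block is absorbed into that block's restriction, so the leftover triangles are precisely those containing a seam-crossing edge $p_ip_j$ with $i<n_1<j$; in particular the would-be troublesome block-internal upper edges (like $p_0p_2$ above) simply never occur in the leftover. Since all seam-crossing edges pass above $p_{n_1}$ and are pairwise noncrossing, they form a vertically ordered stack; one checks that the bottommost one spans exactly one visible edge on each side of $p_{n_1}$, that consecutive ones share an endpoint while the other endpoint advances by exactly one visible edge of $W_1$ or $W_2$ (otherwise the face between them is not a triangle), and that the block-internal edges of these seam triangles lie on the upper envelopes $W_1,W_2$ (a short argument: an edge of $T$ strictly above them would have to cross the next seam edge). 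This staircase description coincides, purely combinatorially, with the description of partial upper triangulations of $\cave{v_1}\vee\cave{v_2}$, whose only upper edges are the pairs spanning the middle vertex; conversely any such staircase can be drawn on the actual points because every seam-crossing edge of $C_1\vee C_2$ lies above all points between its endpoints, hence above the relevant portions of $W_1\cup W_2$ and above the previous edge of the stack. With these pieces supplied your decomposition/recomposition is a correct triangle-preserving bijection; with the sign-pattern claim as stated, the central step does not stand.
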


\begin{figure}
  \centering
  \begin{tikzpicture}
    \newcommand{\totalY}{2.5/2}
    \def\xshift{90}
    \newcommand{\rad}{\totalY}
    \newcommand{\radd}{0.89*\rad}
    \newcommand{\ang}{360/8}
    \def\drawleft{
      \node[point] (p0) at (-4*\ang:\rad) {};
      \node[point] (p1) at (-3.5*\ang:\radd) {};
      \node[point] (p2) at (-3*\ang:\rad) {};
      \node[point] (p3) at (-2.5*\ang:\radd) {};
    }
    \def\drawright{
      \node[point] (q0) at (-0*\ang:\rad) {};
      \node[point] (q1) at (-0.5*\ang:\radd) {};
      \node[point] (q2) at (-1*\ang:\rad) {};
      \node[point] (q3) at (-1.5*\ang:\rad) {};
    }
    \def\drawbot{
      \node[point] (m) at (-2*\ang:\rad) {};
    }
    \begin{scope}[xshift=-0.3*\xshift]
      \drawleft
      \drawbot
      \begin{scope}[on background layer]
        \draw[edge,myblue,semithick] (p3) -- (p0);
        \draw[chain,myblue] (p3) -- (m);
        \draw[edge] (p1) -- (p3);
        \draw[chain] (p0) -- (p1) -- (p2) -- (p3);
      \end{scope}
      \node at (-0.5,-1.75) {$C_1$};
    \end{scope}%
    \begin{scope}[xshift=0*\xshift]
      \drawright
      \drawbot
      \begin{scope}[on background layer]
        \draw[chain,myred] (q0) -- (q1) -- (q2);
        \draw[edge,myred,semithick] (m) -- (q2);
        \draw[chain] (q2) -- (q3) -- (m);
      \end{scope}
      \node at (0.5,-1.75) {$C_2$};
    \end{scope}%
    \begin{scope}[xshift=1*\xshift]
      \node[point] (p0) at (-\rad, 0) {};
      \node[point] (p3) at (-0.45*\rad, -0.45*\rad) {};
      \node[point] (m) at (0, -\rad) {};
      \node[point] (q2) at (0.29*\rad, -0.64*\rad) {};
      \node[point] (q1) at (0.64*\rad, -0.29*\rad) {};
      \node[point] (q0) at (\rad, 0) {};
      \begin{scope}[on background layer]
        \draw[chain,myblue] (p0) -- (p3) -- (m);
        \draw[chain,myred] (m) -- (q2) -- (q1) -- (q0);
        \draw[edge] (q2) -- (p3) -- (q1)    (p3) -- (q0);
      \end{scope}
      \node at (0,-1.75) {$\cave{2} \vee \cave{3}$};
    \end{scope}%
    \begin{scope}[xshift=2.2*\xshift]
      \drawleft
      \drawbot
      \drawright
      \begin{scope}[on background layer]]
        \draw[edge,myblue,semithick] (p3) -- (p0);
        \draw[chain,myblue] (p3) -- (m);
        \draw[edge] (p1) -- (p3);
        \draw[chain] (p0) -- (p1) -- (p2) -- (p3);
        \draw[chain,myred] (q0) -- (q1) -- (q2);
        \draw[edge,myred,semithick] (m) -- (q2);
        \draw[chain] (q2) -- (q3) -- (m);
        \draw[edge] (q2) -- (p3) -- (q1)    (p3) -- (q0);
      \end{scope}
      \node at (0,-1.75) {$C_1 \vee C_2$};
    \end{scope}%
  \end{tikzpicture}
  \caption{
    From left to right, the respective partial upper triangulations $T_1$ of $C_1$, $T_2$ of $C_2$, $T_3$ of $\cave{2} \vee \cave{3}$, and the resulting partial upper triangulation of $C_1 \vee C_2$ as in \cref{prop:convexsumpartialtriangulation}.
  }
  \label{fig:concavesumdecomp}
\end{figure}
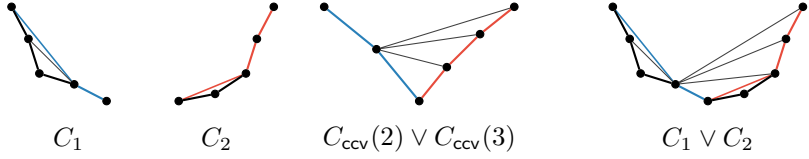

This bijection is defined by taking the union of all triangles,
see \cref{fig:concavesumdecomp}.
The proposition then directly implies the following equation for
the upper triangulation polynomial.
\begin{lemma} \label{lem:convexsum}
Let $C_1$ and $C_2$ be chains with $n_1$ and $n_2$ chain edges, respectively. Then,
\[
T_{C_1 \vee C_2}(x) = \sum_{k_1=0}^{n_1-1} \sum_{k_2=0}^{n_2-1} t_{k_1}(C_1) \cdot t_{k_2}(C_2) \cdot x^{k_1+k_2} \cdot T_{\cave{n_1 - k_1} \, \vee\, \cave{n_2 - k_2}}(x).
\]
\end{lemma}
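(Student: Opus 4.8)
The plan is to read the identity off directly from the triangle-preserving bijection of \cref{prop:convexsumpartialtriangulation}, the only extra work being to keep track of how many triangles each piece contributes. First I would fix, for a partial upper triangulation $T$ of $C_1 \vee C_2$, the corresponding triple $(T_1, T_2, T_3)$, where $T_1$ is a partial upper triangulation of $C_1$ with $v_1$ visible edges, $T_2$ one of $C_2$ with $v_2$ visible edges, and $T_3$ one of $\cave{v_1} \vee \cave{v_2}$. The key bookkeeping fact, already noted in the text, is that a partial upper triangulation of a chain with $m$ chain edges and $j$ triangles has exactly $m - j$ visible edges. Writing $k_1$ and $k_2$ for the numbers of triangles of $T_1$ and $T_2$, this gives $v_1 = n_1 - k_1$ and $v_2 = n_2 - k_2$, so $T_3$ ranges precisely over the partial upper triangulations of the \emph{fixed} chain $\cave{n_1 - k_1} \vee \cave{n_2 - k_2}$.

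Next I would invoke that the bijection is \emph{triangle-preserving}: if $k_3$ denotes the number of triangles of $T_3$, then $T$ has $k_1 + k_2 + k_3$ triangles. Summing over all partial upper triangulations $T$ of $C_1 \vee C_2$ with a prescribed number $k$ of triangles therefore yields
\[
t_k(C_1 \vee C_2) = \sum_{k_1 + k_2 + k_3 = k} t_{k_1}(C_1)\, t_{k_2}(C_2)\, t_{k_3}\bigl(\cave{n_1 - k_1} \vee \cave{n_2 - k_2}\bigr),
\]
where $k_1$ runs over $0, \dots, n_1 - 1$ and $k_2$ over $0, \dots, n_2 - 1$ (the admissible triangle counts for chains with $n_1$ and $n_2$ chain edges), and $k_3$ over the remaining admissible range $0, \dots, (n_1 - k_1) + (n_2 - k_2) - 1$.

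Finally I would multiply by $x^k$, sum over all $k$, and reorganize the triple sum as an outer double sum over $k_1, k_2$ with an inner sum over $k_3$. Pulling out the factor $t_{k_1}(C_1)\, t_{k_2}(C_2)\, x^{k_1 + k_2}$, the inner sum $\sum_{k_3} t_{k_3}(\cave{n_1 - k_1} \vee \cave{n_2 - k_2})\, x^{k_3}$ is by definition $T_{\cave{n_1 - k_1}\, \vee\, \cave{n_2 - k_2}}(x)$, giving exactly the claimed formula.

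I do not expect a genuine obstacle here: all the geometric substance is packaged in \cref{prop:convexsumpartialtriangulation}, so the only points requiring care are the summation ranges and the observation that $v_1 = n_1 - k_1 \geq 1$ and $v_2 = n_2 - k_2 \geq 1$, so that the concave chains $\cave{v_1}$ and $\cave{v_2}$ are always well-defined. If anything is slightly delicate, it is the triangle-preserving claim itself — that the triangles of $T_1$, $T_2$, $T_3$ are pairwise disjoint and together tile the region under the visible curve of $C_1 \vee C_2$ — but that belongs to the proof of the proposition rather than to this lemma.
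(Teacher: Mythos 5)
Your proposal is correct and follows essentially the same route as the paper: the paper likewise derives the identity directly from the triangle-preserving bijection of \cref{prop:convexsumpartialtriangulation}, together with the observation that a partial upper triangulation with $k$ triangles has $n-k$ visible edges. Your write-up simply spells out the bookkeeping (grouping by $(k_1,k_2)$ and recognizing the inner sum over $k_3$ as $T_{\cave{n_1-k_1}\,\vee\,\cave{n_2-k_2}}(x)$) that the paper leaves implicit.
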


Let us consider the special case of a convex sum of two concave chains with $n_1$ and $n_2$ chain edges, respectively.
Note that any partial upper triangulation of such a chain
has at most one upper edge that is visible. Summing over all
possibilities for that edge and using the formula from the analysis of the classic double chain from~\cite{garcia2000}, we get
\begin{align*}
T_{\cave{n_1} \, \vee\, \cave{n_2}}(x) = 1 + \sum_{l=1}^{n_1} \sum_{r=1}^{n_2} \binom{l+r-2}{l-1} x^{l+r-1}.
\end{align*}

Combining the above equation with \cref{lem:convexsum} allows us to compute $T_{C_1 \vee C_2}(x)$
from $T_{C_1}(x)$ and $T_{C_2}(x)$.
Furthermore, by comparing the coefficients in the formulas from \cref{lem:concavesum,lem:convexsum}, we get the following obvious but important fact.
\begin{corollary} \label{coro:vexmorecave}
$C_1 \vee C_2$ has at least as many (partial) upper triangulations as $C_1 \wedge C_2$. That is,
\begin{align*}
\forall k\colon t_k(C_1 \vee C_2) \geq t_k(C_1 \wedge C_2),&& U(C_1 \vee C_2) \geq U(C_1 \wedge C_2).
\end{align*}
\end{corollary}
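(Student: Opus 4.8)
The plan is to prove \cref{coro:vexmorecave} by a direct coefficient comparison between the formulas of \cref{lem:concavesum} and \cref{lem:convexsum}, exploiting the fact that a convex sum is ``the concave sum plus extra freedom''. Concretely, I would start from
\[
T_{C_1 \vee C_2}(x) = \sum_{k_1=0}^{n_1-1}\sum_{k_2=0}^{n_2-1} t_{k_1}(C_1)\, t_{k_2}(C_2)\, x^{k_1+k_2}\, T_{\cave{n_1-k_1}\,\vee\,\cave{n_2-k_2}}(x),
\]
and observe that the analogous identity for the concave sum is just
\[
T_{C_1 \wedge C_2}(x) = \sum_{k_1=0}^{n_1-1}\sum_{k_2=0}^{n_2-1} t_{k_1}(C_1)\, t_{k_2}(C_2)\, x^{k_1+k_2},
\]
since by \cref{lem:concavesum} we have $T_{C_1\wedge C_2}(x)=T_{C_1}(x)\cdot T_{C_2}(x)=\bigl(\sum_{k_1} t_{k_1}(C_1)x^{k_1}\bigr)\bigl(\sum_{k_2} t_{k_2}(C_2)x^{k_2}\bigr)$. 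Thus the two expressions differ only in the extra factor $T_{\cave{n_1-k_1}\,\vee\,\cave{n_2-k_2}}(x)$ appearing in the convex-sum formula, whereas the concave-sum formula has the constant $1$ in its place.

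The key step is therefore to show that this extra factor dominates the constant $1$ coefficient-wise, i.e. that for any $m_1, m_2 \geq 0$ every coefficient of $T_{\cave{m_1}\,\vee\,\cave{m_2}}(x)$ is a nonnegative integer and the constant term is exactly $1$. Both facts are immediate: $t_0$ of any chain is $1$ (as noted right after the definition of the upper triangulation polynomial) and all $t_k$ are nonnegative integers since they count partial upper triangulations. Hence $T_{\cave{m_1}\,\vee\,\cave{m_2}}(x) = 1 + (\text{polynomial with nonnegative coefficients})$. Multiplying a monomial $t_{k_1}(C_1)\,t_{k_2}(C_2)\,x^{k_1+k_2}$ with nonnegative coefficient by $T_{\cave{n_1-k_1}\,\vee\,\cave{n_2-k_2}}(x)$ therefore yields, in each degree, at least the contribution it would make when multiplied by $1$. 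Summing over all $(k_1,k_2)$ and comparing the coefficient of $x^k$ on both sides gives $t_k(C_1\vee C_2)\geq t_k(C_1\wedge C_2)$ for every $k$. The statement about $U$ then follows by taking $k=n_1+n_2-1$, since the leading coefficient of $T_C(x)$ equals $U(C)$; alternatively it follows directly from $U(C_1\vee C_2)\geq U(C_1)U(C_2)=U(C_1\wedge C_2)$ via \cref{lem:concavesum}.

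There is essentially no obstacle here — the corollary is labelled ``obvious but important'' precisely because, once \cref{lem:concavesum,lem:convexsum} are in hand, it reduces to the trivial observation that $T_{\cave{m_1}\,\vee\,\cave{m_2}}(x) \succeq 1$ coefficient-wise with matching constant term. The only point worth stating carefully is the bookkeeping that the additivity $T_{C_1\wedge C_2} = T_{C_1}\cdot T_{C_2}$ makes the concave-sum formula formally identical to the convex-sum formula with the factor $T_{\cave{\bullet}\vee\cave{\bullet}}$ replaced by $1$; after that, term-by-term nonnegativity of the difference closes the argument.
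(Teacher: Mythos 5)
Your proof is correct and follows the same route as the paper: the paper also obtains \cref{coro:vexmorecave} by comparing coefficients in \cref{lem:concavesum,lem:convexsum}, noting that the convex-sum formula carries the extra factor $T_{\cave{n_1-k_1}\,\vee\,\cave{n_2-k_2}}(x)$, which has constant term $1$ and nonnegative coefficients. Your filled-in details (including deducing the $U$-inequality from the leading coefficient, or from $U(C_1\vee C_2)\geq U(C_1)U(C_2)$) are exactly the intended argument.
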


Finally, note that the two chains $C_1 \vee C_2$ and $C_2 \vee C_1$
can be quite different from a geometric point of view. But in terms of the
number of triangulations, they are the same.
\begin{corollary} \label{coro:triangcommutative}
For any two chains $C_1$ and $C_2$, we have
\begin{align*}
T_{C_1 \vee C_2}(x) = T_{C_2 \vee C_1}(x), && T_{C_1 \wedge C_2}(x) = T_{C_2 \wedge C_1}(x).
\end{align*}
\end{corollary}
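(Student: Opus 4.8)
The statement to prove is \cref{coro:triangcommutative}, namely that $T_{C_1 \vee C_2}(x) = T_{C_2 \vee C_1}(x)$ and $T_{C_1 \wedge C_2}(x) = T_{C_2 \wedge C_1}(x)$. The plan is to derive both identities purely algebraically from the explicit formulas already established, rather than attempting a geometric bijection between the (genuinely different) point sets $C_1 \vee C_2$ and $C_2 \vee C_1$.

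First I would dispense with the concave case, which is immediate: \cref{lem:concavesum} gives $T_{C_1 \wedge C_2}(x) = T_{C_1}(x)\cdot T_{C_2}(x)$, and since multiplication of polynomials is commutative, this equals $T_{C_2}(x)\cdot T_{C_1}(x) = T_{C_2 \wedge C_1}(x)$. No work is needed beyond citing the lemma.

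For the convex case, I would start from \cref{lem:convexsum}, which expresses
\[
T_{C_1 \vee C_2}(x) = \sum_{k_1=0}^{n_1-1} \sum_{k_2=0}^{n_2-1} t_{k_1}(C_1)\, t_{k_2}(C_2)\, x^{k_1+k_2}\, T_{\cave{n_1-k_1}\,\vee\,\cave{n_2-k_2}}(x),
\]
and the analogous expression for $T_{C_2\vee C_1}(x)$ obtained by swapping the roles of the two chains (summing $t_{k_2}(C_2)\, t_{k_1}(C_1)\, x^{k_1+k_2}\, T_{\cave{n_2-k_2}\,\vee\,\cave{n_1-k_1}}(x)$). Comparing the two sums term by term under the correspondence $(k_1,k_2)\leftrightarrow(k_2,k_1)$, the coefficients $t_{k_1}(C_1)\, t_{k_2}(C_2)$ and the monomials $x^{k_1+k_2}$ already match by commutativity of integer multiplication; the only thing left to check is that
\[
T_{\cave{a}\,\vee\,\cave{b}}(x) = T_{\cave{b}\,\vee\,\cave{a}}(x)
\]
for all positive integers $a,b$. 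But this follows directly from the closed form displayed just before \cref{coro:vexmorecave}, namely $T_{\cave{a}\,\vee\,\cave{b}}(x) = 1 + \sum_{l=1}^{a}\sum_{r=1}^{b}\binom{l+r-2}{l-1}x^{l+r-1}$: swapping $a\leftrightarrow b$ amounts to the reindexing $l\leftrightarrow r$, and $\binom{l+r-2}{l-1} = \binom{l+r-2}{r-1}$, so the double sum is symmetric in $a$ and $b$. Substituting this symmetry back into the two expansions of \cref{lem:convexsum} shows the sums are equal term by term, hence $T_{C_1\vee C_2}(x) = T_{C_2\vee C_1}(x)$.

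There is no real obstacle here; the corollary is, as the text says, "obvious but important," and the entire content is bookkeeping: reducing commutativity of the convex sum to the base case $\cave{a}\vee\cave{b}$, and then observing that the base case is visibly symmetric because the binomial coefficient $\binom{l+r-2}{l-1}$ is unchanged under $l\leftrightarrow r$. The only point requiring the slightest care is making sure the summation ranges in \cref{lem:convexsum} transform correctly under the swap $(k_1,k_2)\mapsto(k_2,k_1)$, which they do since the two sums range independently over $0\le k_i\le n_i-1$. Taking leading coefficients (or noting $T_C(x)$ determines $U(C)$) then also recovers $U(C_1\vee C_2)=U(C_2\vee C_1)$ and $U(C_1\wedge C_2)=U(C_2\wedge C_1)$ as a side remark, consistent with $\TR(C)=U(C)\cdot L(C)$ and $L(C)=U(\flip{C})$ together with De Morgan.
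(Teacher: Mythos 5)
Your proposal is correct and matches the paper's (implicit) reasoning: the paper states this as an immediate consequence of \cref{lem:concavesum} and \cref{lem:convexsum} together with the explicit formula for $T_{\cave{n_1}\,\vee\,\cave{n_2}}(x)$, which is symmetric in $n_1,n_2$ exactly because $\binom{l+r-2}{l-1}=\binom{l+r-2}{r-1}$. Your write-up simply spells out this bookkeeping, so there is nothing to add.
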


\subsection{Dynamic Programming}

In this subsection, we show how to use dynamic programming in order to speed up
the computations for a convex sum.
To simplify the analysis, we assume a computational model where all
additions and multiplications take only constant time.

\begin{proposition} \label{prop:convexsumdp}
Let $C_1$ and $C_2$ be chains with $n_1$ and $n_2$ chain edges, respectively.
Given the coefficients of $T_{C_1}(x)$ and $T_{C_2}(x)$,
we can compute $T_{C_1 \vee C_2}(x)$ in $O(n_1 n_2)$ time.
\end{proposition}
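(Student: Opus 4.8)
The plan is to evaluate the formula from \cref{lem:convexsum} using a single dynamic programming pass, exploiting the fact that the inner factor $T_{\cave{n_1-k_1}\,\vee\,\cave{n_2-k_2}}(x)$ has the explicit double-sum form derived just above. First I would substitute that explicit form into \cref{lem:convexsum}, so that $T_{C_1 \vee C_2}(x)$ becomes a quadruple sum over $k_1, k_2, l, r$ (with $1 \le l \le n_1-k_1$ and $1 \le r \le n_2-k_2$) of terms $t_{k_1}(C_1)\,t_{k_2}(C_2)\binom{l+r-2}{l-1} x^{k_1+k_2+l+r-1}$, plus the ``$1$'' contribution from each $(k_1,k_2)$ pair which just reconstructs $T_{C_1}(x)\cdot T_{C_2}(x)$ (computable in $O(n_1 n_2)$ time directly). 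The goal is to organize the remaining quadruple sum so that no more than $O(n_1 n_2)$ arithmetic operations are needed.

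The key reindexing is to let $a = k_1 + l$ (ranging over $1,\dots,n_1$) and $b = k_2 + r$ (ranging over $1,\dots,n_2$), so the exponent of $x$ is $a + b - 1$. For fixed $a$, define $A_a(y) = \sum_{k_1=0}^{a-1} t_{k_1}(C_1)\, y^{a-k_1-1} \cdot (\text{something encoding the binomial})$; the trouble is that the binomial coefficient $\binom{l+r-2}{l-1}$ couples $l = a-k_1$ and $r = b-k_2$, so it does not factor across the two chains. The standard trick is to recognize $\sum \binom{l+r-2}{l-1} x^{l+r-1}$ as coming from the convex-sum-of-two-concave-chains structure, i.e. partial upper triangulations of $\cave{n_1}\vee\cave{n_2}$ with exactly one visible upper edge connecting position $l$ from the left block to position $r$ from the right block. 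Concretely I would precompute, for each $a \in \{1,\dots,n_1\}$, the quantity $\alpha_a = \sum_{k_1=0}^{a-1} t_{k_1}(C_1) \binom{a-1-k_1 + \star}{\cdot}$ — but since the binomial's second argument depends on $r$ too, the clean way is to instead fix the split point: sum over the left contribution $\sum_{k_1} t_{k_1}(C_1) \binom{(a-1-k_1)+(b-1-k_2)}{a-1-k_1}$ and note that by Vandermonde-type convolution this still does not separate. So the honest approach is: for each value of $m := l + r - 1$ (the ``height'' of the visible edge pair, $m$ ranging up to $n_1 + n_2 - 1$) and each admissible split of $m$ into $l + r$, accumulate — but that is cubic. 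The resolution, which I expect to be the crux, is to process the two chains \emph{asymmetrically via prefix sums}: compute $P_i = \sum_{k_1 \le i} t_{k_1}(C_1) x^{k_1} \cdot (\text{column-}i\text{ vector of binomials})$ incrementally, so that moving from $i$ to $i+1$ costs $O(n_2)$ using Pascal's recurrence $\binom{l+r-2}{l-1} = \binom{l+r-3}{l-2} + \binom{l+r-3}{l-1}$, giving $O(n_1 n_2)$ total. The precise bookkeeping I would write out is: maintain an array indexed by $r \in \{1,\dots,n_2\}$ holding $\sum_{k_1} t_{k_1}(C_1) x^{k_1} \binom{(\text{left height}) + r - 2}{r-1}$ as the left height grows, update it in $O(n_2)$ per step by one Pascal shift plus adding the new $t_{k_1}(C_1)$ term, then convolve against the $t_{k_2}(C_2)$ coefficients.

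\textbf{Main obstacle.} The delicate part is showing that the binomial recurrence lets each of the $O(n_1)$ (or $O(n_2)$) outer iterations be done in $O(n_2)$ (resp. $O(n_1)$) time rather than $O(n_1 n_2)$ — i.e. that the ``visible-edge'' contribution can be maintained incrementally instead of recomputed. Once the right quantity to carry in the DP table is identified (essentially the vector $\bigl(\sum_{k_1} t_{k_1}(C_1)\binom{h+r-2}{r-1} x^{k_1}\bigr)_{r=1}^{n_2}$ for the current left-height $h$, which updates by a single application of Pascal's identity plus absorbing one new coefficient), the update is a routine $O(n_2)$ operation, the final convolution against $T_{C_2}$'s coefficients is another $O(n_1 n_2)$, and adding back $T_{C_1}(x)T_{C_2}(x)$ for the constant-term contributions is $O(n_1 n_2)$. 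Summing these gives the claimed $O(n_1 n_2)$ bound under the stated unit-cost arithmetic model. I would present the algorithm explicitly as nested loops with the invariant stated, then verify correctness by matching it term-by-term against \cref{lem:convexsum} with the explicit double-chain formula substituted.
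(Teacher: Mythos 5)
Your starting point is sound: substituting the explicit formula for a convex sum of two concave chains into \cref{lem:convexsum}, reindexing by $a=k_1+l$ and $b=k_2+r$, and handling the ``$1$'' terms separately as $T_{C_1}(x)\cdot T_{C_2}(x)=T_{C_1\wedge C_2}(x)$ is a faithful algebraic restatement of the problem, and your instinct that Pascal's identity drives an incremental update is exactly the algebraic shadow of the paper's argument. But the concrete bookkeeping stops one step short of $O(n_1n_2)$, precisely at the point you flag as the crux. Under one reading, your array entries $\sum_{k_1}t_{k_1}(C_1)\,x^{k_1}\binom{h+r-2}{r-1}$ are polynomials in $x$ of length up to $n_1$, so a single ``Pascal shift'' of the whole array already costs $\Theta(n_1n_2)$ and iterating over the left height gives $\Theta(n_1^2n_2)$. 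Under the scalar reading, you are maintaining $H(a,r)=\sum_{k_1<a}t_{k_1}(C_1)\binom{(a-k_1-1)+(r-1)}{r-1}$, which indeed fills in $O(n_1n_2)$ time by Pascal; but the closing step ``then convolve against the $t_{k_2}(C_2)$ coefficients'' is not a single convolution, because in \cref{lem:convexsum} the range of $r$ is $1\le r\le n_2-k_2$: the weight $t_{k_2}(C_2)$ is coupled to $r$, so aggregating $H$ over $a+r$ and convolving once overcounts, while convolving separately for each fixed $a$ (or each fixed $r$) costs $\Theta(n_1n_2^2)$.

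What is missing is that the DP state must be two-dimensional in both chains simultaneously. Set $F(a,b)=\sum_{k_1<a}\sum_{k_2<b}t_{k_1}(C_1)\,t_{k_2}(C_2)\binom{(a-k_1-1)+(b-k_2-1)}{a-k_1-1}$, the total bridged contribution to the coefficient of $x^{a+b-1}$; Pascal's identity yields exactly $F(a,b)=t_{a-1}(C_1)\,t_{b-1}(C_2)+F(a-1,b)+F(a,b-1)$ with $F(0,\cdot)=F(\cdot,0)=0$, and filling this $n_1\times n_2$ table, summing its entries into the appropriate coefficients, and adding $T_{C_1\wedge C_2}(x)$ gives the claimed bound. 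This is the paper's proof in different clothes: with $l=n_1-a$ and $r=n_2-b$, the table $F$ is the paper's $\DP[l][r]$ (partial upper triangulations with $l$ visible edges in $C_1$, the bridge, and $r$ visible edges in $C_2$), and the same three-term recurrence is derived there combinatorially by removing the bridge rather than algebraically via Pascal. So the mechanism you suspect is correct, but as written the proposal does not yet contain the $O(n_1n_2)$ algorithm; supplying the joint recurrence for $F$ (or the bridge-removal argument) is the missing step.
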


Recall that by \cref{theo:chainformulaexists}, we can write any chain
$C$ as a formula involving only convex sums, concave sums, and primitive
chains with only one chain edge.
Therefore, using \cref{prop:convexsumdp} for convex sums and
\cref{lem:concavesum} for concave sums, we are able to compute $T_C(x)$
in quadratic time.
Clearly, this proves \cref{thm:algorithm} from the introduction.

\begin{proof}[Proof of \cref{prop:convexsumdp}]
Observe that every partial upper triangulation of $C_1 \vee C_2$
either corresponds to a partial upper triangulation of $C_1 \wedge C_2$,
or it has a unique visible upper edge that connects a vertex of $C_1$
with a vertex of $C_2$. Let us call this edge the \emph{bridge}.
Let further~$\DP[l][r]$ be the number of partial upper triangulations
whose visible edges consist of~$l$ visible edges in $C_1$, followed by the bridge, followed by~$r$ visible edges in $C_2$.
Then,
\[
T_{C_1 \vee C_2}(x) = T_{C_1 \wedge C_2}(x) + \sum_{l=0}^{n_1-1} \sum_{r=0}^{n_2-1} \DP[l][r] \cdot x^{n_1 + n_2-l-r-1}.
\]

To compute the table $\DP$, let us see what happens when we remove the bridge.
We either end up with a partial upper triangulation of $C_1 \wedge C_2$
with $l+1$ and $r+1$ visible edges in $C_1$ and $C_2$, respectively,
or we get a new bridge, which used to be an edge of the triangle below the old bridge.
In the latter case, depending on which of the two possible edges this is,
we end up with one more visible edge in either $C_1$ or $C_2$.
\Cref{fig:bridgethreecases} depicts these three cases. To summarize,
for all $l$ and $r$ ($0 \leq l < n_1, 0 \leq r < n_2$),
\[
\DP[l][r] = t_{n_1-l-1}(C_1) \cdot t_{n_2-r-1}(C_2) + \DP[l+1][r] + \DP[l][r+1],
\]
with the base case $\DP[n_1][r] = \DP[l][n_2] = 0$. Therefore,
filling up the table $\DP$ takes $O(n_1 n_2)$ time, as desired.
\end{proof}

\begin{figure}
  \centering
  \begin{tikzpicture}[xscale=0.5,yscale=0.15]
    \def\xshift{250}
    \def\drawchain{
      \node[point] (p0) at (-3,7/2) {};
      \node[point] (p1) at (-2,0) {};
      \node[point] (p2) at (-1,-1/4) {};
      \node[point] (p3) at (0,-1) {};
      \node[point] (p4) at (1,-1/4) {};
      \node[point] (p5) at (2,0) {};
      \node[point] (p6) at (3,7/2) {};
      \draw[chain] (p0) -- (p1) -- (p2) -- (p3) -- (p4) -- (p5) -- (p6);
      \draw[decoration={brace,mirror,raise=0.35cm},decorate]
        (-3,0) -- (0-.1,0) node [pos=0.5,anchor=north,yshift=-0.55cm] {$C_1$};
      \draw[decoration={brace,mirror,raise=0.35cm},decorate]
        (0+.1,0) -- (3,0) node [pos=0.5,anchor=north,yshift=-0.55cm] {$C_2$};
    }
    \begin{scope}[xshift=0*\xshift]
      \drawchain
      \begin{scope}[on background layer]
        \draw[edge] (p0) -- (p2);
        \draw[edge] (p4) -- (p6);
        \draw[edge,myred,semithick] (p2) -- (p4);
      \end{scope}
    \end{scope}
    \begin{scope}[xshift=1*\xshift]
      \drawchain
      \begin{scope}[on background layer]
        \draw[edge] (p0) -- (p2);
        \draw[edge] (p4) -- (p6);
        \draw[edge,myblue,semithick] (p2) -- (p4);
        \draw[edge,myred,semithick] (p0) -- (p4);
      \end{scope}
    \end{scope}
    \begin{scope}[xshift=2*\xshift]
      \drawchain
      \begin{scope}[on background layer]
        \draw[edge] (p0) -- (p2);
        \draw[edge] (p4) -- (p6);
        \draw[edge] (p2) -- (p4);
        \draw[edge,myblue,semithick] (p0) -- (p4);
        \draw[edge,myred,semithick] (p0) -- (p6);
      \end{scope}
    \end{scope}
  \end{tikzpicture}
  \caption{
    The three cases when removing the bridge from a partial upper
    triangulation of $C_1 \vee C_2$ in the proof of \cref{prop:convexsumdp}.
    On the left, both $C_1$ and $C_2$ gain a visible edge.
    In the middle, only $C_1$ gains a visible edge.
    On the right, only $C_2$ gains a visible edge.
    The current bridge is red, and the edge that becomes
    the new bridge is blue.
  }
  \label{fig:bridgethreecases}
\end{figure}
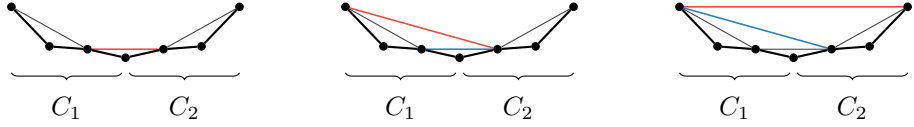

\subsection{Koch Chains}

Recall \cref{def:kochchain} and that the formula for Koch chains
expands to the nested expression
\[
K_s = (K_{s-2} \wedge K_{s-2}) \vee (K_{s-2} \wedge K_{s-2})
\]
with alternating convex and concave sums. This repeated mixing
of the two types of sums appears to make an exact analysis of the
number of triangulations of $K_s$ very difficult.

Instead, we have implemented the quadratic time algorithm from the previous subsection
and used it to compute $T_{K_s}(x)$ and $T_{\flip{K_s}}(x)$ for all $s \leq 21$. To deal with the
exponentially growing coefficients, we rely on a custom floating point type
with a $64$ bit mantissa and a $32$ bit exponent from the boost multiprecision library.
As only additions and multiplications are involved,
we do not have to deal with numerical issues; in fact, the rounding errors
grow at most linearly. In addition, we make use of multi-threading and take advantage
of symmetries of $K_s$ for a constant factor speed-up. This allows us
to compute $T_{K_{21}}(x)$ in around a day on
a regular workstation (Intel i7-6700HQ, 2.6GHz).

\Cref{tab:kochchainsnumbers} from the introduction lists the resulting numbers. For example,
$K_{21}$ has approximately $9.082799^n$ triangulations, where $n = 2^{21}$.
In the next section, we show how the computed coefficients
of $T_{K_{21}}(x)$ can be used to give bounds on $\TR(K_s)$ as $s \to \infty$.

\section{Poly Chains and Twin Chains}
\label{sec:polytwin}

Let $C_0$ be a fixed chain with $m$ chain edges. We want to define
two particular families of chains that can be built from many copies
of $C_0$ via concave and convex sums.

\begin{definition}
For integers $N \geq 1$, the \emph{poly-$C_0$ chains (of length $n = Nm$)} are the chains
\[
\Cpoly(C_0, N) = \underbrace{\flip{C_0} \vee \dots \vee \flip{C_0}}_{\text{$N$ copies}}.
\]
\end{definition}
\begin{definition}
For integers $N \geq 1$, the \emph{twin-$C_0$ chains (of length $n = 2Nm + 1$)} are the chains
\[
\Ctwin(C_0, N) = \flip{\Cpoly(C_0, N)} \vee \cOne \vee \flip{\Cpoly(C_0, N)}.
\]
\end{definition}

Note that both resulting chains are upward chains, as long as $N>1$.
For example, the poly-$\cOne$ chains are the convex chains,
the twin-$\cOne$ chains are the classic double chains,
and the twin-$(\cOne \vee \cOne)$ chains are the double zig-zag chains.

We are interested in the asymptotic behavior of the number of triangulations
of these constructions as $N$ goes to infinity. \Cref{lem:concavesum} allows us to express
the number of lower triangulations in terms of upper triangulations.
\begin{align*}
L(\Cpoly(C_0, N)) &= U(C_0 \wedge \dots \wedge C_0) = U(C_0)^N\\
L(\Ctwin(C_0, N)) &= U(\Cpoly(C_0, N)\wedge \cOne \wedge \Cpoly(C_0, N)) = U(\Cpoly(C_0, N))^2
\end{align*}

For the upper triangulations, we make use of the following general result,
whose proof can be found in the following subsections.

\begin{theorem} \label{theo:polytwinasym}
The poly-$C_0$ chains of length $n$ have $\tTheta(\lambda^{n})$ upper triangulations
(where the $\tTheta$-notation suppresses potential additional factors that are polynomial in $n$),
while the twin-$C_0$ chains of length $n$ have $\tTheta(\tau^{n})$ upper triangulations, where
\begin{align*}
\lambda = \sqrt[m]{\sum_{k=1}^{m}  2^k (k+1) \cdot t_{m-k}(\flip{C_0})},
&&
\tau = \sqrt[m]{\sum_{k=1}^{m} 2^k \cdot t_{m-k}(C_0)}.
\end{align*}
It follows that the twin-$C_0$ chains of length $n$ have $\tTheta((\lambda \tau)^{n})$ 
complete triangulations.
\end{theorem}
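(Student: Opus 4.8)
The plan is to prove \cref{theo:polytwinasym} by setting up linear recurrences for the upper triangulation polynomials of the two families and extracting the exponential growth rate of the relevant coefficient (the leading coefficient, which equals $U$) via a transfer-matrix / generating-function argument. I would first treat the poly chain. Writing $P_N = \Cpoly(C_0,N) = \flip{C_0} \vee P_{N-1}$ and applying \cref{lem:convexsum} together with the explicit formula for $T_{\cave{a}\vee\cave{b}}(x)$, one obtains a recursion expressing the coefficient vector $(t_k(P_N))_k$ in terms of $(t_k(P_{N-1}))_k$ and the fixed data $(t_j(\flip{C_0}))_j$. The key simplification is that for the asymptotics of $U(P_N)$ we only need to track, for each $N$, how many partial upper triangulations have a given number of visible edges in the \emph{last} copy of $\flip{C_0}$ (equivalently, the generating function graded by that count), because the interaction between $P_{N-1}$ and the new copy is governed by the single ``bridge'' structure analyzed in \cref{prop:convexsumdp}. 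This reduces the problem to iterating a fixed linear operator, and $\lambda$ emerges as its spectral radius to the power $1/m$; the factor $2^k(k+1)$ should fall out of the $\cave{}\vee\cave{}$ formula, where $\sum_r \binom{l+r-2}{l-1}$-type sums over a growing block contribute the $2^k$ and the positional freedom of the bridge contributes the extra $(k+1)$.

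Next I would handle the twin chain. Since $\Ctwin(C_0,N) = \flip{P_N} \vee \cOne \vee \flip{P_N}$, I would apply \cref{lem:convexsum} twice (or once, grouping $\cOne \vee \flip{P_N}$), so that $U(\Ctwin(C_0,N))$ is a bilinear expression in the coefficient vectors of $T_{\flip{P_N}}(x)$ — which by \cref{prop:flip} and De Morgan relate to $T_{P_N}(x)$ via the concave-sum structure $P_N$ is built from, giving $t_k(\flip{P_N})$ in terms of products of $t_{k_i}(C_0)$. The $\cave{}\vee\cave{}$ kernel again supplies the binomial weights; summing the geometric-like series in the block sizes produces the cleaner exponential base $\tau = \sqrt[m]{\sum_k 2^k t_{m-k}(C_0)}$ (no $(k+1)$ factor this time, since the two halves meet at the fixed vertex of $\cOne$ rather than along a freely-sliding bridge). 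Throughout, I would phrase the growth extraction as: a nonnegative integer sequence satisfying $a_N = M a_{N-1}$ for a fixed primitive (or at least irreducible-enough) nonnegative matrix $M$ grows like $\rho(M)^N$ up to polynomial factors, which accounts for the $\tTheta$ rather than $\Theta$.

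The final sentence of the theorem is then immediate: by the remarks just before the theorem, $L(\Ctwin(C_0,N)) = U(\Cpoly(C_0,N))^2 = \tTheta(\lambda^{2Nm}) = \tTheta(\lambda^{n})$ (absorbing the $+1$ in $n = 2Nm+1$ into the polynomial slack), and since $\TR = U \cdot L$ we get $\TR(\Ctwin(C_0,N)) = \tTheta(\tau^n) \cdot \tTheta(\lambda^n) = \tTheta((\lambda\tau)^n)$. I would make sure the length bookkeeping ($n = Nm$ versus $n = 2Nm+1$) is done consistently so that the $m$-th roots in $\lambda$ and $\tau$ match the per-block growth correctly.

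The main obstacle I anticipate is the bookkeeping in the convex-sum recursion: \cref{lem:convexsum} convolves \emph{two} grading variables (the triangle counts $k_1,k_2$ of the two summands) with the $\cave{}\vee\cave{}$ polynomial, so the naive transfer operator acts on the full coefficient vector and its dominant eigenvalue is not obviously the clean closed form claimed. The real work is to show that, after summing out the binomial coefficients from the $\cave{}\vee\cave{}$ formula, the operator collapses to something whose spectral radius is exactly $\sum_k 2^k(k+1)t_{m-k}(\flip{C_0})$ — most likely by identifying an explicit dominant eigenvector (a geometric sequence in the block index) and checking positivity/irreducibility so that Perron–Frobenius pins down the rate. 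Verifying that the subdominant contributions really are only polynomial in $n$, and that this survives the two-fold nesting in the twin construction, is the part that needs care rather than cleverness.
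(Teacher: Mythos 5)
There is a genuine gap, and it sits exactly where you flagged it: the ``fixed linear operator'' you want to iterate does not exist. Writing $P_N = \flip{C_0} \vee P_{N-1}$ and applying \cref{lem:convexsum}, the coefficient you ultimately need, $U(P_N)$, is
\[
U(P_N) \;=\; \sum_{k_1,k_2} t_{k_1}(\flip{C_0})\, t_{k_2}(P_{N-1}) \cdot \binom{(m-k_1) + ((N-1)m-k_2) - 2}{\,m-k_1-1\,},
\]
so the weight attached to the new copy depends on the \emph{total} number of visible lower-hull edges accumulated in $P_{N-1}$ (which ranges up to $(N-1)m$), not on any bounded summary such as ``visible edges in the last copy.'' The state you must carry is the full coefficient vector of $T_{P_{N-1}}(x)$, whose dimension grows linearly in $N$, and the binomial kernel of $T_{\cave{a}\vee\cave{b}}$ couples the new block to all of it. Hence the reduction to $a_N = M a_{N-1}$ with a fixed nonnegative matrix $M$, and with it the Perron--Frobenius extraction of $\lambda^m$ as a spectral radius, is not available; the bridge analysis of \cref{prop:convexsumdp} controls one convex sum, but it does not make the interaction local across $N$ blocks. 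Your hoped-for collapse (``$2^k$ from the binomial sums, $(k+1)$ from the sliding bridge, geometric dominant eigenvector'') is precisely the unproved core: the factor $2^k(k+1)$ is the per-copy growth rate of $U(\Cpoly(\vex{k},M))$, which is a nontrivial asymptotic fact, not a by-product of the $\cave{}\vee\cave{}$ formula.

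For comparison, the paper does not iterate an operator at all. It uses \cref{lemm:onesidedsum} (the one-sided specialization of \cref{lem:convexsum}) plus commutativity to expand $T_{\Cpoly(C_0,N)}$ over multinomial types $(a_1,\dots,a_m)$ with the polynomials of generalized double circles $\Cgdc(a_1,\dots,a_m)$ as kernels, and then pins down $U(\Cgdc)$ with matching bounds: the lower bound $\tOmega\bigl(\prod_k (2^k(k+1))^{N_k}\bigr)$ is imported from Asinowski--Krattenthaler--Mansour (\cref{theo:multiconvexlower}), and the upper bound requires the new generating function $\phi_C$ with the multiplicativity $\phi_{C_1\vee C_2} = \tfrac{1-x}{1-2x}\phi_{C_1}\phi_{C_2}$ (\cref{lem:tgfvee}, \cref{thm:multiconvexupper}), proved by a lengthy identity. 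Finally the multinomial sum is collapsed to the closed forms $\bigl(\sum_k 2^k(k+1) t_{m-k}(\flip{C_0})\bigr)^N$ and $\bigl(\sum_k 2^k t_{m-k}(C_0)\bigr)^{2N}$ by the multinomial theorem (upper bounds) and by an entropy/Lagrange-multiplier maximization, \cref{lemm:entropymaximum} (lower bounds); the twin case additionally uses the double-chain binomial $\binom{2mN-k_1-k_2}{mN-k_1}$ and a diagonal restriction $k_1=k_2$. Your concluding step, $\TR = U\cdot L$ with $L(\Ctwin(C_0,N)) = U(\Cpoly(C_0,N))^2$, is correct and matches the paper, but the two exponential rates themselves are not established by your argument.
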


\begin{example}
Let us analyze the poly-$\vex{4}$ chains and twin-$\vex{4}$ chains.
We have
\begin{align*}
T_{\flip{\vex{4}}}(x) = 1, && T_{\vex{4}}(x) = 1 + 3 x + 5 x^2 + 5 x^3,
\end{align*}
which yields $\lambda = \sqrt[4]{80}$ and $\tau = \sqrt[4]{70}$.
Therefore, the twin-$\vex{4}$ chains have
$\tTheta(\sqrt[4]{5600}^{n})$ triangulations,
where $\sqrt[4]{5600} \approx 8.6506154$.
Note that these chains are the generalized double zig-zag chains from \cite{dumitrescu2013}.
By comparison, the numerical bound there was $\Omega(8.6504^n)$.
\end{example}

In the same way, by using the many coefficients of $T_{K_{s}}(x)$ and $T_{\flip{K_{s}}}(x)$
that we found for all~$s \leq 21$ with our algorithm from the previous section,
we can also compute the corresponding values of $\lambda$ and $\tau$ for poly-Koch chains
and twin-Koch chains.
The results can be found in \cref{tab:polytwinkochchain}.
The last row in particular allows us to analyze twin-$K_{21}$ chains and, therefore, prove
\cref{thm:lowerbound} from the introduction.

\begin{table}
  \centering
  \begin{tabular}{@{}rrlllrrlll@{}}
    \toprule
      $s$ & $m$ & $\lambda$ & $\tau$ & $\lambda\tau$ & $s$ & $m$ & $\lambda$ & $\tau$ & $\lambda\tau$ \\
    \cmidrule(r){1-5}\cmidrule(l){6-10}
       0 &       1 & 4.0      & 2.0      & 8.0      & 11 &    2048 & 2.894607 & 3.137597 & 9.082113 \\
       1 &       2 & 3.464101 & 2.449489 & 8.485281 & 12 &    4096 & 3.142184 & 2.890518 & 9.082542 \\
       2 &       4 & 3.534118 & 2.449489 & 8.656787 & 13 &    8192 & 2.892806 & 3.139805 & 9.082850 \\
       3 &       8 & 3.124013 & 2.841004 & 8.875335 & 14 &   16384 & 3.141127 & 2.891623 & 9.082957 \\
       4 &      16 & 3.290140 & 2.721989 & 8.955727 & 15 &   32768 & 2.892276 & 3.140445 & 9.083034 \\
       5 &      32 & 2.974654 & 3.033787 & 9.024469 & 16 &   65536 & 3.140819 & 2.891940 & 9.083061 \\
       6 &      64 & 3.191872 & 2.835019 & 9.049019 & 17 &  131072 & 2.892123 & 3.140627 & 9.083081 \\
       7 &     128 & 2.919234 & 3.106209 & 9.067752 & 18 &  262144 & 3.140731 & 2.892029 & 9.083087 \\
       8 &     256 & 3.157095 & 2.874272 & 9.074351 & 19 &  524288 & 2.892080 & 3.140677 & 9.083092 \\
       9 &     512 & 2.900536 & 3.130176 & 9.079191 & 20 & 1048576 & 3.140706 & 2.892054 & 9.083094 \\
      10 &    1024 & 3.145716 & 2.886746 & 9.080887 & 21 & 2097152 & 2.892068 & 3.140692 & 9.083095 \\
    \bottomrule
  \end{tabular}
  \caption{
    The computed values of $\lambda$ and $\tau$ as in \cref{theo:polytwinasym} for the special case $C_0 = K_s$, where each entry is rounded down to six decimal places.
  }
  \label{tab:polytwinkochchain}
\end{table}

\begin{corollary}
The twin-$K_{21}$ chains of length $n$ have $\tTheta(\tau^{n})$ upper triangulations, $\tTheta(\lambda^{n})$ lower triangulations, and hence $\tTheta((\lambda\tau)^{n})$ complete triangulations, where $\tau \approx 3.140692$, $\lambda \approx 2.892068$, and $\lambda\tau \approx 9.083095$.
\end{corollary}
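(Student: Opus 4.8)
The plan is to instantiate \cref{theo:polytwinasym} with $C_0 = K_{21}$, so that $m = 2^{21}$ and the twin-$K_{21}$ chains in question are exactly the chains $C_N$ of \cref{thm:lowerbound}. The theorem then immediately yields $\tTheta(\tau^n)$ upper triangulations and $\tTheta((\lambda\tau)^n)$ complete triangulations, with $\lambda$ and $\tau$ given by the stated radical expressions. For the lower triangulations I would invoke the identity $L(\Ctwin(C_0,N)) = U(\Cpoly(C_0,N))^2$ recorded just before \cref{theo:polytwinasym}: since the poly-$K_{21}$ chain has length $Nm$ and, by the theorem, $\tTheta(\lambda^{Nm})$ upper triangulations, squaring gives $\tTheta(\lambda^{2Nm}) = \tTheta(\lambda^{n-1}) = \tTheta(\lambda^n)$ lower triangulations, the constant $\lambda$ and the extra polynomial factor causing no trouble. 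The product formula $\TR(C_N) = U(C_N)\cdot L(C_N)$ then re-confirms the $\tTheta((\lambda\tau)^n)$ count for complete triangulations.

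What remains is the numerical evaluation of
\[
\lambda = \sqrt[m]{\sum_{k=1}^{m} 2^k (k+1)\, t_{m-k}(\flip{K_{21}})}
\qquad\text{and}\qquad
\tau = \sqrt[m]{\sum_{k=1}^{m} 2^k\, t_{m-k}(K_{21})},
\]
for which I would supply the coefficients of $T_{K_{21}}(x)$ and $T_{\flip{K_{21}}}(x)$ that were produced in \cref{sec:triangulations} by the quadratic-time algorithm. Evaluating these $2^{21}$-term sums and taking the $2^{21}$-th root gives $\tau \approx 3.140692$, $\lambda \approx 2.892068$, and hence $\lambda\tau \approx 9.083095$, which is precisely the content of the last row of \cref{tab:polytwinkochchain}.

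The only genuine obstacle is bounding the numerical error: the coefficients $t_j(K_{21})$ are enormous, so one must argue that forming the weighted sum and then extracting a $2^{21}$-th root preserves the first six decimal digits. Since the whole computation uses only additions and multiplications, the relative rounding error of the custom floating-point type (64-bit mantissa, 32-bit exponent) accumulates at most linearly in the $O(m)$ operations involved, and the final root extraction only shrinks it; hence the displayed values are correct to the stated precision. I expect this last point to be essentially bookkeeping rather than a real difficulty, which is why the statement is phrased as a corollary rather than a theorem.
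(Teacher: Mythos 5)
Your proposal is correct and follows essentially the same route as the paper: the corollary is obtained by instantiating \cref{theo:polytwinasym} with $C_0=K_{21}$ (so $m=2^{21}$), handling lower triangulations via the identity $L(\Ctwin(C_0,N))=U(\Cpoly(C_0,N))^2$ stated just before the theorem, and evaluating $\lambda$ and $\tau$ numerically from the coefficients of $T_{K_{21}}(x)$ and $T_{\flip{K_{21}}}(x)$ computed in \cref{sec:triangulations}, exactly as reported in the last row of \cref{tab:polytwinkochchain}. Your remarks on absorbing the factor $\lambda$ from $n=2Nm+1$ and on the linearly accumulating rounding error match the paper's treatment.
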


The next two lemmas, when combined with the first part of \cref{theo:polytwinasym},
can further be used to prove asymptotic bounds for families of
chains that are built from the same $C_0$.

\begin{lemma} \label{lemm:copyboundupper}
Let $C$ be any chain that can be written as a formula involving
convex sums, concave sums and exactly $N$ copies of $C_0$. Then,
\[
U(C_0)^N \leq U(C) \leq U(\Cpoly(\flip{C_0}, N)).
\]
\end{lemma}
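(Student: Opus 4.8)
The plan is to induct on the structure of the formula for $C$, using \cref{lem:concavesum,coro:vexmorecave} for the two sides of the inequality and \cref{theo:polytwinasym}'s underlying machinery only indirectly. For the lower bound $U(C_0)^N \leq U(C)$, I would argue that inside any formula one can always replace a convex sum by the corresponding concave sum without increasing the upper-triangulation count; this is exactly \cref{coro:vexmorecave}, applied repeatedly from the outside in (or as a structural induction on the parse tree). After all convex sums have been turned into concave sums, the formula reduces to $C_0 \wedge \dots \wedge C_0$ with $N$ copies, whose upper triangulation count is $U(C_0)^N$ by \cref{lem:concavesum}. Since each replacement only decreased $U$, the original $C$ has at least $U(C_0)^N$ upper triangulations. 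One subtlety: the copies of $C_0$ in $C$ might occur with flips, i.e.\ the formula could contain $\flip{C_0}$ as well as $C_0$; but a flip inside a subformula can be pushed to the leaves via De Morgan (\cref{lem:algebraiclaws}), and $U$ is invariant under the symmetric exchange $\vee \leftrightarrow \wedge$ that a flip induces only in the sense captured by \cref{coro:vexmorecave}. Care is needed to state the induction hypothesis so it is flip-robust; I would phrase it as ``$U$ of any chain formula built from $N$ leaves each equal to $C_0$ or $\flip{C_0}$ is at least $\min(U(C_0), U(\flip{C_0}))^N$'' — but actually the cleaner route is to observe directly that replacing every $\wedge$ and $\vee$ by $\wedge$ and keeping leaves as they are yields $U = \prod U(\text{leaf})$, and each leaf contributes at least\ldots hmm, this needs the leaves to all be $C_0$. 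I would handle the flip case by noting that the statement is really about the multiset of leaves being exactly $\set{C_0, \dots, C_0}$, so no flips appear, per the hypothesis of the lemma.

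For the upper bound $U(C) \leq U(\Cpoly(\flip{C_0}, N))$, the idea is dual: push all sums toward being convex, and flip leaves as needed. By \cref{coro:vexmorecave} again (read in the other direction, plus De Morgan to move flips around), any chain built from $N$ copies of $C_0$ has at most as many upper triangulations as the ``most convex'' arrangement. The poly-$\flip{C_0}$ chain is precisely $\flip{C_0} \vee \dots \vee \flip{C_0}$, a single left-associated convex sum; I claim this maximizes $U$ among all parenthesizations and all choices of $\vee$ vs.\ $\wedge$ at the internal nodes, given the leaf multiset $\set{C_0,\dots,C_0}$ (equivalently $\set{\flip{C_0},\dots,\flip{C_0}}$ after a global flip, which does not change $U$ of the whole chain only if\ldots no — a global flip exchanges $U$ and $L$). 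Let me be careful: $U(\flip{C}) = L(C)$, so a global flip is not free for $U$. The right statement is that among convex sums, more convexity is better (\cref{coro:vexmorecave}), and the fully-convex left comb is $\Cpoly(\flip{C_0},N)$ once we write the leaves as $\flip{C_0}$; but the leaves of $C$ are $C_0$, not $\flip{C_0}$. I resolve this by applying De Morgan to rewrite: a convex sum of $C_0$'s equals the flip of a concave sum of $\flip{C_0}$'s, and the desired comparison $U(C) \leq U(\Cpoly(\flip{C_0},N))$ should follow from bounding $U(C)$ above by what you get when every internal node is $\vee$ — at which point, since the leaves are $C_0$, the result is $C_0 \vee \dots \vee C_0 = \flip{\,\flip{C_0} \wedge \dots \wedge \flip{C_0}\,}$, and I need $U$ of this to be at most $U(\Cpoly(\flip{C_0}, N)) = U(\flip{C_0} \vee \dots \vee \flip{C_0})$. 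That last inequality is \cref{coro:vexmorecave} applied to the chain $\flip{C_0}$'s directly (convex beats concave, and $U$ is unaffected by the outer flip via $U(\flip D) = L(D)$ — wait, that introduces $L$).

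Given the above tangle, the honest plan is: \textbf{(1)} Prove a helper claim by induction on formula trees: if $C$ is built from the leaf multiset $M$ using $\vee$ and $\wedge$, then $U(C)$ is monotone under replacing any $\wedge$ by $\vee$ (this is an immediate consequence of \cref{coro:vexmorecave} together with the fact that $t_k$ and hence $U$ of a sum is monotone in the $t_k$'s of the summands, which one reads off from \cref{lem:concavesum,lem:convexsum}). \textbf{(2)} Deduce the lower bound by replacing all $\vee$ with $\wedge$: $U(C) \geq U(C_0 \wedge \dots \wedge C_0) = U(C_0)^N$. \textbf{(3)} Deduce the upper bound by replacing all $\wedge$ with $\vee$: $U(C) \leq U(C_0 \vee \dots \vee C_0)$; then a short separate argument, using associativity and \cref{coro:vexmorecave} one more time (or a direct appeal to the definition of $\Cpoly$ together with $U$-invariance under the $\vee \leftrightarrow \wedge$-duality that De Morgan provides), identifies $U(C_0 \vee \dots \vee C_0)$ with $U(\Cpoly(\flip{C_0}, N))$. \textbf{The main obstacle} is step~(1): establishing that $t_k$ of a convex or concave sum is \emph{coordinatewise monotone} in the triangulation polynomials of the parts, so that the local replacement $\wedge \rightsquigarrow \vee$ propagates up the tree. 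This requires inspecting \cref{lem:convexsum} and checking that $T_{\cave{a}\vee\cave{b}}(x)$ has nonnegative coefficients (it does — they are binomial coefficients) so that increasing any $t_{k_i}(C_i)$ or shifting weight between convex and concave combinations never decreases any output coefficient; the concave case is trivial from \cref{lem:concavesum}. Everything else is bookkeeping with the parse tree.
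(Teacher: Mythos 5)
Your proposal is correct and follows essentially the same route as the paper: a structural induction over the parse tree using \cref{coro:vexmorecave} in its coefficientwise form together with the coefficientwise monotonicity of the sum formulas (\cref{lem:concavesum,lem:convexsum}) that you isolate in step~(1), and then \cref{lem:concavesum} to evaluate the all-concave formula as $U(C_0)^N$. For the final identification in step~(3) you need neither De Morgan nor any $U$-invariance claim: by the involution law $\flip{\flip{C_0}} = C_0$, the chain $\Cpoly(\flip{C_0}, N)$ is by definition literally $C_0 \vee \dots \vee C_0$, which is exactly how the paper concludes.
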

\begin{proof}
  We get $U(C_0 \wedge \dots \wedge C_0) \leq U(C) \leq U(C_0 \vee \dots \vee C_0)$ by using induction on $N$ combined with \cref{coro:vexmorecave}.
\Cref{lem:concavesum} and the definition of poly-$\flip{C_0}$ chains then concludes the proof.
\end{proof}
\begin{lemma} \label{lem:copyboundall}
In the same setting, we have
\[
\TR(C_0)^N \leq \TR(C) \leq U(\Cpoly(C_0, N)) \cdot U(\Cpoly(\flip{C_0}, N)).
\]
\end{lemma}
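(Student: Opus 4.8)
The plan is to reduce everything to the two facts already established for triangulations of chains, namely $\TR(C) = U(C)\cdot L(C)$ and $L(C) = U(\flip{C})$, and then to invoke \cref{lemm:copyboundupper} twice: once for the chain $C$ itself and once for its flipped version $\flip{C}$.

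The one preparatory observation I would record is that flipping respects the formula structure. If $C$ is a chain written as a formula involving convex sums, concave sums, and exactly $N$ copies of $C_0$, then by repeatedly applying De Morgan's laws from \cref{lem:algebraiclaws} we may push the flip down to the leaves: every $\vee$ turns into a $\wedge$ and vice versa, and each leaf $C_0$ turns into $\flip{C_0}$, while the total number of leaves is unchanged. Hence $\flip{C}$ is a chain written as a formula involving convex sums, concave sums, and exactly $N$ copies of $\flip{C_0}$. Combined with the involution law $\flip{\flip{C_0}} = C_0$, this places both $C$ (with building block $C_0$) and $\flip{C}$ (with building block $\flip{C_0}$) in the scope of \cref{lemm:copyboundupper}.

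For the upper bound, applying \cref{lemm:copyboundupper} to $C$ gives $U(C) \leq U(\Cpoly(\flip{C_0}, N))$, and applying it to $\flip{C}$ gives $U(\flip{C}) \leq U(\Cpoly(\flip{\flip{C_0}}, N)) = U(\Cpoly(C_0, N))$. Multiplying these and using $\TR(C) = U(C)\cdot L(C) = U(C)\cdot U(\flip{C})$ yields $\TR(C) \leq U(\Cpoly(C_0, N))\cdot U(\Cpoly(\flip{C_0}, N))$. For the lower bound, I would instead use the left inequalities of \cref{lemm:copyboundupper}, namely $U(C) \geq U(C_0)^N$ and $U(\flip{C}) \geq U(\flip{C_0})^N$; multiplying gives $\TR(C) \geq \bigl(U(C_0)\cdot U(\flip{C_0})\bigr)^N = \bigl(U(C_0)\cdot L(C_0)\bigr)^N = \TR(C_0)^N$.

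The only step requiring any care — and it is not really an obstacle — is the bookkeeping in the De Morgan reduction, i.e., checking that flipping a parse tree preserves the number of leaves and merely swaps the two sum operations at each internal node. Everything else follows immediately from results proved earlier in the paper, so this lemma is essentially a short corollary of \cref{lemm:copyboundupper}.
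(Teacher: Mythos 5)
Your proof is correct and follows the same route as the paper, which simply applies \cref{lemm:copyboundupper} twice, first to $C$ with $C_0$ and then to $\flip{C}$ with $\flip{C_0}$; your additional De Morgan bookkeeping and the use of $\TR(C)=U(C)\cdot U(\flip{C})$ just make explicit what the paper leaves implicit.
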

\begin{proof}
Apply \cref{lemm:copyboundupper} twice. First to $C$ with $C_0$,
then to $\flip{C}$ with $\flip{C_0}$.
\end{proof}

The Koch chains $K_s$ with $s \geq 21$ can be written as formulas
involving only copies of $K_{21}$ (or only copies of $\flip{K_{21}}$, depending on the parity of $s$), and so \cref{lem:copyboundall} applies to them.
For the lower bound, we obtain $9.082799^n \leq \Tr(K_s)$ for all $s \geq 21$ from the last entry in \cref{tab:kochchainsnumbers}.
For the upper bound, we make use of \cref{tab:polyflipkochchain}, which contains once more the computed values $\lambda$ for poly-$K_s$ chains,
but it also includes the corresponding values (denoted by $\overline\lambda$) for poly-$\flip{K_s}$ chains.
Note that poly-$\flip{K_s}$ chains are in fact nothing else than poly-$K_{s-1}$ chains, and so the values $\lambda$ and $\overline\lambda$
are simply shifted by one index.
For all $s\geq21$, we obtain $\Tr(K_s) \leq U(\Cpoly(K_{21},N)) \cdot U(\Cpoly(\flip{K_{21}},N)) = \tTheta((\lambda\overline\lambda)^n)$ where $\lambda\overline\lambda \approx 9.083138$.
Clearly, this proves \cref{thm:kochlimit} from the introduction.

\begin{table}
  \centering
  \def\pz{\phantom{0}}
  \begin{tabular}{@{}rrlllrrlll@{}}
    \toprule
      $s$ & $m$ & $\lambda$ & $\overline\lambda$ & $\lambda\overline\lambda$ & $s$ & $m$ & $\lambda$ & $\overline\lambda$ & $\lambda\overline\lambda$ \\
    \cmidrule(r){1-5}\cmidrule(l){6-10}
       0 &       1 & 4.0      & 4.0      &   16.0      & 11 &    2048 & 2.894607 & 3.145716 & 9.105615 \\
       1 &       2 & 3.464101 & 4.0      &   13.856406 & 12 &    4096 & 3.142184 & 2.894607 & 9.095390 \\
       2 &       4 & 3.534118 & 3.464101 &   12.242546 & 13 &    8192 & 2.892806 & 3.142184 & 9.089731 \\
       3 &       8 & 3.124013 & 3.534118 &   11.040634 & 14 &   16384 & 3.141127 & 2.892806 & 9.086674 \\
       4 &      16 & 3.290140 & 3.124013 &   10.278444 & 15 &   32768 & 2.892276 & 3.141127 & 9.085007 \\
       5 &      32 & 2.974654 & 3.290140 & \pz9.787032 & 16 &   65536 & 3.140819 & 2.892276 & 9.084117 \\
       6 &      64 & 3.191872 & 2.974654 & \pz9.494717 & 17 &  131072 & 2.892123 & 3.140819 & 9.083637 \\
       7 &     128 & 2.919234 & 3.191872 & \pz9.317823 & 18 &  262144 & 3.140731 & 2.892123 & 9.083383 \\
       8 &     256 & 3.157095 & 2.919234 & \pz9.216301 & 19 &  524288 & 2.892080 & 3.140731 & 9.083247 \\
       9 &     512 & 2.900536 & 3.157095 & \pz9.157271 & 20 & 1048576 & 3.140706 & 2.892080 & 9.083176 \\
      10 &    1024 & 3.145716 & 2.900536 & \pz9.124266 & 21 & 2097152 & 2.892068 & 3.140706 & 9.083138 \\
    \bottomrule
  \end{tabular}
  \caption{
    The computed values of $\lambda$ for poly-$K_s$ chains, the corresponding $\overline\lambda$ for poly-$\flip{K_s}$ chains,
    and the product $\lambda\overline\lambda$ for the resulting upper bound given by \cref{lem:copyboundall},
    where each entry is rounded down to six decimal places.
  }
  \label{tab:polyflipkochchain}
\end{table}

\subsection{Tools for the proof of \cref{theo:polytwinasym}}

We use similar ideas as in the analysis of the generalized double zig-zag chain by Dumitrescu
et al.\ (Section~2 of \cite{dumitrescu2013}),
with three key improvements that allow us to get an exact $\tTheta$
instead of just a numerical lower bound.

\subparagraph{First improvement.}

Our first improvement is that our chain framework allows
us to analyze even more general ``double circles'', defined as follows.

\begin{definition}\label{def:generaldoublecircle}
Let $N_1, \dots, N_m \geq 0$ be integers, at least one non-zero.
Then, the corresponding \emph{generalized double circle} (with $n = 1N_1 + 2N_2 + \dots + mN_m$ chain edges) is the chain
\[
\Cgdc(N_1, \dots, N_m) = \Cpoly(\vex{1}, N_1) \vee \Cpoly(\vex{2}, N_2) \vee \dots \vee \Cpoly(\vex{m}, N_m),
\]
where we simply omit the poly-$\vex{k}$ chains with $N_k = 0$.
\end{definition}

\begin{theorem} \label{theo:multiconvexlower}
Let $N = N_1 + \dots + N_m$.
There is a constant $c_m$ depending only on $m$
such that
\[
U(\Cgdc(N_1, \dots, N_m)) \geq \frac{1}{(N+1)^{c_m}} \prod_{k=1}^{m} \Big(2^k (k+1) \Big)^{N_k}.
\]
\end{theorem}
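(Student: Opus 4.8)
The plan is to lower-bound $U(\Cgdc(N_1,\dots,N_m))$ by exhibiting many partial upper triangulations whose number is easy to count, rather than computing $U$ exactly. The construction $\Cgdc(N_1,\dots,N_m)$ is a convex sum of $N$ blocks, where each block is a flipped convex chain $\flip{\vex{k}}$ (recall $\Cpoly(\vex{k},N_k) = \flip{\vex{k}} \vee \dots \vee \flip{\vex{k}}$, $N_k$ copies); note $\flip{\vex{k}} = \cave{k}$, a concave chain with $k$ chain edges whose only partial upper triangulation is the trivial one with $k$ visible edges (so $T_{\cave{k}}(x) = 1$). Thus every partial upper triangulation of the whole chain is, by iterating \cref{prop:convexsumpartialtriangulation}, obtained by gluing trivial triangulations of the individual $\cave{k}$ blocks together with a partial upper triangulation of a convex sum $\cave{v_1} \vee \dots \vee \cave{v_N}$ where $v_i$ is the size of the $i$-th block. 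In short, $U(\Cgdc(N_1,\dots,N_m)) = U(\cave{a_1} \vee \dots \vee \cave{a_N})$ where the multiset $\{a_1,\dots,a_N\}$ contains $N_k$ copies of $k$ for each $k$; by \cref{coro:triangcommutative} the order does not matter.

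So the core task reduces to: given positive integers $a_1,\dots,a_N$ with $\sum a_i = n$, show $U(\cave{a_1} \vee \dots \vee \cave{a_N}) \geq \frac{1}{(N+1)^{c}} \prod_{i=1}^N \big(2^{a_i}(a_i+1)\big)$ for a constant $c$ (taking $c = c_m$ suffices since $a_i \leq m$). To prove this I would count a restricted family of partial upper triangulations of $\cave{a_1} \vee \dots \vee \cave{a_N}$ directly. A natural choice: pick, for each consecutive pair of blocks, a single ``bridge'' upper edge spanning from inside block $i$ to inside block $i+1$, analogous to the bridge in the proof of \cref{prop:convexsumdp} and to the visible upper edge in the formula $T_{\cave{n_1} \vee \cave{n_2}}(x) = 1 + \sum_{l,r} \binom{l+r-2}{l-1} x^{l+r-1}$. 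Choosing the bridge endpoints at the two endpoints of each block (i.e.\ at the hull vertices shared with neighbors) decouples the blocks, and one checks that the $i$-th block of size $a_i$, once its two endpoints are fixed on the global upper hull, contributes a factor that one can bound below by counting triangulations of a region spanned by $a_i+1$ points — this is exactly where the $2^{a_i}$ (number of ways to triangulate a fan-like region, or the Catalan-type count restricted suitably) and the $(a_i+1)$ (choice of apex) come from. More precisely, I expect the clean statement to be: fixing the ``skeleton'' of bridges between consecutive blocks, each block of size $k$ independently admits at least $2^k(k+1)$ completions (with a small polynomial-in-$N$ loss coming from how the first and last blocks interact with the global base edge $p_0p_n$, and from degrees of freedom in where each bridge attaches, which we absorb into $(N+1)^{c_m}$).

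The main obstacle will be making the decoupling rigorous: a partial upper triangulation of the convex sum does not literally factor over blocks, because a single visible upper edge can span several blocks at once, and the recursive structure of \cref{lem:convexsum} introduces the auxiliary chains $\cave{n_1-k_1} \vee \cave{n_2-k_2}$ that themselves must be triangulated. My plan to handle this is induction on $N$: write $\Cgdc = B_1 \vee (\text{rest})$, apply \cref{lem:convexsum} to split off the first block $B_1 = \cave{a_1}$ (for which $t_{k_1}(B_1) = [k_1 = 0]$, so only the term $t_0(B_1) \cdot x^0 \cdot T_{\cave{a_1} \vee \cave{n - a_1 - k_2}}(x)$ with the rest's triangulation survives at each level), reducing to a sum over a single parameter. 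Then use the explicit double-chain formula above to extract a term of the right size — choosing $k_2$ and the bridge split $(l,r)$ so that the binomial coefficient $\binom{l+r-2}{l-1}$ together with the inductive factor from the rest reproduces $2^{a_1}(a_1+1)$ times the inductive product, at the cost of one more factor of $N+1$ from restricting the sum to a single summand. Summing the polynomial loss over the $N$ induction steps gives $(N+1)^{c_m}$ with $c_m = O(m)$ (or just $O(1)$, since $a_i \leq m$ bounds each per-step loss by a constant). I would keep $c_m$ unoptimized, noting only that it depends on $m$ alone, since \cref{theo:polytwinasym} only needs the exponential base $\prod (2^k(k+1))^{N_k}$ to be correct up to polynomial factors.
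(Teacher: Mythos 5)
The crux of this theorem is the growth constant $2^k(k+1)$, and your proposal never actually establishes it. The paper's proof is a short reduction: by \cref{coro:vexmorecave} and \cref{lem:concavesum}, $U(\Cgdc(N_1,\dots,N_m)) \geq \prod_{k} U(\Cpoly(\vex{k},N_k))$, and then the asymptotics $U(\Cpoly(\vex{k},M)) = \tOmega\big((2^k(k+1))^M\big)$ is quoted from Asinowski--Krattenthaler--Mansour; that citation is exactly the nontrivial quantitative input. Your plan replaces it by the claim that, after fixing a skeleton of bridges, each block $\cave{a_i}$ ``independently admits at least $2^{a_i}(a_i+1)$ completions.'' This decoupling cannot work: a block $\cave{k}$ by itself has no upper edges at all, so $U(\cave{k})=1$, and the chord joining its two endpoints is a \emph{lower} edge of $\Cgdc$, so no chord inside the upper region cuts a block off from the rest. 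The entire factor $2^k(k+1)$ per block comes from triangles whose vertices lie in different blocks; it is a global phenomenon, not a per-block count. The case $m=1$ already refutes the local picture: there every block is a single chain edge, $\Cgdc$ is the convex chain, the claimed per-block factor is $4$, and the true count is the Catalan number $\approx 4^N N^{-3/2}$ --- once any skeleton is fixed, a size-one block has nothing left to choose, so a fixed-skeleton, per-block-independent family falls exponentially short. The step where $2^{a_i}(a_i+1)$ is supposed to appear is precisely the missing proof.

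Your fallback induction also fails quantitatively. Splitting off $B_1=\cave{a_1}$ via \cref{lem:convexsum} and comparing leading coefficients gives $U(\cave{a_1}\vee C_2)=\sum_{k_2} t_{k_2}(C_2)\binom{a_1+(n_2-k_2)-2}{a_1-1}$. The summand containing $U(C_2)=t_{n_2-1}(C_2)$ carries the binomial factor $\binom{a_1-1}{a_1-1}=1$, so an induction whose invariant is only $U(\text{rest})$ gains nothing at each step; the gains of size roughly $2^{a_1}(a_1+1)$ can only be harvested from summands in which the remaining chain keeps \emph{many} visible edges, and deciding how many visible edges each of the $N$ blocks should leave, and how much binomial gain is collected in the end, is a global optimization over all blocks --- exactly the multinomial/entropy analysis the paper carries out later for \cref{theo:polytwinasym} via \cref{lemm:entropymaximum}, and equivalently the content of the cited asymptotics. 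So as sketched, your argument either loses the exponential base or silently assumes the bound being proved. (A self-contained proof avoiding the citation is possible, e.g.\ starting from the coefficient identity used in \cref{thm:multiconvexupper} and extracting a large central coefficient, but that is a genuinely different argument from the one you describe.)
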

\begin{proof}
By \cref{coro:vexmorecave} and \cref{lem:concavesum}, we get
\[
U(\Cgdc(N_1, \dots, N_m)) \geq \prod_{k=1}^{m} U(\Cpoly(\vex{k}, N_k)).
\]
Asinowski, Krattenthaler, and Mansour~\cite{asinowski2017} show for any fixed $k$ that
\[
  U(\Cpoly(\vex{k}, M)) = \tOmega((2^k (k+1))^M) \hspace{1cm} \text{(as  $M \to \infty$)},
\]
implying the existence of an absolute constant $d_k$ such that
\[
U(\Cpoly(\vex{k}, M)) \geq \frac{1}{(M+1)^{d_k}} \cdot (2^k (k+1))^M \hspace{1cm} \text{(for all $M \geq 1$)}.
\]
Substituting the respective $N_k$ for $M$ and defining $c_m = d_1 + \dots + d_m$ concludes the proof.
\end{proof}

\subparagraph{Second improvement.}

The authors of \cite{dumitrescu2013} use numerical optimization in the final
step of analyzing generalized double zig-zag chains.
We solve the corresponding optimization problem algebraically,
via Lagrange multipliers. This is best captured by the following lemma.

\begin{lemma} \label{lemm:entropymaximum}
Let $u_1, \dots, u_m \geq 0$ be given and
let $H(\alpha_1, \dots, \alpha_m) = - \sum_{k=1}^m \alpha_k \cdot \ln (\alpha_k)$
be the entropy function. Then,
\[
\max_{\substack{0 \leq \alpha_1, \dots, \alpha_m \leq 1\\ \alpha_1 + \dots + \alpha_m = 1}} e^{H(\alpha_1, \dots, \alpha_m)} \cdot \prod_{k=1}^{m} u_k^{\alpha_k} = \sum_{k=1}^{m} u_k.
\]
\end{lemma}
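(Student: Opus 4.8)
The plan is to take logarithms and reduce the claim to a standard concave maximization over the probability simplex. Writing $S = \sum_{k=1}^m u_k$, the logarithm of the objective is
\[
f(\alpha_1,\dots,\alpha_m) = H(\alpha_1,\dots,\alpha_m) + \sum_{k=1}^m \alpha_k \ln u_k = -\sum_{k=1}^m \alpha_k \ln\frac{\alpha_k}{u_k},
\]
so the lemma is equivalent to showing that the maximum of $f$ over $\{\alpha \colon \alpha_k \ge 0, \sum_k \alpha_k = 1\}$ equals $\ln S$. I would first dispose of degenerate cases: for an index with $u_k = 0$ the optimal choice is $\alpha_k = 0$ (any positive $\alpha_k$ makes the product vanish), and with the conventions $0\ln 0 = 0$ and $0^0 = 1$ such terms drop out of both sides, so we may assume all $u_k > 0$; if every $u_k = 0$ the statement is the trivial $0 = 0$.

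Next I would observe that $f$ is concave, being the sum of the concave entropy and a linear function, and continuous on the compact simplex, so a maximum exists and the first-order (KKT) conditions are sufficient for global optimality. Introducing a Lagrange multiplier $\mu$ for the constraint $\sum_k \alpha_k = 1$ and differentiating $f - \mu\bigl(\sum_k\alpha_k - 1\bigr)$ gives $-\ln\alpha_k - 1 + \ln u_k - \mu = 0$ for every $k$, hence $\alpha_k = u_k\,e^{-1-\mu}$; summing over $k$ forces $e^{-1-\mu} = 1/S$, so the unique stationary point is $\alpha_k^{\star} = u_k/S$. Since all $u_k > 0$ this lies in the open simplex, so the boundary need not be examined. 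Evaluating $f$ there,
\[
f(\alpha^{\star}) = -\sum_{k=1}^m \frac{u_k}{S}\,\ln\frac{u_k/S}{u_k} = -\sum_{k=1}^m \frac{u_k}{S}\,\ln\frac1S = \ln S \cdot \sum_{k=1}^m \frac{u_k}{S} = \ln S,
\]
and exponentiating yields the claimed value $\sum_{k=1}^m u_k$.

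There is essentially no real obstacle; the only points deserving a word of care are the degenerate $u_k = 0$ cases handled above and the justification that the interior stationary point is a genuine global maximum, which is exactly what concavity supplies. As an alternative that sidesteps the multiplier computation altogether, I could invoke the non-negativity of the Kullback--Leibler divergence (Gibbs' inequality) between $\alpha$ and the distribution $p_k = u_k/S$, which reads $\sum_k \alpha_k \ln(\alpha_k/p_k) \ge 0$; rearranging gives $H(\alpha) \le -\sum_k \alpha_k \ln p_k = \ln S - \sum_k \alpha_k \ln u_k$ with equality iff $\alpha = p$, i.e. precisely $e^{H(\alpha)}\prod_k u_k^{\alpha_k} \le S$ and its equality case. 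Either route finishes the proof in a few lines.
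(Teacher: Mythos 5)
Your argument is correct, and its computational core (Lagrange multipliers on the log-objective, stationary point $\alpha_k^\star = u_k/S$, evaluation giving $\ln S$) coincides with the paper's proof. Where you genuinely diverge is in how global optimality is certified: the paper does not use concavity at all, but instead runs an induction on $m$, bounding the maximum on the boundary of the simplex by the induction hypothesis ($\le \sum_k u_k$) and then observing that the interior critical value matches this bound, so the maximum must be attained there; you instead note that $f(\alpha)=H(\alpha)+\sum_k\alpha_k\ln u_k$ is concave, so the interior stationary point is automatically the global maximum and no boundary analysis or induction is needed. That is a legitimate simplification, provided one is careful (as you are) to reduce first to the case $u_k>0$ so the stationary point lies in the relative interior and $f$ is continuous up to the boundary with the convention $0\ln 0=0$. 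Your alternative via Gibbs' inequality, $\sum_k\alpha_k\ln(\alpha_k/p_k)\ge 0$ with $p_k=u_k/S$, is a genuinely different and even shorter route that avoids calculus entirely and additionally identifies the equality case $\alpha=p$; the paper's inductive Lagrange argument buys nothing extra here except self-containedness, so either of your two routes would serve as a complete proof.
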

\begin{proof}
Without loss of generality, assume that $u_k > 0$ for all $k$ (otherwise put $\alpha_k = 0$, after which we are left with the same proof for a smaller number $m$ of variables). To simplify notation, define
\[
f(\alpha_1, \dots, \alpha_m) := e^{H(\alpha_1, \dots, \alpha_m)} \cdot \prod_{k=1}^{m} u_k^{\alpha_k}.
\]
The proof is by induction on $m$.
If $m = 1$, then $\alpha_1 = 1$ and the statement is trivial.
For $m \geq 2$, note that the maximum of $f$ always exists since we are maximizing over a compact region.

Let us now consider the boundary first.
If $\alpha_i = 0$ for some index $i$, then by induction
\[
\max_{\substack{0 \leq \alpha_1, \dots, \alpha_m \leq 1\\ \alpha_1 + \dots + \alpha_m = 1\\ \alpha_i = 0}} f(\alpha_1, \dots, \alpha_m) = \sum_{k=1}^{m} u_k - u_i \leq \sum_{k=1}^{m} u_k.
\]
On the other hand, if $\alpha_i = 1$ for some $i$, then $\alpha_{j} = 0$ for all other indices $j$ and the same argument applies. To summarize, if the maximum is on the boundary, then it is at most $u_1 + \dots + u_k$.

Let us now consider the interior.
After taking logs, we are left to show that
\[
\max_{\substack{0 < \alpha_1, \dots, \alpha_m < 1\\ \alpha_1 + \dots + \alpha_m = 1}} \ln(f(\alpha_1, \dots, \alpha_m)) = \ln \Big(\sum_{k=1}^{m} u_k\Big).
\]
By making use of the definitions
\begin{align*}
g(\alpha_1, \dots, \alpha_m) &:= \ln(f(\alpha_1, \dots, \alpha_m)) =  H(\alpha_1, \dots, \alpha_m) + \sum_{k=1}^{m} \alpha_k \cdot \ln(u_k),\\
h(\alpha_1, \dots, \alpha_m) &:= \alpha_1 + \dots + \alpha_m - 1,
\end{align*}
we see that, equivalently, we are maximizing $g$ over the open set $0 < \alpha_1, \dots, \alpha_m < 1$
under the constraint $h(\alpha_1, \dots, \alpha_m) = 0$. By Lagrange multipliers,
any such maximum $(\alpha_1, \dots, \alpha_m)$, if it exists, has to solve
\[
\frac{\partial g}{\partial \alpha_k} = \lambda \cdot \frac{\partial h }{\partial \alpha_k}
\]
for some $\lambda \in \bR$ and all $k = 1,\dots,m$. For our concrete functions $g$ and $h$, these equations are
\[
-\ln(\alpha_k) - 1 + \ln(u_k) = \lambda
\]
or, equivalently,
\[
\alpha_k = e^{-\lambda - 1} \cdot u_k.
\]
Plugging this into the constraint
$\alpha_1 + \dots + \alpha_m = 1$ yields
\[
e^{-\lambda - 1} \cdot (u_1 + \dots + u_m) = 1,
\]
which then allows us to eliminate $\lambda$ and we arrive at
\[
\alpha_k = \frac{1}{u_1 + \dots + u_m} \cdot u_k.
\]
For these particular values $\alpha_k$, we now calculate
\begin{align*}
g(\alpha_1, \dots, \alpha_m) &= H(\alpha_1, \dots, \alpha_m) + \sum_{k=1}^{m} \alpha_k \cdot \ln(u_k)\\
&= - \sum_{k=1}^{m} \frac{u_k}{u_1 + \dots + u_m} \cdot \ln \Big(\frac{u_k}{u_1 + \dots + u_m}\Big) + \sum_{k=1}^{m} \frac{u_k}{u_1 + \dots + u_m} \cdot \ln(u_k)\\
&= \sum_{k=1}^{m} \frac{u_k}{u_1 + \dots + u_m} \cdot \Big( - \ln \Big(\frac{u_k}{u_1 + \dots + u_m}\Big) + \ln (u_k) \Big)\\
&= \sum_{k=1}^{m} \frac{u_k}{u_1 + \dots + u_m} \cdot \ln(u_1 + \dots + u_m) \\
&= \ln(u_1 + \dots + u_m)
\end{align*}
and, hence, $f(\alpha_1, \dots, \alpha_m) = u_1 + \dots + u_m$.
Given that this value is at least as large as what we got for the boundary,
there has to be a maximum in the interior. By Lagrange multipliers,
that maximum has to be at the values $\alpha_k$ we just considered.
\end{proof}

\subparagraph{Third improvement.}

Finally, we define a special generating
function that behaves well with regards to convex sums.
This will allow us to prove a matching upper bound for
\cref{theo:multiconvexlower}.

\begin{definition}
Let $C$ be any chain with $n$ chain edges. The \emph{upper triangulation generating function of $C$} is defined as
\[
\phi_C(x) = T_C(x) - \Big(\frac{x}{1-x}\Big)^{n+1} T_C(1-x).
\]
\end{definition}
Note that $\phi_C(x)$ is a rational function. As a formal power series,
it satisfies $\phi_C(x) = T_C(x) + O(x^{n+1})$. In other words, for $k = 0,\dots,n$,
the coefficient of $x^{k}$ in the Taylor expansion of $\phi_C(x)$ at $x = 0$
is equal to the triangulation number $t_k(C)$.

\begin{lemma} \label{lem:tgfvee}
For any two chains $C_1$ and $C_2$, we have
\[
\phi_{C_1 \vee C_2}(x) = \frac{1-x}{1-2x} \cdot \phi_{C_1}(x) \cdot \phi_{C_2}(x).
\]
\end{lemma}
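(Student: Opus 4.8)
The plan is to compute both sides directly from the defining formula of $\phi_C$ together with the convex-sum recurrence for $T_{C_1 \vee C_2}$ coming from \cref{lem:convexsum} and the explicit formula for $T_{\cave{a} \vee \cave{b}}(x)$. The key algebraic fact I would isolate first is a closed form for the double sum $\sum_{l=1}^{a}\sum_{r=1}^{b}\binom{l+r-2}{l-1}x^{l+r-1}$, or rather for the associated generating function once it is combined with $T_{C_1}$ and $T_{C_2}$. Summing the binomial coefficients against $x^{l+r-1}$ without the truncation at $a$ and $b$ gives the rational function $\frac{x}{1-2x}$ (since $\sum_{l,r\ge 1}\binom{l+r-2}{l-1}x^{l+r-1} = x\sum_{s\ge 0}(2x)^s = \frac{x}{1-2x}$), so the truncated sum is $\frac{x}{1-2x}$ minus correction terms indexed by $l>a$ or $r>b$. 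The guiding heuristic is that the subtracted term $(\tfrac{x}{1-x})^{n+1}T_C(1-x)$ in the definition of $\phi_C$ is precisely the device that kills exactly those truncation corrections, so that $\phi$ linearizes the convolution.

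Concretely, I would proceed as follows. First, substitute the formula of \cref{lem:convexsum} into $T_{C_1 \vee C_2}(x)$ and replace $T_{\cave{n_1-k_1}\vee \cave{n_2-k_2}}(x)$ by its explicit expression $1 + \sum_{l=1}^{n_1-k_1}\sum_{r=1}^{n_2-k_2}\binom{l+r-2}{l-1}x^{l+r-1}$. Next, I would split the inner double sum into the full (untruncated) sum over all $l,r\ge 1$, which contributes the clean factor $\frac{x}{1-2x}$, plus the two families of correction terms where $l$ exceeds $n_1-k_1$ or $r$ exceeds $n_2-k_2$. Collecting the ``clean'' part over all $k_1,k_2$ yields $\frac{x}{1-2x}\,T_{C_1}(x)\,T_{C_2}(x)$ up to the treatment of the lone $+1$, and I expect the correction terms to reorganize, after an index shift, into expressions recognizable as $(\tfrac{x}{1-x})^{n_i+1}T_{C_i}(1-x)$ multiplied by the other chain's polynomial. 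Then I would do the same expansion on the right-hand side: write out $\phi_{C_1}(x)\phi_{C_2}(x) = \bigl(T_{C_1}(x) - (\tfrac{x}{1-x})^{n_1+1}T_{C_1}(1-x)\bigr)\bigl(T_{C_2}(x) - (\tfrac{x}{1-x})^{n_2+1}T_{C_2}(1-x)\bigr)$, multiply by $\frac{1-x}{1-2x}$, and expand into four terms; and separately expand $\phi_{C_1\vee C_2}(x) = T_{C_1 \vee C_2}(x) - (\tfrac{x}{1-x})^{n_1+n_2+1}T_{C_1\vee C_2}(1-x)$ using the formula just derived for $T_{C_1\vee C_2}$ both at $x$ and at $1-x$. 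Finally I would match the two expansions term by term, using the symmetry $x \leftrightarrow 1-x$ to pair the cross terms.

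Alternatively — and this may be the cleaner write-up — one can argue at the level of formal power series and rational functions: show that $\phi_{C_1\vee C_2}(x)$ and $\frac{1-x}{1-2x}\phi_{C_1}(x)\phi_{C_2}(x)$ are rational functions of bounded degree that agree in enough low-order Taylor coefficients (namely coefficients of $x^k$ for $0\le k\le n_1+n_2$, where they both equal $t_k(C_1\vee C_2)$ by the remark preceding the lemma together with \cref{lem:convexsum}), and separately check the behaviour near the pole $x = \tfrac12$ and the symmetry under $x\mapsto 1-x$ to pin down the remaining freedom. I would need to verify that $\phi_C(x)$ satisfies the functional equation $\phi_C(x) = -(\tfrac{x}{1-x})^{n+1}\phi_C(1-x)$, which is immediate from the definition, and that both sides of the claimed identity have denominator dividing $(1-2x)(1-x)^{n_1+n_2}$ up to the relevant normalization; then a finite coefficient check closes the argument.

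The main obstacle I anticipate is purely bookkeeping: correctly handling the truncation limits $l \le n_i - k_i$ after the index substitutions, keeping track of the $+1$ term in $T_{\cave{a}\vee\cave{b}}$ (which accounts for partial triangulations with no visible upper edge and must not be double-counted), and confirming that the correction terms assemble into exactly $(\tfrac{x}{1-x})^{n_i+1}T_{C_i}(1-x)$ with the right sign rather than something off by a power of $\tfrac{x}{1-x}$ or a factor of $\tfrac{1}{1-x}$ versus $\tfrac{1}{1-2x}$. Getting the exponents $n_i+1$ to come out right is the delicate point, and this is presumably why the exponent $n+1$ (rather than $n$) appears in the definition of $\phi_C$; I would treat small cases such as $C_1 = C_2 = \cOne$ (so $C_1\vee C_2 = \vex{2}$) as a sanity check before committing to the general manipulation.
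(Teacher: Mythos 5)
Your main route has a genuine gap at exactly the point you flag as ``delicate'': after splitting $T_{\cave{a}\vee\cave{b}}(x)$ (with $a=n_1-k_1$, $b=n_2-k_2$) into the untruncated sum minus corrections, inclusion--exclusion produces \emph{three} correction terms, not two. The single tails $\sum_{l>a}\sum_{r\ge1}$ and $\sum_{l\ge1}\sum_{r>b}$ are geometric in $\frac{x}{1-x}$ and do reassemble, after summing against $t_{k_i}(C_i)x^{k_i}$, into $\frac{1-x}{1-2x}\ppa{\frac{x}{1-x}}^{n_i+1}T_{C_i}(1-x)$ times the other chain's polynomial --- that part of your plan works. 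But the double tail $D_{a,b}(x)=\sum_{l>a}\sum_{r>b}\binom{l+r-2}{l-1}x^{l+r-1}$ is not a geometric-type series, has no elementary closed form, and does not ``reorganize after an index shift.'' Matching it against the remaining pieces requires precisely the identity $D_{a,b}(x)=\ppa{\frac{x}{1-x}}^{a+b+1}T_{\cave{a}\vee\cave{b}}(1-x)+\frac{1-x}{1-2x}\ppa{\frac{x}{1-x}}^{a+b+2}$, which is equivalent to the special case $C_i=\cave{n_i}$ of the lemma, i.e.\ to \cref{prop:verylongcomputation}; the paper devotes a two-dimensional induction with a second-order recurrence to it. Your proposal offers no mechanism for this step --- ``match term by term using the symmetry $x\leftrightarrow1-x$'' does not prove it --- so the crux of the lemma is assumed rather than established. (Incidentally, the paper's overall structure is the reverse of yours: it first proves the $\cave{n_1}\vee\cave{n_2}$ case and then lifts it to general $C_1,C_2$ by the convolution of \cref{lem:convexsum} together with the factorization $\phi_{C_i}(x)=\sum_k t_k(C_i)x^k\phi_{\cave{n_i-k}}(x)$.)

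Your alternative rational-function route is, in fact, the more promising one and would give a genuinely different proof that bypasses the hard identity: writing both sides over the denominator $(1-2x)(1-x)^{n+1}$ with $n=n_1+n_2$, their numerators have degree at most $2n+1$; both sides satisfy $f(x)=-\ppa{\frac{x}{1-x}}^{n+1}f(1-x)$, so the difference has a numerator symmetric under $x\mapsto 1-x$, and agreement of the Taylor coefficients up to order $n$ forces that numerator to be divisible by $x^{n+1}(1-x)^{n+1}$, hence zero. However, the coefficient check you treat as immediate is not: the remark before the lemma gives the claim only for $\phi_{C_1\vee C_2}$, whereas showing that $\frac{1-x}{1-2x}\phi_{C_1}(x)\phi_{C_2}(x)$ also has $t_k(C_1\vee C_2)$ as its coefficients for $k\le n$ still needs an argument --- e.g.\ the factorization of $\phi_{C_i}$ above together with the observation that $T_{\cave{a}\vee\cave{b}}(x)-\frac{1-x}{1-2x}\phi_{\cave{a}}(x)\phi_{\cave{b}}(x)=O(x^{a+b+1})$, which uses the geometric single tails and only the \emph{order} of the double tail. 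That is substantially easier than \cref{prop:verylongcomputation}, but it is not supplied by your citation of \cref{lem:convexsum} alone, so as written both routes are incomplete; the second one, once these details and the degree/symmetry bookkeeping are filled in, would close the proof.
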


The proof of \cref{lem:tgfvee} is deferred to the next subsection since it involves some tedious computations.
Here, we focus on the following implication.

\begin{theorem} \label{thm:multiconvexupper}
For the generalized double circle, we have
\[
U(\Cgdc(N_1, \dots, N_m)) \leq \prod_{k=1}^{m} (2^k (k+1) )^{N_k}.
\]
\end{theorem}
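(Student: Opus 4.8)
The idea is to obtain the upper bound on $U(\Cgdc(N_1,\dots,N_m))$ by extracting coefficients from the upper triangulation generating functions $\phi_C$, and then summing the contributions using \cref{lem:tgfvee}. Recall that $U(C)$ equals $t_{n-1}(C)$, the leading coefficient of $T_C$, and hence the coefficient of $x^{n-1}$ in the Taylor expansion of $\phi_C(x)$ at $x=0$. The plan is to compute $\phi_{\Cgdc(N_1,\dots,N_m)}(x)$ as a product, read off an explicit rational function, and bound its $(n-1)$-st Taylor coefficient.

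\textbf{Step 1: reduce to a product formula.} First I would record that $\Cgdc(N_1,\dots,N_m)$ is an iterated convex sum of the chains $\flip{\vex{k}} = \cave{k}$ (since $\Cpoly(\vex{k},N_k) = \flip{\vex{k}}\vee\dots\vee\flip{\vex{k}} = \cave{k}\vee\dots\vee\cave{k}$, with $N_k$ copies). By repeated application of \cref{lem:tgfvee}, with a total of $N-1$ convex sums where $N = N_1+\dots+N_m$, we get
\[
\phi_{\Cgdc(N_1,\dots,N_m)}(x) = \Big(\frac{1-x}{1-2x}\Big)^{N-1} \prod_{k=1}^{m} \phi_{\cave{k}}(x)^{N_k}.
\]
So the next task is to compute $\phi_{\cave{k}}(x)$ explicitly. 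Since $\cave{k}$ has no visible upper edge beyond the base edge in its partial upper triangulations other than the trivial all-chain-edges one, and in fact $\cave{k}$ has $U(\cave{k}) = 1$, one needs $T_{\cave{k}}(x)$; because $\flip{\cave{k}} = \vex{k}$ and partial upper triangulations of $\cave{k}$ have at most... — more directly, $\cave{k} = \cOne\wedge\dots\wedge\cOne$, so by \cref{lem:concavesum} $T_{\cave{k}}(x) = T_{\cOne}(x)^k = 1$. Hence $\phi_{\cave{k}}(x) = 1 - (x/(1-x))^{k+1}$. Plugging in gives a fully explicit rational function.

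\textbf{Step 2: extract the coefficient.} Now I would extract the coefficient of $x^{n-1}$, where $n = \sum_k kN_k$, from
\[
\phi_{\Cgdc(N_1,\dots,N_m)}(x) = \Big(\frac{1-x}{1-2x}\Big)^{N-1} \prod_{k=1}^{m} \Big(1 - \Big(\tfrac{x}{1-x}\Big)^{k+1}\Big)^{N_k}.
\]
The cleanest route: substitute $y = x/(1-x)$, equivalently $x = y/(1+y)$, which is a standard trick turning the $1-2x$ denominator into something manageable — note $1-2x = (1-y)/(1+y)$ and $1-x = 1/(1+y)$. After substitution the product $\prod_k(1-y^{k+1})^{N_k}$ is a genuine polynomial in $y$ with nonnegative... (actually with coefficients bounded in absolute value by binomial-type quantities), and the prefactor becomes a power of $(1+y)/(1-y)$ times $(1+y)$-factors. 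The point is to show that the $(n-1)$-st coefficient in $x$ is at most $\prod_k (2^k(k+1))^{N_k}$. I expect this falls out by observing that $\prod_k\big(1-(x/(1-x))^{k+1}\big)^{N_k} \cdot \big((1-x)/(1-2x)\big)^{N-1}$ has all its relevant Taylor coefficients dominated coefficientwise by those of the ``all-plus'' surrogate obtained by replacing each $1 - (x/(1-x))^{k+1}$ by $1/(1-x)^{?}$-type majorant; more precisely, one shows $t_k(\cave{k}\vee\dots) $ telescopes and the leading term is exactly a product of the per-block contributions $2^k(k+1)$, which is also visible from \cref{theo:polytwinasym} applied with $C_0$ a single block, giving $\lambda = \sqrt[k]{\sum_{j=1}^k 2^j(j+1)t_{k-j}(\cave{k})}$ — but since $t_{k-j}(\cave{k}) = \binom{k-1}{j-1}$... — hmm, this indicates the right bookkeeping is to prove the generating-function identity directly and then use a coefficient-positivity / majorization argument.

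\textbf{Main obstacle.} The hard part will be Step 2: turning the explicit rational function for $\phi_{\Cgdc}$ into the clean bound $\prod_k (2^k(k+1))^{N_k}$ on a single Taylor coefficient, uniformly in the $N_k$ and in $n$. The prefactor $((1-x)/(1-2x))^{N-1}$ contributes a factor growing like $2^{N-1}$ which must combine correctly with the $\prod_k$ factors; the bookkeeping of how the ``$2$'' in $2^k$ is distributed among the $N-1$ convex-sum gluings versus the $N$ blocks is delicate. My plan to handle it is to prove an auxiliary coefficientwise inequality: $[x^j]\,\big((1-x)/(1-2x)\big)\phi_{\cave{a}}(x)\phi_{\cave{b}}(x) \le [x^j]$ of the corresponding product for $\cave{a+b+?}$-type merged block with the claimed product bound, i.e.\ an inductive statement that $t_j(C) \le $ (explicit majorant) for every chain $C$ built as a convex sum of concave chains $\cave{k_1},\dots,\cave{k_N}$, with the majorant multiplicative over a ``merge.'' Induction on $N$ using \cref{lem:tgfvee} in the form $\phi_{C_1\vee C_2} = \frac{1-x}{1-2x}\phi_{C_1}\phi_{C_2}$ should then close the argument, with the base case $N=1$ being the explicit computation of the coefficients of $\phi_{\cave{k}}(x) = 1 - (x/(1-x))^{k+1}$, whose $k$-th coefficient — the relevant one, since $\cave{k}$ has $n=k$ — is $U(\cave{k}) = 1 \le 2^k(k+1)$, trivially true, with the slack being exactly what is needed to absorb the gluing factors higher up the induction.
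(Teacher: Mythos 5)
Your Step 1 is exactly the paper's: repeated use of \cref{lem:tgfvee} gives $\phi_{\Cgdc(N_1,\dots,N_m)}(x) = \bigl(\tfrac{1-x}{1-2x}\bigr)^{N-1}\prod_k \phi_{\cave{k}}(x)^{N_k}$ with $\phi_{\cave{k}}(x) = 1-\bigl(\tfrac{x}{1-x}\bigr)^{k+1}$. But Step 2 — which you yourself flag as the main obstacle — is where the actual content of the theorem lies, and you do not carry it out. Your proposed induction is not formulated: you never specify the coefficientwise majorant, your base case only controls the single leading coefficient $U(\cave{k})=1\le 2^k(k+1)$ (which cannot drive an induction, since the convex-sum gluing via $\tfrac{1-x}{1-2x}$ mixes coefficients of all lower degrees, so you would need a bound on every $t_j$, not just the top one), and the claim that the base-case ``slack absorbs the gluing factors higher up'' is asserted, not proved. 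As written, the argument stops short of the inequality that is to be shown.

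The missing idea in the paper is a short but essential algebraic step: rewrite each factor by the geometric series as
\[
\phi_{\cave{k}}(x) = \frac{1-2x}{1-x}\sum_{i=0}^{k}\Bigl(\frac{x}{1-x}\Bigr)^{i},
\]
so that the $N=\sum_k N_k$ factors of $\tfrac{1-2x}{1-x}$ cancel the prefactor $\bigl(\tfrac{1-x}{1-2x}\bigr)^{N-1}$, leaving $U(\Cgdc)=[x^{n-1}]\,\tfrac{1-2x}{1-x}\prod_k\bigl(\sum_{i=0}^{k}(\tfrac{x}{1-x})^{i}\bigr)^{N_k}$. Dropping the factor $\tfrac{1-2x}{1-x}=1-\tfrac{x}{1-x}$ only decreases the coefficient (everything else has nonnegative coefficients), and expanding the product gives exactly $\prod_k (k+1)^{N_k}$ summands, each a power of $\tfrac{x}{1-x}$, whose coefficient of $x^{n-1}$ is a binomial coefficient $\binom{n-2}{i-1}<2^{n}=\prod_k (2^k)^{N_k}$; multiplying gives the claimed bound. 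Your substitution $y=x/(1-x)$ points in this direction (and also glosses over the fact that extracting $[x^{n-1}]$ is not extracting $[y^{n-1}]$), but without this cancellation-plus-binomial-coefficient step, the delicate bookkeeping of how the factor $2^k$ splits between the $N$ blocks and the $N-1$ gluings — which you correctly identify as the crux — remains unresolved, so the proposal has a genuine gap.
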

\begin{proof}
First, by repeated invocation of \cref{lem:tgfvee}, we get
\[
\phi_{\Cgdc(N_1, \dots, N_m)}(x) = \Big(\frac{1-x}{1-2x}\Big)^{N_1 + \dots + N_m-1} \cdot  \prod_{k=1}^{m} \Big(\phi_{\cave{k}}(x)\Big)^{N_k}.
\]

Second, recall that for concave chains we have $T_{\cave{k}}(x) = 1$.
Therefore, by definition of $\phi$ and by the geometric series,
\[
\phi_{\cave{k}}(x) = 1 - \Big(\frac{x}{1-x}\Big)^{k+1} = \frac{1-2x}{1-x} \cdot \sum_{i=0}^{k} \Big(\frac{x}{1-x}\Big)^i.
\]

Third, for all $n \geq 2$ and $k \geq 0$ we have the combinatorial identity
\[
[x^{n-1}] \Big(\frac{x}{1-x}\Big)^k = \binom{n-2}{k-1} < 2^n.
\]

With $n = 1N_1 + 2N_2 + \dots + mN_m$ being the number of chain edges, we can now combine these three ingredients and get
\begin{align*}
  U(\Cgdc(N_1, \dots, N_m)) & = [x^{n-1}] \phi_{\Cgdc(N_1, \dots, N_m)}(x) \\
& = [x^{n-1}] \Big(\frac{1-x}{1-2x}\Big)^{N_1 + \dots + N_m-1} \cdot  \prod_{k=1}^{m} \Big(\phi_{\cave{k}}(x)\Big)^{N_k} \\
& = [x^{n-1}] \frac{1-2x}{1-x} \prod_{k=1}^{m} \Bigg(\sum_{i=0}^{k} \Big(\frac{x}{1-x}\Big)^i\Bigg)^{\;\mathclap{N_k}} \\
& \leq[x^{n-1}] \prod_{k=1}^{m} \Bigg(\sum_{i=0}^{k} \Big(\frac{x}{1-x}\Big)^i\Bigg)^{\;\mathclap{N_k}} \leq 2^{n} \prod_{k=1}^{m} (k + 1)^{N_k},
\end{align*}
where expanding the second to last term yields $\prod_k (k + 1)^{N_k}$ summands,
each some power of $\frac{x}{1-x}$, for which the coefficient of  $x^{n-1}$ is always less than $2^{n}$.
\end{proof}

\subsection{Proof of \cref{lem:tgfvee}}

First, let us consider the simpler case where $C_1 = \cave{n_1}$ and $C_2 = \cave{n_2}$ for arbitrary integers $n_1,n_2 \geq 1$.
As already seen in the previous subsection, in such a case we have
\[
\phi_{\cave{n_i}}(x) = 1 - \Big(\frac{x}{1-x}\Big)^{n_i+1}.
\]
In the paragraph after \cref{lem:convexsum}, we have further seen that
\[
T_{\cave{n_1} \, \vee\, \cave{n_2}}(x) = 1 + \sum_{l=1}^{n_1} \sum_{r=1}^{n_2} \binom{l+r-2}{l-1} x^{l+r-1}.
\]
Therefore, in order to prove \cref{lem:tgfvee} for this specific choice of $C_1$ and $C_2$,
it suffices to show the following identity.

\begin{proposition}  \label{prop:verylongcomputation}
Let $n_1, n_2 \geq 0$. Define the formal power series
\begin{align*}
T_{n_1, n_2}(x) &:= 1 + \sum_{l=1}^{n_1} \sum_{r=1}^{n_2} \binom{l+r-2}{l-1} x^{l+r-1}
= 1 + \sum_{l=0}^{n_1-1} \sum_{r=0}^{n_2-1} \binom{l+r}{l} x^{l+r+1}, \\
L_{n_1, n_2}(x) &:= T_{n_1, n_2}(x) - \ppa{\frac{x}{1-x}}^{n_1+n_2+1} T_{n_1, n_2}(1-x), \\
R_{n_1, n_2}(x) &:= \frac{1-x}{1-2x} \cdot \ppa{1 - \ppa{\frac{x}{1-x}}^{n_1+1}} \cdot \ppa{1 - \ppa{\frac{x}{1-x}}^{n_2+1}}.
\end{align*}
Then, we have that $L_{n_1, n_2}(x) = R_{n_1, n_2}(x)$.
\end{proposition}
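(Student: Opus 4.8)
The plan is to obtain a closed rational form for $T_{n_1,n_2}(x)$ and then see that the claimed identity collapses to two instances of a classical reciprocity relation for truncated binomial series. Since $T_{n_1,n_2}$ is a polynomial and $\tfrac{1-x}{1-2x}$, $\bigl(\tfrac{x}{1-x}\bigr)^{n_1+n_2+1}$ are rational functions regular at $x=0$, it suffices to verify $L_{n_1,n_2}=R_{n_1,n_2}$ as an identity of rational functions. First I would introduce, for integers $a\ge1$ and $b\ge0$, the $a$-term truncation $B_{a,b}(x)=\sum_{j=0}^{a-1}\binom{b-1+j}{j}x^{j}$ of the series $(1-x)^{-b}=\sum_{j\ge0}\binom{b-1+j}{j}x^{j}$ (with the convention $\binom{-1}{0}=1$, so $B_{a,0}\equiv1$, and $B_{0,b}\equiv0$), and prove the closed form
\[
  T_{n_1,n_2}(x)=\frac{(1-x)-x^{n_1+1}B_{n_2,n_1}(x)-x^{n_2+1}B_{n_1,n_2}(x)}{1-2x}.
\]
This follows by induction on $n_2$, the case $n_2=0$ being immediate: extracting the terms with $r=n_2$ from the double sum gives $T_{n_1,n_2}(x)=T_{n_1,n_2-1}(x)+x^{n_2}B_{n_1,n_2}(x)$, and together with the elementary polynomial identity $(1-x)B_{n_1,n_2}(x)=B_{n_1,n_2-1}(x)-\binom{n_1+n_2-2}{n_1-1}x^{n_1}$ the inductive step reduces to a one-line manipulation. (Alternatively, the formula can be read off from a lattice-path inclusion--exclusion argument based on $\sum_{l,r\ge0}\binom{l+r}{l}x^{l+r}=\tfrac{1}{1-2x}$.)

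The second ingredient is the reciprocity identity
\[
  (1-x)^{b}\,B_{a,b}(x)+x^{a}\,B_{b,a}(1-x)=1\qquad(a,b\ge1),
\]
which I would prove by a short induction on $a$ using $B_{a+1,b}(x)=B_{a,b}(x)+\binom{b-1+a}{a}x^{a}$ (it also has a one-line probabilistic reading: in independent coin flips landing heads with probability $x$, the event ``the $b$-th tail precedes the $a$-th head'' has probability $(1-x)^{b}B_{a,b}(x)$ and its complement probability $x^{a}B_{b,a}(1-x)$).

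Now I would substitute the closed form of $T_{n_1,n_2}$ into $L_{n_1,n_2}(x)$, remembering that $x\mapsto1-x$ sends $(1-x),\ (1-2x),\ x^{n_i+1},\ B_{a,b}(x)$ to $x,\ -(1-2x),\ (1-x)^{n_i+1},\ B_{a,b}(1-x)$. Writing $w=\bigl(\tfrac{x}{1-x}\bigr)^{n_1+n_2+1}$, a direct computation shows that both $L_{n_1,n_2}(x)$ and $R_{n_1,n_2}(x)$ equal $\tfrac{1}{1-2x}$ times an expression containing the common summands $(1-x)$ and $xw$; after cancelling these, the identity $L_{n_1,n_2}=R_{n_1,n_2}$ becomes
\begin{align*}
  &x^{n_1+1}B_{n_2,n_1}(x)+\frac{x^{n_1+n_2+1}}{(1-x)^{n_1}}B_{n_1,n_2}(1-x)+x^{n_2+1}B_{n_1,n_2}(x)+\frac{x^{n_1+n_2+1}}{(1-x)^{n_2}}B_{n_2,n_1}(1-x)\\
  &\qquad=\frac{x^{n_1+1}}{(1-x)^{n_1}}+\frac{x^{n_2+1}}{(1-x)^{n_2}}.
\end{align*}
Scaling the first two summands by $(1-x)^{n_1}/x^{n_1+1}$ turns them into $(1-x)^{n_1}B_{n_2,n_1}(x)+x^{n_2}B_{n_1,n_2}(1-x)$, which equals $1$ by reciprocity (parameters $a=n_2,\ b=n_1$); scaling the last two by $(1-x)^{n_2}/x^{n_2+1}$ gives $(1-x)^{n_2}B_{n_1,n_2}(x)+x^{n_1}B_{n_2,n_1}(1-x)=1$. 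This is exactly the right-hand side, so the identity holds, and the boundary cases $n_1=0$ or $n_2=0$ are checked directly.

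I expect the difficulty to be bookkeeping rather than conceptual. The substitution $x\mapsto1-x$ must be carried through the prefactors $x^{n_i+1}$ \emph{and} the polynomials $B_{a,b}$ simultaneously, and the many powers of $x$ and $1-x$ have to be tracked without sign or index slips --- it is disconcertingly easy to pair the four $B$-terms the wrong way and land on a false intermediate claim. Pinning down the closed form of $T_{n_1,n_2}$ in the first place, and choosing the auxiliary polynomials $B_{a,b}$ so that the reciprocity identity applies verbatim, is the other place where a little care pays off; everything else is routine.
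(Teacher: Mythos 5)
Your proof is correct, but it follows a genuinely different route from the paper's. The paper proves the identity by double induction on $(n_1,n_2)$: it computes the mixed second-order combination $F_{n_1,n_2}-\tfrac{x}{1-x}\bigl(F_{n_1-1,n_2}+F_{n_1,n_2-1}\bigr)+\bigl(\tfrac{x}{1-x}\bigr)^2F_{n_1-1,n_2-1}$ for $F=T$ via tail-sum manipulations with Iverson brackets, shows that both $L$ and $R$ satisfy this same recurrence with right-hand side $\tfrac{1-2x}{1-x}$, and concludes from the induction hypothesis. You instead establish a closed rational form $T_{n_1,n_2}(x)=\bigl((1-x)-x^{n_1+1}B_{n_2,n_1}(x)-x^{n_2+1}B_{n_1,n_2}(x)\bigr)/(1-2x)$ in terms of truncated binomial series and reduce the claim, after cancelling the common terms $(1-x)$ and $x\bigl(\tfrac{x}{1-x}\bigr)^{n_1+n_2+1}$, to two instances of the classical reciprocity $(1-x)^bB_{a,b}(x)+x^aB_{b,a}(1-x)=1$ (the problem-of-points identity); I checked your closed form, its inductive step via $(1-x)B_{n_1,n_2}=B_{n_1,n_2-1}-\binom{n_1+n_2-2}{n_1-1}x^{n_1}$, the pairing of the four $B$-terms, and the boundary cases, and all of it goes through (note only that your closed form fails at $(n_1,n_2)=(0,0)$, which is harmless since you treat the boundary cases of the proposition directly). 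What your route buys is an explicit formula for $T_{n_1,n_2}$ that the paper never writes down, plus a conceptual explanation of the self-complementary structure behind $\phi$ via a textbook probabilistic identity; what the paper's route buys is that it needs no auxiliary polynomials or conventions, just Pascal's rule and a (admittedly long and somewhat opaque) telescoping computation feeding a clean induction.
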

\begin{proof}
We use induction on $(n_1, n_2)$.
If at least one of $n_1$ and $n_2$ is zero, then
the proposition follows easily.
We may hence assume that $n_1, n_2 \geq 1$.

First, we do some computations with $T_{n_1, n_2}(x)$.
We use the letter $k$ to denote the sum $l+r$, and we also use Iverson brackets to simplify notation.
\begin{align*}
T_{n_1, n_2}(x)
& = 1 + \sum_{k=0}^{\infty} \sum_{l+r=k} \bra{0 \leq l < n_1, 0 \leq r < n_2} \binom{k}{l} x^{k+1} \\
& = 1 + \sum_{k=0}^{\infty} 2^k x^{k+1} - \sum_{k=n_1}^{\infty} \sum_{l=n_1}^{k} \binom{k}{l} x^{k+1} - \sum_{k=n_2}^{\infty} \sum_{r=n_2}^{k} \binom{k}{r} x^{k+1} + \sum_{l=n_1}^{\infty} \sum_{r=n_2}^{\infty} \binom{l+r}{l} x^{l+r+1}\\
& = 1 + \frac{x}{1-2x} - \sum_{l=n_1}^{\infty} \sum_{k=l}^{\infty} \binom{l+(k-l)}{l} x^{k+1} - \sum_{r=n_2}^{\infty} \sum_{k=r}^{\infty} \binom{r+(k-r)}{r} x^{k+1} + \sum_l \sum_r (\dots)\\
& = \frac{1-x}{1-2x} - \sum_{l=n_1}^{\infty} \ppa{\frac{x}{1-x}}^{l+1} - \sum_{r=n_2}^{\infty} \ppa{\frac{x}{1-x}}^{r+1} + \sum_l \sum_r (\dots)\\
& = \frac{1-x}{1-2x} - \ppa{\frac{x}{1-x}}^{n_1+1} \frac{1}{1-\frac{x}{1-x}} - \ppa{\frac{x}{1-x}}^{n_2+1} \frac{1}{1-\frac{x}{1-x}} + \sum_l \sum_r (\dots)\\
& = \frac{1-x}{1-2x} \ppa{1 - \ppa{\frac{x}{1-x}}^{n_1+1} - \ppa{\frac{x}{1-x}}^{n_2+1}} + \sum_{l=n_1}^{\infty} \sum_{r=n_2}^{\infty} \binom{l+r}{l} x^{l+r+1}
\end{align*}
We can now use the above equation in order to compute the following expression.
\begin{align*}
& T_{n_1, n_2}(x) - \frac{x}{1-x} \ppa{T_{n_1-1, n_2}(x) + T_{n_1, n_2-1}(x)} + \ppa{\frac{x}{1-x}}^2 T_{n_1-1, n_2-1}(x)\\
= \; & \frac{1-2x}{1-x} + \frac{1}{(1-x)^2} \sum_{l=n_1}^{\infty} \sum_{r=n_2}^{\infty} \bigg[(1-x)^2 \binom{l+r}{l} - (1-x) \binom{l+r-1}{l}\\
& \hspace{4cm} - (1-x) \binom{l+r-1}{l-1} + \binom{l+r-2}{l-1}\bigg] \cdot x^{l+r+1}\\
= \; & \frac{1-2x}{1-x} + \frac{1}{(1-x)^2} \sum_{l=n_1}^{\infty} \sum_{r=n_2}^{\infty} \bigg[x^2 \binom{l+r}{l} - x \binom{l+r-1}{l}\\
& \hspace{4cm}  - x \binom{l+r-1}{l-1} + \binom{l+r-2}{l-1}\bigg] \cdot x^{l+r+1}\\
= \; & \frac{1-2x}{1-x} + \frac{1}{(1-x)^2} \sum_{l \in \bZ} \sum_{r \in \bZ} \binom{l+r-2}{l-1} \bigg(\bra{l > n_1, r > n_2}  - \bra{l > n_1, r \geq n_2}\\
& \hspace{4cm} - \bra{l \geq n_1, r > n_2} + \bra{l \geq n_1, r \geq n_2}\bigg) \cdot x^{l+r+1}\\
= \; & \frac{1-2x}{1-x} + \frac{1}{(1-x)^2} \sum_{l \in \bZ} \sum_{r \in \bZ} \binom{l+r-2}{l-1} \bra{l=n_1, r=n_2} \cdot x^{l+r+1}\\
= \; & \frac{1-2x}{1-x} + \frac{1}{(1-x)^2} \binom{n_1+n_2-2}{n_1-1} x^{n_1+n_2+1}\\
\end{align*}
In turn, we can further use the above to compute a very similar expression for $L_{n_1,n_2}(x)$,
noting that the second term from before cancels nicely in the very last step.
\begin{align*}
& L_{n_1, n_2}(x) - \frac{x}{1-x} \ppa{L_{n_1-1, n_2}(x) + L_{n_1, n_2-1}(x)} + \ppa{\frac{x}{1-x}}^2 L_{n_1-1, n_2-1}(x)\\
= \; & T_{n_1, n_2}(x) - \frac{x}{1-x} \ppa{T_{n_1-1, n_2}(x) + T_{n_1, n_2-1}(x)} + \ppa{\frac{x}{1-x}}^2 T_{n_1-1, n_2-1}(x)\\
& \quad - \ppa{\frac{x}{1-x}}^{n_1+n_2+1} \binom{n_1+n_2-2}{n_1-1} (1-x)^{n_1+n_2-1}
= \frac{1-2x}{1-x}
\end{align*}
On the other hand, a direct computation shows that we also have the same equation for $R_{n_1,n_2}(x)$.
\[
R_{n_1, n_2}(x) - \frac{x}{1-x} \ppa{R_{n_1-1, n_2}(x) + R_{n_1, n_2-1}(x)} + \ppa{\frac{x}{1-x}}^2 R_{n_1-1, n_2-1}(x) = \frac{1-2x}{1-x}
\]
Therefore, the proposition now easily follows by combining the previous two equations with the induction hypothesis.
\end{proof}

Let now $C_1$ and $C_2$ be two arbitrary chains
with $n_1$ and $n_2$ chain edges, respectively.
By making use of \cref{lem:convexsum} and the special case of \cref{lem:tgfvee} that
we just proved in \cref{prop:verylongcomputation}, we can now compute
\begin{align*}
  \phi_{C_1 \vee C_2}(x)
 &= T_{C_1 \vee C_2}(x) + \Big(\frac{x}{1-x}\Big)^{n_1+n_2+1} T_{C_1 \vee C_2}(1-x)\\
&= \sum_{k_1=0}^{n_1-1} \sum_{k_2=0}^{n_2-1} t_{k_1}(C_1) \cdot t_{k_2}(C_2) \cdot x^{k_1+k_2} \Bigg(T_{\cave{n_1 - k_1} \, \vee\, \cave{n_2 - k_2}}(x)\\
&\hspace{2cm}+ \Big(\frac{x}{1-x}\Big)^{n_1+n_2+1-k_1-k_2} \cdot T_{\cave{n_1 - k_1} \, \vee\, \cave{n_2 - k_2}}(1-x)\Bigg)\\
&= \sum_{k_1=0}^{n_1-1} \sum_{k_2=0}^{n_2-1} t_{k_1}(C_1) \cdot t_{k_2}(C_2) \cdot x^{k_1+k_2} \cdot \phi_{\cave{n_1 - k_1} \, \vee\, \cave{n_2 - k_2}}(x)\\
&= \sum_{k_1=0}^{n_1-1} \sum_{k_2=0}^{n_2-1} t_{k_1}(C_1) \cdot t_{k_2}(C_2) \cdot x^{k_1+k_2} \cdot \frac{1-x}{1-2x} \cdot \phi_{\cave{n_1 - k_1}}(x) \cdot \phi_{\cave{n_2 - k_2}}(x)\\
&= \frac{1-x}{1-2x} \cdot \Big(\sum_{k_1=0}^{n_1-1} t_{k_1}(C_1) \cdot x^{k_1} \cdot \phi_{\cave{n_1 - k_1}}(x) \Big) \cdot \Big(\sum_{k_2=0}^{n_2-1} t_{k_2}(C_2) \cdot x^{k_2} \cdot \phi_{\cave{n_2 - k_2}}(x) \Big)\\
&= \frac{1-x}{1-2x} \cdot \phi_{C_1}(x) \cdot \phi_{C_2}(x).
\end{align*}

\subsection{Proof of \cref{theo:polytwinasym} (Poly Chains)}

Recall once more \cref{lem:convexsum}.
For the special case where $C_2 = \cave{n_2}$, we have that $t_0(C_2) = 1$ is the only non-zero triangulation number,
and hence
\[
T_{C_1 \vee \cave{n_2}}(x) = \sum_{k_1=0}^{n_1-1} t_{k_1}(C_1) \cdot x^{k_1} \cdot T_{\cave{n_1 - k_1} \, \vee\, \cave{n_2}}(x).
\]
After plugging this into the right-hand side of \cref{lem:convexsum},
we get the following variant of the statement, where the second sum is left unexpanded.

\begin{lemma} \label{lemm:onesidedsum}
Let $C_1$ be a chain with $n_1$ chain edges and let $C_2$ be any chain. Then,
\[
T_{C_1 \vee C_2}(x) = \sum_{k=0}^{n_1-1} t_{k}(C_1) \cdot x^{k} \cdot T_{\cave{n_1 - k} \, \vee\, C_2}(x).
\]
\end{lemma}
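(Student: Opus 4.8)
The plan is to obtain the identity by ``half-expanding'' the double sum furnished by \cref{lem:convexsum} and then re-summing it in a different order, so that the claimed formula is really just a repackaging of \cref{lem:convexsum} rather than a new fact. Recall that \cref{lem:convexsum} states
\[
T_{C_1 \vee C_2}(x) = \sum_{k_1=0}^{n_1-1} \sum_{k_2=0}^{n_2-1} t_{k_1}(C_1)\, t_{k_2}(C_2)\, x^{k_1+k_2}\, T_{\cave{n_1-k_1}\,\vee\,\cave{n_2-k_2}}(x),
\]
so the whole task is to recognize that, for each fixed value of $k_1$, the inner sum over $k_2$ reassembles into a single polynomial of the form $T_{\cave{n_1-k_1}\,\vee\,C_2}(x)$.

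First I would isolate the relevant special case of \cref{lem:convexsum}: taking the first summand to be a concave chain $\cave{m}$ with $m$ chain edges, all triangulation numbers $t_{k_1}(\cave{m})$ vanish except $t_0(\cave{m}) = 1$ (indeed $T_{\cave{m}}(x) = 1$, since a concave chain admits no upper edges, so its only partial upper triangulation is the trivial one). Hence \cref{lem:convexsum} collapses to the one-sided identity
\[
T_{\cave{m}\,\vee\,C_2}(x) = \sum_{k_2=0}^{n_2-1} t_{k_2}(C_2)\, x^{k_2}\, T_{\cave{m}\,\vee\,\cave{n_2-k_2}}(x),
\]
valid for every $m \geq 1$ and every chain $C_2$ with $n_2$ chain edges.

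Second I would substitute $m = n_1 - k$ into this one-sided identity and plug the result into the claimed right-hand side $\sum_{k=0}^{n_1-1} t_k(C_1)\, x^k\, T_{\cave{n_1-k}\,\vee\,C_2}(x)$. This turns that expression into the double sum
\[
\sum_{k=0}^{n_1-1}\sum_{k_2=0}^{n_2-1} t_k(C_1)\, t_{k_2}(C_2)\, x^{k+k_2}\, T_{\cave{n_1-k}\,\vee\,\cave{n_2-k_2}}(x),
\]
which, after renaming $k$ as $k_1$, is precisely the expression that \cref{lem:convexsum} gives for $T_{C_1 \vee C_2}(x)$. Comparing the two finishes the proof.

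I do not expect any genuine obstacle here; the only things to watch are bookkeeping details, namely that $\cave{n_1-k}$ has exactly $n_1-k$ chain edges so that the special case above applies with the correct parameter, and that the two nested summation ranges line up ($0 \le k_1 < n_1$ and $0 \le k_2 < n_2$ in both expressions). In particular, because the concave chain can be kept on the left throughout, no appeal to the commutativity statement \cref{coro:triangcommutative} is needed.
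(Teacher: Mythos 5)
Your proposal is correct and follows essentially the same route as the paper: both obtain the identity by specializing \cref{lem:convexsum} to a concave summand (using that $T_{\cave{m}}(x)=1$, so only $t_0=1$ survives) and then regrouping the double sum of \cref{lem:convexsum}. The only cosmetic difference is that you keep the concave chain as the left summand, which indeed avoids any appeal to \cref{coro:triangcommutative}, whereas the paper specializes $C_2=\cave{n_2}$ and words the regrouping the other way around.
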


Let now $C_0$ be any fixed chain with $m$ chain edges and let $N \geq 2$.
Using commutativity as in \cref{coro:triangcommutative},
we can inductively apply \cref{lemm:onesidedsum} to the poly-$C_0$ chain
and get
\[
T_{\Cpoly(C_0, N)} = \sum_{k_1=0}^{m-1} \dots \sum_{k_N=0}^{m-1} t_{k_1}(\flip{C_0}) \cdots t_{k_N}(\flip{C_0}) \cdot x^{k_1 + \dots + k_N} \cdot T_{\cave{m - k_1} \, \vee\, \dots \vee\, \cave{m - k_N}}(x).
\]
For any particular summand in the above expression,
let $a_i$ be the number of indices $j = 1,\dots,N$ that satisfy $k_j = m-i$.
Then, this particular summand can also be written as
\[
t_{0}(\flip{C_0})^{a_m} \cdot t_{1}(\flip{C_0})^{a_{m-1}} \cdots t_{m-1}(\flip{C_0})^{a_1}\cdot x^{0a_{m} + 1a_{m-1} + \dots + (m-1) a_1} \cdot T_{\Cgdc(a_1, \dots, a_m)}(x).
\]
Since the number of summands with these particular values $a_i$ is equal to
$\binom{N}{a_1, \dots, a_m}$, we hence get
\[
T_{\Cpoly(C_0, N)}(x) = \sum_{\substack{0 \leq a_1, \dots, a_m \leq N\\ a_1 + \dots + a_m = N}} \binom{N}{a_1, \dots, a_m} \cdot \Bigg( \prod_{k=1}^{m} t_{m-k}(\flip{C_0})^{a_k} \cdot x^{(m-k)a_k} \Bigg) \cdot T_{\Cgdc(a_1, \dots, a_m)}(x).
\]
Finally, note that the corresponding leading coefficients on both sides are equal to
\[ 
U(\Cpoly(C_0, N)) = \sum_{\substack{0 \leq a_1, \dots, a_m \leq N\\ a_1 + \dots + a_m = N}} \binom{N}{a_1, \dots, a_m} \cdot \Bigg( \prod_{k=1}^{m} t_{m-k}(\flip{C_0})^{a_k} \Bigg) \cdot U(\Cgdc(a_1, \dots, a_m)).
\]
We will now use this equation to prove both an upper and a lower bound for $U(\Cpoly(C_0, N))$.

\subparagraph{Upper bound.} By \cref{thm:multiconvexupper} and the multinomial theorem, we get
\begin{align*}
U(\Cpoly(C_0, N))
&\leq \sum_{\substack{0 \leq a_1, \dots, a_m \leq N\\ a_1 + \dots + a_m = N}} \binom{N}{a_1, \dots, a_m} \cdot \prod_{k=1}^{m} \Big(t_{m-k}(\flip{C_0}) \cdot 2^k (k+1)\Big)^{a_k}\\
&= \Bigg(\sum_{k=1}^{m} t_{m-k}(\flip{C_0}) \cdot 2^k (k+1) \Bigg)^N.
\end{align*}
The claimed upper bound now follows by recalling that the poly-$C_0$ chain has $n = Nm$ chain edges.

\subparagraph{Lower bound.} The entropy bound for multinomial coefficients states that
\[
\binom{N}{a_1, \dots, a_m} \geq \frac{1}{N^m} e^{H(\frac{a_1}{N}, \dots, \frac{a_m}{N})}, \qquad H(\alpha_1, \dots,\alpha_m) = -\sum_{k=1}^{m} \alpha_k \ln(\alpha_k).
\]
By \cref{theo:multiconvexlower} and the above entropy bound, we get for some constant $c_m$ that
\begin{align*}
U(\Cpoly(C_0, N))
&\geq \frac{1}{(N+1)^{c_m}} \sum_{\substack{0 \leq a_1, \dots, a_m \leq N\\ a_1 + \dots + a_m = N}} \binom{N}{a_1, \dots, a_m} \cdot \prod_{k=1}^{m} \Big(t_{m-k}(\flip{C_0}) \cdot 2^k (k+1)\Big)^{a_k}\\
&\geq \frac{1}{(N+1)^{c_m+m}}\sum_{\substack{0 \leq a_1, \dots, a_m \leq N\\ a_1 + \dots + a_m = N}} e^{H(\frac{a_1}{N}, \dots, \frac{a_m}{N})} \cdot \prod_{k=1}^{m} \Big(t_{m-k}(\flip{C_0}) \cdot 2^k (k+1)\Big)^{a_k}.
\end{align*}
We can lower bound the right hand side by picking any one summand.
We pick the largest one by using \cref{lemm:entropymaximum}
(we might have to round the $a_k$ slightly to get integer values,
but the effect of this vanishes as $N$ goes to infinity).
As desired, the resulting lower bound is therefore
\[
U(\Cpoly(C_0, N)) = \Omega\Bigg(\frac{1}{(N+1)^{c_m+m}} \Bigg( \sum_{k=1}^{m} t_{m-k}(\flip{C_0}) \cdot 2^k (k+1)\Bigg)^N\Bigg) \hspace{1cm} \text{(as $N \to \infty$)}.
\]

\subsection{Proof of \cref{theo:polytwinasym} (Twin Chains)}
Let $C_0$ again be any fixed chain with $m$ chain edges and let $N \geq 2$.
Observe then with \cref{lem:concavesum} that we have in fact $T_{\flip{\Cpoly(C_0, N)}}(x) = T_{C_0}(x)^N$ for the flipped poly-$C_0$ chains.
Therefore, expanding the right hand side with the multinomial theorem yields the coefficients
\[
\gamma_{k,N} := t_k(\flip{\Cpoly(C_0, N)}) = \sum_{\substack{0 \leq a_1, \dots, a_m \leq N\\a_1 + \dots + a_m = N\\(m-1) a_1 + \dots + a_{m-1} = k}} \binom{N}{a_1, \dots, a_m} \prod_{\ell=1}^{m} t_{m-\ell}(C_0)^{a_\ell}.
\]
On the other hand, applying \cref{lemm:onesidedsum} twice to the twin-$C_0$ chains yields
\[
T_{\Ctwin(C_0, N)} = \sum_{k_1=0}^{m N-1} \sum_{k_2=0}^{m N-1} \gamma_{k_1,N} \cdot \gamma_{k_2,N} \cdot x^{k_1 + k_2} \cdot T_{\cave{m N-k_1} \vee E \vee \cave{m N - k_2}}(x),
\]
with leading coefficients
\begin{align*}
U({\Ctwin(C_0, N)})
& = \sum_{k_1=0}^{m N-1} \sum_{k_2=0}^{m N-1} \gamma_{k_1,N} \cdot \gamma_{k_2,N} \cdot U({\cave{m N-k_1} \vee E \vee \cave{m N - k_2}}) \\
& = \sum_{k_1=0}^{m N-1} \sum_{k_2=0}^{m N-1} \gamma_{k_1,N} \cdot \gamma_{k_2,N} \cdot \binom{2 m N - k_1 - k_2}{m N - k_1},
\end{align*}
where the last step is the same as in the analysis of the classic double chain, see~\cite{garcia2000}.
We will again use the above equation to prove both an upper and a lower bound for $U({\Ctwin(C_0, N)})$.

\subparagraph{Upper bound.} After bounding the binomial coefficient by $2^{2 m N - k_1 - k_2}$,
we can factorize to
\[
U({\Ctwin(C_0, N)}) \leq \Bigg(\sum_{k=0}^{m N-1} \gamma_{k,N} \cdot 2^{m N - k} \Bigg)^2.
\]
Plugging in the definition of $\gamma_{k,N}$ and noting that $m N - k = 1 a_1 + 2a_2 + \dots + ma_m$,
we can again simplify with the multinomial theorem,
similar to what we did for poly-$C_0$ chains, and get
\[
U({\Ctwin(C_0, N)}) \leq \Bigg(\sum_{k=1}^{m} t_{m-k}(C_0) \cdot 2^{k}\Bigg)^{2 N}.
\]

\subparagraph{Lower bound.} By only keeping terms with $k_1 = k_2$, we get
\begin{align*}
U({\Ctwin(C_0, N)})
&\geq \sum_{k=0}^{m N - 1} \gamma_{k,N}^2 \cdot \binom{2 m N - 2 k}{m N - k} \\
&\geq \sum_{k=0}^{m N-1} \gamma_{k,N}^2 \cdot \frac{2^{2 m N - 2 k}}{2 m N}
= \frac{4^{m N}}{2 m N} \sum_{k=0}^{m N-1} \Big(2^{-k} \cdot \gamma_{k,N}\Big)^2.
\end{align*}
By dropping all the summands except for one specific $k$ to be determined later, and by plugging in the definition of $\gamma_{k,N}$ and again noting that $m N - k = 1 a_1 + 2a_2 + \dots + ma_m$, we get
\begin{align*}
U({\Ctwin(C_0, N)}) &\geq \frac{4^{m N}}{2 m N} \Big(2^{-k} \cdot \gamma_{k,N}\Big)^2 = \frac{1}{2 m N} \Big(2^{mN-k} \cdot \gamma_{k,N}\Big)^2\\
&= \frac{1}{2 m N} \Bigg(\sum_{\substack{0 \leq a_1, \dots, a_m \leq N\\a_1 + \dots + a_m = N\\(m-1) a_1 + \dots + a_{m-1} = k}} \binom{N}{a_1, \dots, a_m} \prod_{\ell=1}^{m} \Big(t_{m-\ell}(C_0) \cdot 2^{\ell}\Big)^{a_\ell}\Bigg)^2.
\end{align*}
After applying the entropy bound,
we pick the largest summand for the best possible $k$ via Lemma~\ref{lemm:entropymaximum}, and get
\begin{align*}
U({\Ctwin(C_0, N)}) = \Omega \Bigg(\frac{1}{2 m N^{m+1}} \Bigg(\sum_{\ell=1}^{m} t_{m-\ell}(C_0) \cdot 2^{\ell}\Bigg)^{2 N}\Bigg) \hspace{1cm} \text{(as $N \to \infty$)}.
\end{align*}

\subsection*{Acknowledgments}
The material presented in this paper originates from the first author's Master's thesis \cite{rutschmann2021} under the second author's direct supervision. Both authors wish to express their gratitude to Emo Welzl, the official advisor in this endeavor.

\bibliography{references}

\end{document}